\tikzset{initial text={},
    every state/.style={circle,minimum size=.4cm,draw=blue!50,very thick,fill=blue!20},
    secret/.style={minimum size=.4cm,draw=red!50,very thick,fill=red!20,rectangle},
    node distance=1.5cm,on grid,auto,
    bend angle=65}
\def\ie{{i.e.},~}
\def\eg{{e.g.},~}
\def\st{{s.t.}~}
\def\aonemove{\textit{$A_1$Move}}
\def\reach{\textit{Reach}}
\def\Trace{\textit{Tr}}
\def\trace{\textit{tr}}
\def\faulty{\textit{Faulty}}
\def\nonfaulty{\textit{NonFaulty}}
\def\runs{\textit{Runs}} \def\lang{{\cal L}}  
\newcommand{\proj}[1]{\boldsymbol{\pi}_{#1}} 
\newcommand{\vect}[1]{\overline{#1}}
\newtheorem{prob}{Problem}  
\newtheorem{fact}{Fact}  
\newcommand{\setN}{\mathbb N}
\newcommand{\setR}{\mathbb R}
\newcommand{\setB}{\mathbb B}
\newcommand{\setZ}{\mathbb Z}
\newcommand{\setQ}{\mathbb Q}
\def\calE{{\cal E}}
\def\calB{{\cal B}}
\def\calD{{\cal D}}
\def\calC{{\cal C}}
\def\calU{{\cal U}}
\def\calL{{\cal L}}
\def\cc{\calC}
\def\last{\textit{last}}
\def\endef{\ifmmode\squareforged\else{\unskip\nobreak\hfil
\penalty50\hskip1em\null\nobreak\hfil$\blacksquare$
\parfillskip=0pt\finalhyphendemerits=0\endgraf}\fi}
\def\ssi{iff\xspace}
\def\aonemove{\textit{$A_1$Move}}
\def\tauac{\tau}
\newcommand{\dur}{{\textit{Duration}}} 
\def\inv{\textit{Inv}}
\def\tw{\textit{TW\/}}
\def\dive{\textit{Div}} \def\bad{\textit{Bad}}
\def\untimed{\textit{Unt}} 
\def\dta{DTA\xspace}
\def\dtamu{DTA$_\mu$\xspace}
\def\true{\mbox{\textsc{true}}}
\def\false{\mbox{\textsc{false}}}
\def\rg{\textit{RG}}
\newcommand{\sem}[1]{[\![#1]\!]}
\def\emptyset{\varnothing}
\title{The Complexity of Codiagnosability \\ for Discrete Event and Timed Systems}
\author{ Franck Cassez\thanks{Author supported by a Marie Curie
    International Outgoing Fellowship within the 7th
    European Community Framework Programme.}}
\institute{
  National ICT Australia \& CNRS \\ The University of New South Wales
  \\ Sydney, Australia }
\begin{document}

\maketitle
  
\thispagestyle{empty}

\begin{abstract} 
  In this paper we study the fault codiagnosis problem for discrete
  event systems given by finite automata (FA) and timed systems
  gi\-ven by timed automata (TA).  We provide a uniform
  characterization of codiagnosability for FA and TA which extends the
  necessary and sufficient condition that characterizes
  diagnosability.  We also settle the complexity of the
  codiagnosability problems both for FA and TA and show that
  codiagnosability is PSPACE-complete in both cases.  For FA this
  improves on the previously known bound (EXPTIME) and for TA it is a
  new result. Finally we address the codiagnosis problem for TA under
  bounded resources and show it is 2EXPTIME-complete.
%
\end{abstract}

\section{Introduction}

Discrete-event systems~\cite{RW87,RW89} (DES) can be modelled by
finite automata (FA) over an alphabet of \emph{observable} events
$\Sigma$. 

The \emph{fault diagnosis problem} is a typical example of a problem
under partial observation.
The aim of fault diagnosis is to detect \emph{faulty} sequences of the
DES.  The assumptions are that the behavior of the DES is known and a
model of it is available as a finite automaton over an alphabet
$\Sigma \cup \{\tauac,f\}$, where $\Sigma$ is the set of observable
events, $\tauac$ represents the unobservable events, and $f$ is a
special unobservable event that corresponds to the faults: this is the
original framework introduced by
M.~Sampath~\emph{et al.}~\cite{Raja95} and the reader is referred to
this paper for a clear and exhaustive introduction to the subject.  A
\emph{faulty} sequence is a sequence of the DES containing an
occurrence of event $f$.  An \emph{observer} which has to detect
faults, knows the specification/model of the DES, and it is able to
observe sequences of \emph{observable} events.  Based on this
knowledge, it has to announce whether an observation it makes (in
$\Sigma^*$) was produced by a faulty sequence (in $(\Sigma \cup
\{\tauac,f\})^*$) of the DES or not.
A \emph{diagnoser} (for a DES) is an observer which observes the
sequences of observable events and is able to detect whether a fault
event has occurred, although it is not observable.  If a diagnoser can
detect a fault at most $\Delta$ steps\footnote{Steps are measured by
  the number of transitions in the DES.}  after it has occurred, the
DES is said to be $\Delta$-diagnosable. It is diagnosable if it is
$\Delta$-diagnosable for some $\Delta \in \setN$.  Checking whether a
DES is $\Delta$-diagnosable for a given $\Delta$ is called the
\emph{bounded diagnosability problem}; checking whether a DES is
diagnosable is the \emph{diagnosability problem}.

Checking {\em diagnosability} for a given DES and a fixed set of
observable events can be done in polynomial time using the algorithms
of~\cite{Jiang-01,yoo-lafortune-tac-02}.  If a diagnoser exists there
is a finite state one.  Nevertheless the size of the diagnoser can be
exponential as it involves a determinization step.
The extension of this DES framework to timed automata (TA) has been
proposed by S.~Tripakis in~\cite{tripakis-02}, and he proved that the
problem of checking diagnosability of a timed automaton is
PSPACE-complete.  In the timed case, the diagnoser may be a Turing
machine.  The problem of checking whether a timed automaton is
diagnosable by a diagnoser which is a \emph{deterministic} timed
automaton was studied by P.~Bouyer~\emph{et al.}~\cite{Bouyer-05}.

\medskip

\emph{Codiagnosability} generalizes diagnosability by considering
\emph{decentralized architectures}.  Such decentralized architectures
have been introduced in~\cite{Debouk-deds-00} and later refined
in~\cite{Wang-deds-07,qiu-ieee-man-06}.  In these architectures, local
diagnosers (with their own partial view of the system) can send to a
coordinator some information, summarizing their observations.  The
coordinator then computes a result from the partial results of the
local diagnosers. The goal is to obtain a coordinator that can detect
the faults in the system.  When local diagnosers do not communicate
with each other nor with a coordinator (protocol~3
in~\cite{Debouk-deds-00}), the decentralized diagnosis problem is
called \emph{codiagnosis}~\cite{qiu-ieee-man-06,Wang-deds-07}.  In
this case, codiagnosis means that each fault can be detected by at
least one local diagnoser.  In the paper~\cite{qiu-ieee-man-06},
codiagnosability is considered and an algorithm to check
codiagnosability is presented for discrete event systems (FA).  An
upper bound for the complexity of the algorithm is EXPTIME.
In~\cite{Wang-deds-07}, the authors consider a \emph{hierarchical
  framework} for decentralized diagnosis.  In~\cite{basilio-acc-09} a
notion of \emph{robust} codiagnosability is introduced, which can be
thought of as a \emph{fault tolerant} (local diagnosers can fail)
version of codiagnosability.

None of the previous papers has addressed the codiagnosability
problems for timed automata. Moreover, the exact complexity of the
codiagnosis problems is left unsettled for discrete event systems
(FA).

\paragraph{Our Contribution.} 
In this paper, we study the codiagnosability problems for FA and TA.
We settle the complexity of the problems for FA (PSPACE-complete),
improving on the best known lower bound (EXPTIME). 
We also address the codiagnosability problems for TA and provide
new results: algorithms to check codiagnosability and also
codiagnosability under bounded resources.
Our contribution is both of theoretical and practical interests.  The
algorithms we provide are optimal, and can also be implemented using
standard model-checking tools like SPIN~\cite{holzmann05} for FA, or
UPPAAL~\cite{uppaal-tutorial04} for TA. This means that very
expressive languages can be used to specify the systems to codiagnose
and very efficient implementations and data structures are readily
available.

\paragraph{Organisation of the Paper.}
Section~\ref{sec-prelim} recalls the definitions of finite automata
and timed automata.  We also give some results on the Intersection Emptiness
Problems (section~\ref{sec-iep}) that will be used in the next sections.
Section~\ref{sec-nsc} introduces the fault codiagnosis problems we are
interested in, and a necessary and sufficient condition that
characterizes codiagnosability for FA and TA.
Section~\ref{sec-algo} contains the first main results: optimal
algorithms for the codiagnosability problems for FA and TA.
Section~\ref{sec-synthesis} describes how to synthesize the
codiagnosers and the limitations of this technique for TA.
Section~\ref{sec-synth-bounded} is devoted to the codiagnosability
problem under bounded resources for TA and contains the
second main result of the paper.


\section{Preliminaries}\label{sec-prelim}
$\Sigma$ denotes a finite alphabet and $\Sigma_\tauac=\Sigma \cup
\{\tauac\}$ where $\tauac \not\in \Sigma$ is the \emph{unobservable}
action.  $\setB=\{\true,\false\}$ is the set of boolean values,
$\setN$ the set of natural numbers, $\setZ$ the set of integers and
$\setQ$ the set of rational numbers.  $\setR$ is the set of real
numbers and $\setR_{\geq 0}$ (resp. $\setR_{> 0}$) is the set of
non-negative (resp. positive) real numbers.  We denote tuples (or
vectors) by $\vect{d}=(d_1,\cdots,d_k)$ and write $\vect{d}[i]$ for
$d_i$.

\subsection{Clock Constraints}
Let $X$ be a finite set of variables called \emph{clocks}.  A
\emph{clock valuation} is a mapping $v : X \rightarrow \setR_{\geq
  0}$. We let $\setR_{\geq 0}^X$ be the set of clock valuations over
$X$. We let $\vect{0}_X$ be the \emph{zero} valuation where all the
clocks in $X$ are set to $0$ (we use $\vect{0}$ when $X$ is clear from
the context).  Given $\delta \in \setR$, $v + \delta$ is the valuation
defined by $(v + \delta)(x)=v(x) + \delta$. We let $\cc(X)$ be the set
of \emph{convex constraints} on $X$, \ie the set of conjunctions of
constraints of the form $x \bowtie c$ with $c \in\setZ$ and $\bowtie
\in \{\leq,<,=,>,\geq\}$. Given a constraint $g \in \cc(X)$ and a
valuation $v$, we write $v \models g$ if $g$ is satisfied by the
valuation $v$.  We also write $\sem{g}$ for the set $\{ v \ | \ v
\models g\}$. Given a set $R \subseteq X$ and a valuation $v$ of the
clocks in $X$, $v[R]$ is the valuation defined by $v[R](x)=v(x)$ if $x
\not\in R$ and $v[R](x)=0$ otherwise.

\subsection{Timed Words}
The set of finite (resp. infinite) words over $\Sigma$ is $\Sigma^*$
(resp. $\Sigma^\omega$) and we let $\Sigma^\infty=\Sigma^* \cup \Sigma
^\omega$. A \emph{language} $L$ is any subset of $\Sigma^\infty$. A
finite (resp. infinite) \emph{timed word} over $\Sigma$ is a word in
$(\setR_{\geq 0}.\Sigma)^*.\setR_{\geq 0}$ (resp. $(\setR_{\geq
  0}.\Sigma)^\omega$).  $\dur(w)$ is the duration of a timed word $w$
which is defined to be the sum of the durations (in $\setR_{\geq 0}$)
which appear in $w$; if this sum is infinite, the duration is
$\infty$.  Note that the duration of an infinite word can be finite,
and such words which contain an infinite number of letters, are called
\emph{Zeno} words.  We let $\untimed(w)$ be the \emph{untimed} version
of $w$ obtained by erasing all the durations in $w$.  An example of
untiming is $\untimed(0.4\ a\ 1.0\ b\ 2.7 \ c)=abc$.
%
%
In this paper we write timed words as $0.4\ a\ 1.0\ b\ 2.7 \ c \cdots$
where the real values are the durations elapsed between two letters:
thus $c$ occurs at global time $4.1$.

$\tw^*(\Sigma)$ is the set of finite timed words over $\Sigma$,
$\tw^\omega(\Sigma)$, the set of infinite timed words and
$\tw(\Sigma)=\tw^*(\Sigma) \cup \tw^\omega(\Sigma)$. A
\emph{timed language} is any subset of $\tw(\Sigma)$.

Let $\proj{\Sigma'}$ be the projection of timed words of
$\tw(\Sigma)$ over timed words of $\tw(\Sigma')$.  When
projecting a timed word $w$ on a sub-alphabet $\Sigma' \subseteq
\Sigma$, the durations elapsed bet\-ween two events are set
accordingly: for instance for the timed word $0.4 \ a\ 1.0\ b\ 2.7 \
c$, we have $\proj{\{a,c\}}(0.4 \ a\ 1.0\ b\ 2.7 \ c )=0.4 \ a \ 3.7 \
c$ (note that projection erases some letters but keep the time elapsed
between two letters).  Given a timed language $L$, we let
$\untimed(L)=\{ \untimed(w) \ | \ w \in L \}$.  Given $\Sigma'
\subseteq \Sigma$, $\proj{\Sigma'}(L)=\{ \proj{\Sigma'}(w) \ | \ w \in
L\}$.

\subsection{Timed Automata}
Timed automata are finite automata extended with real-valued clocks to
specify timing constraints between occurrences of events.  For a
detailed presentation of the fundamental results for timed automata,
the reader is referred to the seminal paper of R.~Alur and
D.~Dill~\cite{AlurDill94}.
\begin{definition}[Timed Automaton]\label{def-ta} 
  A \emph{Timed Automaton} $A$ is a tuple $(L,$ $l_0,$
  $X,\Sigma_\tauac, E, \inv, F, R)$ where:
\begin{itemize} 
\item 
$L$ is a finite set of  \emph{locations}; 
\item 
$l_0$ is the \emph{initial location};
\item 
$X$ is a finite set of \emph{clocks};
\item 
$\Sigma$ is a finite set of \emph{actions}; 
\item $E \subseteq L \times\calC(X) \times \Sigma_\tauac \times 2^X
  \times L$ is a finite set of \emph{transitions}; in a transition
  $(\ell,g,a,r,\ell')$, $g$ is the \emph{guard}, $a$ the
  \emph{action}, and $r$ the \emph{reset} set; as usual we often write
  a transition $\ell \xrightarrow{\ g,a,r\ } \ell'$;
\item 
$\inv \in \calC(X)^L$ associates with each location an
  \emph{invariant}; as usual we require the invariants to be
  conjunctions of constraints of the form $x \preceq c$ with $\preceq \in
  \{<,\leq\}$;
\item $F \subseteq L$ (resp. $R \subseteq L$) is the \emph{final}
  (resp.  \emph{repeated}) set of locations. \endef
\end{itemize}
\end{definition}
The size of a TA $A$ is denoted $|A|$ and is the size of the clock
constraints \ie the size of the transition relation $E$.
A \emph{state} of $A$ is a pair $(\ell,v) \in L \times \setR_{\geq
  0}^X$.
%
%
A \emph{run} $\varrho$ of $A$ from $(\ell_0,v_0)$ is a (finite or
infinite) sequence of alternating \emph{delay} and \emph{discrete}
moves:
\begin{eqnarray*}
  \varrho & = & (\ell_0,v_0) \xrightarrow{\delta_0} (\ell_0,v_0 + \delta_0)
  \xrightarrow{a_0} (\ell_1,v_1) \; \cdots  \;   \xrightarrow{a_{n-1}} (\ell_n,v_n)
  \xrightarrow{\delta_n} (\ell_n,v_n+ \delta_n) \cdots 
\end{eqnarray*}
\st for every $i \geq 0$:
\begin{itemize}
\item $v_i + \delta \models \inv(\ell_i)$ for $0 \leq \delta \leq \delta_i$;
\item there is some transition $(\ell_i,g_i,a_i,r_i,\ell_{i+1}) \in E$
  \st: ($i$) $v_i + \delta_i \models g_i$, ($ii$)
  $v_{i+1}=(v_i+\delta_i)[r_i]$.
\end{itemize}
The set of finite (resp. infinite) runs in $A$ from a state $s$ is
denoted $\runs^*(s,A)$ (resp. $\runs^\omega(s,A)$).  We let
$\runs^*(A)=\runs^*(s_0,A)$, $\runs^\omega(A)=\runs^\omega(s_0,A)$
with $s_0=(l_0,\vect{0})$,  and 
$\runs(A)=\runs^*(A) \cup \runs^\omega(A)$.  If $\varrho$ is finite
and ends in $s_n$, we let $\last(\varrho)=s_n$.  Because of the
denseness of the time domain, the unfolding of $A$ as a graph is
infinite (uncountable number of states and delay edges).
The \emph{trace}, $\trace(\varrho)$, of a run $\varrho$ is the timed
word $\proj{\Sigma}(\delta_0 a_0 \delta_1 a_1 \cdots a_n \delta_n
\cdots)$.  The duration of the run $\varrho$ is
$\dur(\varrho)=\dur(\trace(\varrho))$.  For $V \subseteq \runs(A)$, we
let $\Trace(V)=\{\trace(\varrho) \ | \ \textit{ $\varrho \in V$}\}$,
which is the set of traces of the runs in $V$.

A finite (resp. infinite) timed word $w$ is \emph{accepted} by $A$ if
it is the trace of a run of $A$ that ends in an $F$-location (resp. a
run that reaches infinitely often an $R$-location).  $\lang^*(A)$
(resp. $\lang^\omega(A)$) is the set of traces of finite
(resp. infinite) timed words accepted by $A$, and $\lang(A)=\lang^*(A)
\cup \lang^\omega(A)$ is the set of timed words accepted by $A$.


In the sequel we often omit the sets $R$ and $F$ in TA and this
implicitly means $F=L$ and $R=\emptyset$.

\medskip

A timed automaton $A$ is \emph{deterministic} if there is no $\tauac$
labelled transition in $A$, and if, whenever $(\ell,g,a,r,\ell')$ and
$(\ell,g',a,r',\ell'')$ are transitions of $A$, $g \wedge g' \equiv
\false$.  $A$ is \emph{complete} if from each state $(\ell,v)$, and
for each action $a$, there is a transition $(\ell,g,a,r,\ell')$ such
that $v \models g$.  We note \dta the class of deterministic timed
automata.

\medskip

A finite automaton is a particular TA with $X=\emptyset$.
Consequently guards and invariants are vacuously true and time
elapsing transitions do not exist.  We write $A=(Q,$
$q_0,\Sigma_\tauac,E,F,R)$ for a finite automaton.  A run is thus a
sequence of the form:
\begin{eqnarray*}
  \varrho & = & \ell_0 
  \xrightarrow{a_0} \ell_1 \cdots   \cdots \xrightarrow{a_{n-1}} \ell_n
  \cdots 
\end{eqnarray*}
where for each $i \geq 0$, $(\ell_i,a_i,\ell_{i+1}) \in E$.
Definitions of traces and languages are the same as for TA.  For FA,
the duration of a run $\varrho$ is the number of steps (including
$\tauac$-steps) of $\varrho$: if $\varrho$ is finite and ends in
$\ell_n$, $\dur(\varrho)=n$ and otherwise $\dur(\varrho)=\infty$.

\subsection{Region Graph of a Timed Automaton}
A \emph{region} of $\setR_{\geq 0}^X$ is a conjunction of
\emph{atomic} constraints of the form $x \bowtie c$ or $x -y \bowtie
c$ with $c \in\setZ$, $\bowtie \in \{\leq,<,=,>,\geq\}$ and $x,y \in
X$.  The \emph{region graph} $\rg(A)$ of a TA $A$ is a finite quotient
of the infinite graph of $A$ which is time-abstract bisimilar to
$A$~\cite{AlurDill94}.  It is a finite automaton on the alphabet $E'=
E \cup \{\tauac\}$. The states of $\rg(A)$ are pairs $(\ell,r)$ where
$\ell \in L$ is a location of $A$ and $r$ is a \emph{region} of
$\setR_{\geq 0}^X$. More generally, the edges of the graph are tuples
$(s,t,s')$ where $s,s'$ are states of $\rg(A)$ and $t \in E'$.
Genuine unobservable moves of $A$ labelled $\tauac$ are labelled by
tuples of the form $(s,(g,\tauac,r),s')$ in $\rg(A)$.
An edge $(g,\lambda,R)$ in the region graph corresponds to a discrete
transition of $A$ with guard $g$, action $\lambda$ and reset set $R$.
A $\tauac$ move in $\rg(A)$ stands for a delay move to the
time-successor region.  The initial state of $\rg(A)$ is
$(l_0,\vect{0})$.  A final (resp. repeated) state of $\rg(A)$ is a
state $(\ell,r)$ with $\ell \in F$ (resp. $\ell \in R$).  A
fundamental property of the region graph~\cite{AlurDill94} is:
\begin{theorem}[R.~Alur and D.~Dill,
  \cite{AlurDill94}] \label{thm-alur}
  $\lang(\rg(A))=\untimed(\lang(A))$.
\end{theorem}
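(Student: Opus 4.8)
The plan is to establish the two inclusions separately, both resting on the fact that the region equivalence underlying $\rg(A)$ is a time-abstract bisimulation. First I would recall the three fundamental properties of the region equivalence $\sim$ on $\setR_{\geq 0}^X$ (these are exactly what the region construction is designed to guarantee, for constants bounded by the maximal constant occurring in $A$): (i) \emph{guard-compatibility}: if $v \sim v'$ then $v \models g \Leftrightarrow v' \models g$ for every $g \in \cc(X)$ and every invariant, so a region $r$ either satisfies a guard entirely or not at all; (ii) \emph{reset-compatibility}: if $v \sim v'$ then $v[R] \sim v'[R]$ for every $R \subseteq X$; and (iii) \emph{delay-compatibility}: if $v \sim v'$, then for every $\delta \geq 0$ there is a $\delta' \geq 0$ with $v + \delta \sim v' + \delta'$, and a delay from $v$ visits a well-defined finite chain of regions, each a time-successor of the previous one. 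Lifting $\sim$ to states, I would argue that the relation $\mathcal{R}$ defined by $(\ell,v)\,\mathcal{R}\,(\ell,r)$ iff $v \in r$ is a bisimulation between the (uncountable) transition system of $A$ and the finite graph $\rg(A)$: discrete transitions of $A$ correspond, via (i)--(ii), to edges $(g,\lambda,R)$ of $\rg(A)$, while a delay move $(\ell,v)\xrightarrow{\delta}(\ell,v+\delta)$ corresponds, via (iii), to the finite chain of time-successor $\tauac$-edges joining the region of $v$ to that of $v+\delta$.

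For the inclusion $\untimed(\lang(A)) \subseteq \lang(\rg(A))$, I would take an accepting run $\varrho$ of $A$ and map it step by step to a path in $\rg(A)$: each discrete move on action $a$ is replaced by the corresponding edge $(g,a,R)$, and each delay move is replaced by the chain of time-successor $\tauac$-edges given by (iii). Guard-compatibility ensures the guards are enabled along the region path, and reset-compatibility ensures the target regions are correct, so the result is a genuine path of $\rg(A)$ starting in $(l_0,\vect{0})$. Since both genuine $\tauac$-actions and time-successor $\tauac$-edges are erased by the projection onto $\Sigma$, the untimed trace of this path equals $\untimed(\trace(\varrho))$; and the path is accepting because $\varrho$ ends in an $F$-location (finite case) or visits an $R$-location infinitely often (infinite case), which by definition of final and repeated regions carries over.

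For the reverse inclusion $\lang(\rg(A)) \subseteq \untimed(\lang(A))$, I would start from an accepting path $\pi$ of $\rg(A)$ and inductively build a run $\varrho$ of $A$ from $(l_0,\vect{0})$ realizing it: for a time-successor $\tauac$-edge $(\ell,r)\to(\ell,r')$, pick any delay $\delta$ with $v+\delta \in r'$, which exists by the region ordering; for a discrete edge $(\ell,r)\xrightarrow{g,a,R}(\ell',r')$, since $v \in r$ and $r$ satisfies $g$ by (i), we have $v \models g$, so the transition is firable and lands in $v[R] \in r'$ by (ii). This produces a run whose untimed trace matches that of $\pi$, and the acceptance condition transfers back in the same way. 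Concatenating the two inclusions yields the stated equality $\lang(\rg(A))=\untimed(\lang(A))$.

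I expect the main obstacle to lie in two places. The first is property (iii): proving that region equivalence is preserved under delays requires the careful fractional-part analysis of the Alur--Dill region definition, and it is what makes $\sim$ a genuine time-abstract bisimulation rather than merely a guard-preserving partition. The second, more delicate in the infinite-word case, is the realization of an infinite accepting path of $\rg(A)$ as an actual infinite run of $A$: one must ensure that the sequence of chosen delays yields a well-defined run and that visiting a repeated region infinitely often is matched by visiting a repeated location infinitely often. The key point here is that from any region one can always select a delay reaching the prescribed time-successor region, so no infinite path of $\rg(A)$ is blocked.
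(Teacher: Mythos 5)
Your proof is correct and is essentially the canonical argument: the paper does not prove this theorem itself but quotes it from the cited reference of R.~Alur and D.~Dill, and the proof there is exactly the one you give --- region equivalence is guard-, reset- and delay-compatible, hence a time-abstract bisimulation, and the two inclusions follow by translating runs of $A$ into region paths and realizing accepting region paths as runs of $A$. The only point worth making explicit is that your infinite-word case goes through precisely because the paper's definition of $\lang^\omega(A)$ (reaching an $R$-location infinitely often) imposes no time-divergence requirement, so Zeno realizations of infinite region paths are legitimate accepting runs; had acceptance required divergence, the equality would fail and an extra construction (such as the automaton $\dive(x)$ used later in the paper) would be needed.
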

In other words:
\begin{enumerate}
\item if $w$ is accepted by $\rg(A)$, then there is a timed word $v$
  with $\untimed(v)=w$ \st $v$ is accepted by $A$.
\item if $v$ is accepted by $A$, then $\untimed(w)$ is accepted
  $\rg(A)$.
\end{enumerate}
The (maximum) size of the region graph is exponential in the number of
clocks and in the maximum constant of the automaton $A$
(see~\cite{AlurDill94}): $|\rg(A)|=|L|\cdot |X|! \cdot 2^{|X|} \cdot
K^{|X|}$ where $K$ is the largest constant used in $A$.

\subsection{Product of Timed Automata}\label{sec-product}
Given a $n$ locations $\ell_1,\cdots,\ell_n$, we write $\vect{\ell}$
for the tuple $(\ell_1,\cdots,\ell_n)$ and let
$\vect{\ell}[i]=\ell_i$.  Given a letter $a \in \Sigma^1 \cup \cdots
\cup \Sigma^ n$, we let $I(a)=\{k \ | \ a \in \Sigma^k\}$.

\begin{definition}[Product of TA] \label{def-prod-sync} Let
  $A_i=(L_i,l_0^i,X_i,$ $\Sigma^i_{\tauac},$ $E_i,$ $ \inv_i)$, $i
  \in\{1,\cdots,n\}$, be $n$ TA \st $X_i \cap X_j = \emptyset$ for
  $i\neq j$.  The \emph{product} of the $A_i$ is the TA $A=A_1 \times
  \cdots \times A_n=(L,\vect{l_0},X,\Sigma_{\tauac},E,\inv)$ given by:
  \begin{itemize}
  \item 
$L=L_1 \times \cdots \times L_n$;
  \item 
$\vect{l_0}=(l_0^1,\cdots,l_0^n)$;
  \item 
$\Sigma=\Sigma^1 \cup \cdots \cup \Sigma^n$;
  \item 
$X = X_1 \cup \cdots \cup X_n$; 
  \item 
$E \subseteq L \times \calC(X) \times \Sigma_\tauac \times 2^X \times
    L$ and
    $(\vect{\ell},g,a,r,\vect{\ell}') \in E$
    if:
    \begin{itemize}
    \item 
      either $a \in \Sigma \setminus \{\tauac \}$, and
      \begin{enumerate}
      \item for each $k \in I(a)$, 
        $(\vect{\ell}[k],g_k,a,r_k,\vect{\ell}'[k]) \in E_k$,
      \item  $g = \wedge_{k \in I(a)} g_k$ and
        $r=\cup_{k \in I(a)}r_k$;
      \item for $k \not\in I(a)$, $\vect{\ell}'[k]=\vect{\ell}[k]$;
      \end{enumerate}
    \item or $a=\tauac$ and $\exists j$ \st
      $(\vect{\ell}[j],g_j,\tauac,r_j,\vect{\ell}'[j]) \in E_j$,
      $g=g_j$, $r=r_j$ 
      and for $k \neq j$, $\vect{\ell}'[k]=\vect{\ell}[k]$.
    \end{itemize}
  \item $\inv(\vect{\ell})= \wedge_{k=1}^{n}\inv(\vect{\ell}[k])$.
     \endef
  \end{itemize}
\end{definition}
This definition of product also applies to finite automata (no clock
constraints).

\smallskip

If the automaton $A_i$ has the set of final locations $F_i$ then the
set of final locations for $A$ is $F_1 \times \cdots \times F_n$.  For
B\"uchi acceptance, we add a counter $c$ to $A$ which is incremented
every time the product automaton $A$ encounters an $R_i$-location in
$A_i$, following the standard construction for product of B\"uchi
automata. The automaton constructed with the counter $c$ is $A^+$.
The repeated set of states of $A^+$ is $L_1 \times \cdots \times
L_{n-1} \times L_n \times \{n\}$.  As the sets of clocks of the
$A_i$'s are disjoint\footnote{For finite automata, this is is
  vacuously true.}, the following holds:
\begin{fact}\label{fact-1}
  $\lang^*(A) = \cap_{i=1}^n \lang^*(A_i)$ and $\lang^\omega(A^+) =
  \cap_{i=1}^n \lang^\omega(A_i)$.
\end{fact}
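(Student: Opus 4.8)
The plan is to prove both equalities in Fact~\ref{fact-1} by showing a two-way correspondence between accepting runs of the product automaton and tuples of accepting runs of the components, all carrying the same trace. The key structural observations that make this work are: (i) the clock sets $X_i$ are pairwise disjoint, so a clock valuation $v$ over $X = \bigcup_i X_i$ decomposes uniquely as $v = (v_1,\dots,v_n)$ with $v_i$ a valuation over $X_i$, and a guard $g = \bigwedge_{k \in I(a)} g_k$ is satisfied by $v$ \ssi each $g_k$ is satisfied by the corresponding $v_k$; and (ii) by the product definition, resets and invariants likewise decompose componentwise. Hence a state $(\vect{\ell},v)$ of $A$ projects to the states $(\vect{\ell}[i],v_i)$ of the $A_i$, and I will argue this projection commutes with delay moves and discrete moves.

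First I would treat the finite case $\lang^*(A) = \bigcap_{i=1}^n \lang^*(A_i)$. For the inclusion $\subseteq$, take an accepting finite run $\varrho$ of $A$ on a timed word $w$. Projecting $\varrho$ onto each coordinate $i$, I would check that every delay and every discrete move of $\varrho$ induces a legal (possibly stuttering) move in $A_i$: a synchronising letter $a \in \Sigma$ with $i \in I(a)$ fires the $i$-th component transition by condition~1 of the product definition, a $\tauac$-move or a letter with $i \notin I(a)$ leaves the $i$-th location unchanged, and delay moves are respected because $\inv(\vect{\ell}) = \bigwedge_k \inv(\vect{\ell}[k])$ forces $v_i + \delta \models \inv(\vect{\ell}[i])$. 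Since the final set of $A$ is $F_1 \times \cdots \times F_n$, the projected run ends in an $F_i$-location, so it is accepting in $A_i$. Taking traces, $\proj{\Sigma^i}$ of $w$ is accepted by $A_i$; I must note that because $w \in \tw^*(\Sigma)$ with $\Sigma = \bigcup \Sigma^i$ and the $A_i$ share observable behaviour only through synchronisation, the trace of the projected run is exactly $w$ itself (the product keeps the common letters), giving $w \in \lang^*(A_i)$. For the reverse inclusion $\supseteq$, I would take $w$ accepted by every $A_i$ via runs $\varrho_i$, and interleave them into a single run of $A$; the disjointness of clocks guarantees there is no conflict in the combined valuation, and the synchronisation discipline reassembles the $\varrho_i$ into a valid product run ending in $F_1 \times \cdots \times F_n$.

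For the infinite case $\lang^\omega(A^+) = \bigcap_{i=1}^n \lang^\omega(A_i)$ the run-level correspondence is identical, but the acceptance condition requires the counter construction. Here I would invoke the standard generalised-to-ordinary B\"uchi reduction: the counter $c$ cycles through $\{1,\dots,n\}$, advancing from $j$ to $j+1$ precisely when an $R_j$-location is visited, and the repeated set $L_1 \times \cdots \times L_n \times \{n\}$ is hit infinitely often \ssi each component visits its own repeated set $R_i$ infinitely often. Combined with the run correspondence above, an infinite run of $A^+$ is accepting \ssi each projected run of $A_i$ reaches $R_i$ infinitely often, i.e.\ is accepting in $A_i$, which yields the equality on $\omega$-traces.

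The main obstacle is verifying that projection and interleaving genuinely preserve the \emph{trace} (not merely the untimed word): I must confirm that the durations are faithfully transported, so that the timed word accepted by $A$ coincides on each coordinate with the timed word accepted by $A_i$. This hinges on the fact that delay moves in the product correspond to \emph{simultaneous} delays of all components (a single $\delta$ advancing every $v_i$), so the global time structure of $\varrho$ restricts correctly to each $\varrho_i$, and conversely synchronised letters occur at identical global times in all participating components. Once this timing bookkeeping is pinned down, both equalities follow; everything else is a routine componentwise check enabled by clock-disjointness.
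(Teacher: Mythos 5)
The paper itself gives no proof of Fact~\ref{fact-1} --- it is stated as an immediate consequence of the disjointness of the clock sets --- and your overall strategy (componentwise decomposition of runs via clock-disjointness, interleaving for the converse, and the standard generalized-to-ordinary B\"uchi counter argument for the $\omega$-case) is exactly the standard argument the authors leave implicit. However, one step in your $\subseteq$ direction is false as written, and it is precisely the step you yourself flag as ``the main obstacle''.

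You allow letters $a$ with $i \notin I(a)$, projected to stuttering steps of $A_i$, and simultaneously claim that ``the trace of the projected run is exactly $w$ itself''. These two statements are incompatible: a stuttering step contributes no letter to the run of $A_i$ (indeed $A_i$ has no transition labelled $a$ when $a \notin \Sigma^i$), so the trace of the projected run is $\proj{\Sigma^i}(w)$, not $w$, and what you obtain is $\proj{\Sigma^i}(w) \in \lang^*(A_i)$ rather than $w \in \lang^*(A_i)$. In fact, in the generality in which you attempt the proof (arbitrary alphabets with $\Sigma = \Sigma^1 \cup \cdots \cup \Sigma^n$) the plain-intersection equality is simply false: take $n=2$, $\Sigma^1=\{a\}$, $\Sigma^2=\{b\}$, and let each $A_i$ be a single accepting location with a self-loop on its letter; then the timed word $0.5\ a\ 0$ is in $\lang^*(A_1 \times A_2)$, but $\lang^*(A_1) \cap \lang^*(A_2) \subseteq \tw^*(\{a\}) \cap \tw^*(\{b\})$ contains only letter-free words. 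The correct general statement is $\lang^*(A) = \cap_{i=1}^n \proj{\Sigma^i}^{-1}\bigl(\lang^*(A_i)\bigr)$; the plain intersection holds exactly when $\Sigma^i = \Sigma$ for all $i$, i.e.\ $I(a)=\{1,\cdots,n\}$ for every letter, so that stuttering on visible letters never occurs and trace preservation is immediate. This equal-alphabet situation is the only one in which the paper invokes the Fact: $A^f(\Delta)$ and the completed automata $A_i^\ast$ (and likewise $A^f$, $\dive(x)$ and the components in the B\"uchi products) are all built over the common alphabet $\Sigma$. So you should either add the hypothesis $\Sigma^i=\Sigma$ (which makes your stuttering case vacuous) or prove the inverse-projection version instead; with that repair, the rest of your argument, including the counter construction for the $\omega$-languages, is sound.
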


\subsection{Intersection Emptiness Problem}\label{sec-iep}
In this section we give some complexity results for the emptiness
problem on products of FA and TA.

\noindent First consider the following problem on deterministic finite
automata (DFA):
\begin{prob}[Intersection Emptiness for DFA] \label{inter-emptiness-dfa} \mbox{} \\
  \textsc{Inputs:} $n$ deterministic finite automata $A_i, 1 \leq i
  \leq n$, over the alphabet
  $\Sigma$.\\
  \textsc{Problem:} Check whether $\cap_{i=1}^n \lang^*(A_i) \neq
  \emptyset$.
\end{prob}
The size of the input for Problem~\ref{inter-emptiness-dfa} is
$\sum_{i=1}^n |A_i|$.
\begin{theorem}[D.~Kozen, \cite{Kozen77}]\label{thm-kozen-77}
  Problem~\ref{inter-emptiness-dfa} is PSPACE-complete.
\end{theorem}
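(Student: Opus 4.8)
The plan is to establish PSPACE-completeness by proving membership in PSPACE and PSPACE-hardness separately, the latter being the classical result of Kozen that I would reconstruct.

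For membership in PSPACE, I would argue that the naive product automaton $A = A_1 \times \cdots \times A_n$ from Definition~\ref{def-prod-sync} has a state space $L_1 \times \cdots \times L_n$ of size $\prod_{i=1}^n |A_i|$, which is exponential in the input size $\sum_{i=1}^n |A_i|$. By Fact~\ref{fact-1}, $\cap_{i=1}^n \lang^*(A_i) = \lang^*(A)$, so the intersection is nonempty iff some accepting state $(f_1,\dots,f_n)$ with each $f_i \in F_i$ is reachable in $A$ from the initial tuple $\vect{l_0}$. The key observation is that we do not construct this exponential automaton explicitly; instead we perform a nondeterministic reachability search that stores only a single state tuple $(q_1,\dots,q_n)$ at any time, guessing successive letters $a \in \Sigma$ and updating each component $q_i$ according to the (deterministic) transition function of $A_i$. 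Since each tuple uses only $\sum_{i=1}^n \log|A_i|$ bits --- polynomial in the input --- this is a nondeterministic polynomial-space procedure, and by Savitch's theorem $\mathrm{NPSPACE} = \mathrm{PSPACE}$, giving membership.

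For PSPACE-hardness, I would reduce from the acceptance problem of an arbitrary polynomial-space deterministic Turing machine $M$ on input $x$ of length $m$, where $M$ uses at most $p(m)$ tape cells. The idea is to encode a halting computation as a word and to build, for each tape cell position $j \in \{1,\dots,p(m)\}$, one DFA $A_j$ that verifies local consistency of the computation at cell $j$: it reads the sequence of configurations (written as words over a suitable alphabet encoding cell contents, head position, and control state) and checks, using its own bounded memory, that the content of cell $j$ evolves correctly between consecutive configurations according to $M$'s transition function. One further DFA enforces the global format and the initial and accepting conditions. Each $A_j$ has polynomially many states since it need only remember a constant-sized window around cell $j$, so the total input size $\sum_j |A_j|$ is polynomial in $m$. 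A word lies in the intersection of all the $A_j$ iff it encodes a genuine accepting computation of $M$ on $x$, so the intersection is nonempty iff $M$ accepts $x$.

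The main obstacle is the hardness direction: the crux is the careful local encoding guaranteeing that the conjunction of the per-cell consistency checks is \emph{equivalent} to global correctness of the Turing machine run, while each individual automaton stays polynomially sized and deterministic. In particular one must ensure that each $A_j$ can synchronize its reading with the block structure of the encoded configurations and correctly handle the head-movement interaction between adjacent cells, which typically requires each $A_j$ to also inspect the neighbouring cells $j-1$ and $j+1$; verifying that this local window suffices to characterize legal transitions of $M$ is the delicate step. Since this is precisely the content of Theorem~\ref{thm-kozen-77}, I would in practice cite Kozen~\cite{Kozen77} for hardness and present in full only the PSPACE-membership argument above.
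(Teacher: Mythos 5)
Your proposal is correct and matches the standard argument behind this result, which the paper itself does not prove but simply cites from Kozen~\cite{Kozen77}: on-the-fly nondeterministic simulation of the product automaton plus Savitch's theorem for PSPACE membership, and the per-tape-cell DFA encoding of a polynomial-space Turing machine computation for hardness (which you, like the paper, ultimately delegate to~\cite{Kozen77}). Your membership argument is also the same guess-and-check style the paper uses for its timed-automata analogues (Problems~\ref{inter-emptiness} and~\ref{buchi-inter-emptiness}), so there is nothing to reconcile.
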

D.~Kozen's Theorem also holds for B\"uchi languages:
\begin{theorem}\label{thm-buchi-kozen-77}
  Checking whether $\cap_{i=1}^n \lang^\omega(A_i) \neq \emptyset$ is
  PSPACE-complete.
\end{theorem}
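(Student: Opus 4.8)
The plan is to prove both directions of the claim, reusing the machinery already set up in the excerpt. For \textbf{hardness} I would reduce from Problem~\ref{inter-emptiness-dfa}, which is PSPACE-complete by Theorem~\ref{thm-kozen-77}. For \textbf{membership} I would combine Fact~\ref{fact-1}, which identifies $\cap_{i=1}^n \lang^\omega(A_i)$ with the B\"uchi language of the product $A^+$, with an on-the-fly search for an accepting lasso in $A^+$, exploiting the fact that the states of $A^+$ admit polynomial-size descriptions even though $A^+$ itself is exponentially large.

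\emph{PSPACE-hardness.} Given DFA $A_1,\dots,A_n$ over $\Sigma$, I would fix a fresh symbol $\#\notin\Sigma$ and turn each $A_i$ into a B\"uchi automaton $B_i$ over $\Sigma\cup\{\#\}$ as follows: add one new location $q_i$ carrying a $\#$-self-loop, add a $\#$-transition from every final location of $A_i$ to $q_i$, and declare $q_i$ to be the only repeated location of $B_i$. Since the only way to visit $q_i$ infinitely often is to read a $\Sigma$-word $w$ reaching a final location of $A_i$ and then loop on $\#$, the B\"uchi language of $B_i$ is exactly $\{\,w\,\#^\omega \mid w \in \lang^*(A_i)\,\}$. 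An $\omega$-word therefore lies in $\cap_{i=1}^n \lang^\omega(B_i)$ iff it has the form $w\,\#^\omega$ with $w \in \cap_{i=1}^n \lang^*(A_i)$, so $\cap_i \lang^\omega(B_i)\neq\emptyset$ iff $\cap_i \lang^*(A_i)\neq\emptyset$. The construction is clearly polynomial, which yields PSPACE-hardness.

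\emph{PSPACE membership.} By Fact~\ref{fact-1}, $\cap_{i=1}^n \lang^\omega(A_i)=\lang^\omega(A^+)$, so it suffices to decide B\"uchi-nonemptiness of $A^+$. A B\"uchi automaton has a nonempty language iff some repeated state is reachable from the initial state and lies on a cycle; for $A^+$ a repeated state is a tuple $(\vect{\ell},n)$, i.e. one local state per component plus the saturated counter value $n$. Such a description has size polynomial in $\sum_i |A_i|$, and given two descriptions one can check in polynomial time whether $A^+$ has an edge between them by examining the local transitions and the counter update. I would therefore guess a repeated state $s$ of $A^+$ and verify two reachability facts: that $s$ is reachable from the initial state, and that $s$ is reachable from $s$ by a path of length at least one. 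Each query is posed on the implicit graph whose vertices are the exponentially many, but polynomially describable, states of $A^+$.

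The main obstacle, and the crux of the upper bound, is that $A^+$ cannot be built explicitly within polynomial space. This is resolved by traversing $A^+$ lazily: a reachability query is answered by a nondeterministic procedure that stores only the current vertex and a binary step counter bounded by the number of vertices of $A^+$, hence of polynomially many bits, generating successors on demand. This places each query in NPSPACE, and Savitch's theorem collapses NPSPACE to PSPACE. Combining the guess of $s$ with the two reachability checks gives a polynomial-space decision procedure, which completes the membership proof and the theorem.
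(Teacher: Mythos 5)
Your proof is correct. The paper actually states this theorem without proof (presenting it as a known extension of Kozen's result), but your argument is precisely the approach the paper uses for its timed analogue (Problem~\ref{buchi-inter-emptiness}): hardness by reducing finite-word intersection emptiness to B\"uchi intersection emptiness (your $\#$-sink construction makes this reduction explicit), and membership by guessing a repeated state $(\vect{\ell},n)$ of the product $A^+$ from Fact~\ref{fact-1} and verifying the two reachability queries on-the-fly in nondeterministic polynomial space, concluding via Savitch's theorem.
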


We establish a variant of Theorem~\ref{thm-kozen-77} which will be
used later in the paper: we show that
Problem~\ref{inter-emptiness-dfa} is PSPACE-hard even if
$A_2,\cdots,A_n$ are automata where all the states are accepting and
$A_1$ is the only automaton with a proper set of accepting states
(actually one accepting state is enough).  
\begin{proposition}\label{prop-iep}
  Let $A_i, 1 \leq i \leq n$ be $n$ DTA over the alphabet $\Sigma$.
  If for all $A_i, 2 \leq i \leq n$, all states of $A_i$ are
  accepting, Problem~\ref{inter-emptiness-dfa} is already PSPACE-hard.
\end{proposition}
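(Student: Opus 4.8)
The plan is to revisit the reduction underlying Theorem~\ref{thm-kozen-77} and to redistribute its acceptance conditions so that only $A_1$ retains a proper set of final states. Fix a PSPACE-complete language $\Lambda$ decided by a deterministic Turing machine $M$ in space $p(n)$, and reduce ``$w \in \Lambda$'' (with $|w|=n$) to the non-emptiness of $\bigcap_i \lang^*(A_i)$. Writing $s=p(n)$ and assuming, as usual, that $M$ has a single canonical accepting and halting configuration, I encode a configuration as a word of length $s$ over an alphabet $\Gamma$ recording the tape contents together with the head position and control state, and present a candidate computation as a word $\# C_0 \# C_1 \# \cdots \# C_m \#$ over $\Sigma = \Gamma \cup \{\#\}$. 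Since finite automata are deterministic timed automata without clocks, it suffices to carry out the construction with finite automata, built so that $\bigcap_i \lang^*(A_i) \neq \emptyset$ \ssi $M$ accepts $w$.

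The key idea is that every constraint describing a valid encoding is a \emph{safety} constraint, hence checkable by a \emph{partial} deterministic automaton all of whose states are accepting: such an automaton accepts exactly the words on which its unique run never blocks, and rejects a word only by reaching a state with no transition on the next letter. For $A_2,\dots,A_n$ I would take one automaton checking the format (separators in place, blocks of length $s$) by counting positions modulo $s+1$; one automaton checking that $C_0$ equals the initial configuration $q_0\,w$ padded with blanks; and, for each tape position $j \in \{1,\dots,s\}$, one automaton verifying local consistency at $j$, \ie that the symbol at position $j$ of every $C_{i+1}$ is the one dictated by the window at positions $j-1,j,j+1$ of $C_i$ under $M$'s transition function. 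Each of these counts a position inside a block and stores a constant-size window, hence has $O(s)$ states, and there are $O(s)$ of them, so $A_2,\dots,A_n$ are polynomially many automata of polynomial size, all with every state accepting. The single automaton $A_1$ carries the only genuine acceptance requirement: it has one accepting state, entered exactly when the word has finished reading a block equal to the canonical accepting configuration followed by the final $\#$. A word then lies in the intersection \ssi it is correctly formatted, starts with the initial configuration, is locally consistent at every position---so that by determinism and induction its $i$-th block is forced to be the true $i$-th configuration of $M$ on $w$---and ends in the accepting configuration; such a word exists \ssi the unique run of $M$ on $w$ reaches acceptance, \ie \ssi $w \in \Lambda$.

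The step requiring the most care---and exactly the reason the restriction does not weaken the bound---is the reinterpretation of acceptance in $A_2,\dots,A_n$: since every state there is accepting, these automata can only reject by being incomplete and blocking, so I must realise each format, initialisation and consistency check as a deterministic partial automaton whose sole failure mode is a missing transition on an offending letter, while guaranteeing that no legitimate encoding ever blocks. Two facts must then be pinned down: the standard ``window'' property of space-bounded computations, which justifies that the position-wise checks together capture the full one-step transition relation, and the fact that the accepting configuration is halting, so that no consistency automaton forces it to have a successor and a word may legitimately end there. Both are routine once the partial-automaton viewpoint is adopted, and they yield the claimed PSPACE-hardness under the stated restriction (with PSPACE-completeness following from Theorem~\ref{thm-kozen-77}).
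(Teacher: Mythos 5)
Your proof is correct, but it takes a genuinely different route from the paper's. The paper does not revisit the Turing machine encoding at all: it reduces the \emph{general} intersection emptiness problem (Theorem~\ref{thm-kozen-77}, used as a black box) to the restricted one by a two-line transformation --- add a fresh letter $\lambda$ and, from every accepting state of every $A_i$, a $\lambda$-transition to a new sink state $\bot$; then $A'_1$ keeps $\{\bot\}$ as its only accepting state while all states of $A'_2,\dots,A'_n$ are declared accepting, and a word lies in the new intersection \ssi it has the form $u\lambda$ with $u$ in the old intersection, since firing $\lambda$ in the product is possible exactly when every component sits in an original accepting state. You instead re-derive Kozen's hardness from scratch, encoding a polynomial-space computation and observing that the format, initialisation and window-consistency constraints are all \emph{safety} conditions, hence realisable by partial deterministic automata whose every state is accepting, with only the acceptance check ($A_1$) needing a proper final state; your handling of the two delicate points (legitimate encodings never block, and the halting accepting configuration needs no successor, with $A_1$ re-imposing the termination requirement that the all-accepting checkers can no longer express) is sound. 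The comparison: the paper's argument is shorter, modular, and transfers automatically to any strengthening of Theorem~\ref{thm-kozen-77}, since it treats the hard instances generically; yours is self-contained, explains \emph{why} the restriction costs nothing (reachability versus safety), and incidentally shows hardness already for a fixed constant-size alphabet and for checkers of a very simple partial shape --- at the price of redoing the whole machine-encoding machinery that the paper deliberately avoids.
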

\begin{proof}
  Let $A_1,A_2,\cdots,A_n$ be $n$ deterministic automata with
  accepting states $F_1,F_2,$ $\cdots,F_n$ on the alphabet $\Sigma$.
  Let $\lambda$ be a fresh letter not in $\Sigma$.  Define automaton
  $A'_i$ by: from any state $q$ in $F_i$, add a transition
  $(q,\lambda,\bot)$ where $\bot$ is new state. Let $F'_1=\{\bot\}$
  and $F'_i$ be all the states of $A'_i$.  It is clear that
  $\lang^*(A'_1)=\lang^*(A_1).\lambda$.


  We can prove that $\cap_{i=1}^n \lang^*(A_i) \neq \emptyset$ $\iff$
  $\cap_{i=1}^n \lang^*(A'_i) \neq \emptyset$. Indeed, assume $w \in
  \cap_{i=1}^n \lang^*(A_i) \neq \emptyset$. Then $A_1 \times A_2
  \times \cdots \times A_n$ reaches the state $(q_1,q_2,\cdots,q_n)$
  after reading $w$ and $\forall 1 \leq i \leq n, q_i \in F_i$. Thus
  in $A'_1 \times A'_2 \times \cdots \times A'_n$ the same state can
  be reached and then $\lambda$ can be fired in the product leading to
  $(\bot,\bot,\cdots,\bot)$.  Conversely, if a word $w$ is accepted by
  the product $A'_1 \times \cdots \times A'_n$, $w$ must end with
  $\lambda$. Let $w=u . \lambda \in \cap_{i=1}^n \lang^*(A'_i) \neq
  \emptyset$. After reading $u$ the state of the product must be
  $(q_1,q_2,\cdots,q_n)$ with $\forall 1 \leq i \leq n, q_i \in F_i$,
  and the transitions fired when reading $u$ are also in $A_1 \times
  A_2 \times \cdots \times A_n$ which implies $u \in \cap_{i=1}^n
  \lang^*(A_i)$.  \qed
\end{proof}
The next results are counterparts of D.~Kozen's results for TA.
%
\begin{prob}[Intersection Emptiness for TA] \label{inter-emptiness} \mbox{} \\
  \textsc{Inputs:} $n$ TA
  $A_i=(L_i,l^i_0,X_i,\Sigma^i_{\tauac},E_i,\inv_i,F_i)$, $1 \leq i
  \leq n$
  with $X_k \cap X_j=\emptyset$ for $k \neq j$. \\
  \textsc{Problem:} Check whether $\cap_{i=1}^n \lang^*(A_i) \neq
  \emptyset$.
\end{prob}

\begin{theorem}
  Problem~\ref{inter-emptiness} is PSPACE-complete.
\end{theorem}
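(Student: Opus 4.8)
The plan is to prove the two directions separately. For PSPACE-hardness, I would simply observe that a finite automaton is a TA with $X=\emptyset$ (as noted just after Definition~\ref{def-ta}), so that Problem~\ref{inter-emptiness-dfa} is a special case of Problem~\ref{inter-emptiness}. Hence Theorem~\ref{thm-kozen-77} transfers verbatim and gives PSPACE-hardness of the timed version; in fact Proposition~\ref{prop-iep} already yields hardness for the restricted shape in which only $A_1$ carries a proper accepting set while all other components accept everywhere.

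For PSPACE-membership, the starting point is Fact~\ref{fact-1}: for the product $A=A_1\times\cdots\times A_n$ (whose clock sets are disjoint by hypothesis) we have $\lang^*(A)=\cap_{i=1}^n\lang^*(A_i)$, so the problem reduces to deciding $\lang^*(A)\neq\emptyset$. By Alur--Dill's Theorem~\ref{thm-alur}, $\lang^*(A)\neq\emptyset$ iff some final state of the region graph $\rg(A)$ is reachable from its initial state $(\vect{l_0},\vect{0})$. The difficulty, and the step I expect to be the main obstacle, is that $A$, and a fortiori $\rg(A)$, has size exponential in $n$, so neither can be built explicitly within polynomial space; the whole argument must avoid materialising the product.

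The key observation that resolves this is that a single state of $\rg(A)$ admits a polynomial-size encoding while its edge relation remains locally checkable. A state is a pair $(\vect{\ell},r)$ where $\vect{\ell}=(\ell_1,\dots,\ell_n)$ stores one location per component, costing $\sum_i \log|L_i|$ bits, and $r$ is a region over $X=\cup_i X_i$. Since $|X|=\sum_i|X_i|$ is polynomial and each clock's integer part is clipped at the largest constant $K$ (encoded in binary in the input, so $\log K$ is polynomial), a region is described in polynomial space by the clipped integer parts of the clocks together with the ordering of their fractional parts and the information of which are null. Moreover, whether $(\vect{\ell},r)\to(\vect{\ell}',r')$ is an edge of $\rg(A)$ --- either a $\tauac$-labelled delay move to the time-successor region, or a discrete move obtained from the product rule of Definition~\ref{def-prod-sync} --- can be checked in polynomial time by inspecting the transition relations $E_i$ componentwise, precisely because the $X_i$ are disjoint and each guard, reset and location update factors through the components.

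Consequently, reachability of a final state in $\rg(A)$ is decided by a nondeterministic procedure that keeps only the current state in memory, repeatedly guesses a successor, verifies the edge locally, and accepts upon reaching a state $(\vect{\ell},r)$ with $\vect{\ell}[i]\in F_i$ for all $i$. A step counter bounded by $|\rg(A)|=\prod_i|L_i|\cdot|X|!\cdot 2^{|X|}\cdot K^{|X|}$, whose logarithm is polynomial in the input, guarantees termination. This runs in polynomial space, placing the problem in NPSPACE, and Savitch's theorem gives NPSPACE $=$ PSPACE. Together with the lower bound, Problem~\ref{inter-emptiness} is PSPACE-complete; the crux throughout is the succinct, locally-checkable on-the-fly exploration of $\rg(A)$ that sidesteps the explicit exponential product.
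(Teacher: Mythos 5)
Your proposal is correct and follows essentially the same route as the paper: PSPACE-hardness by specialising to finite automata via Kozen's theorem, and PSPACE-membership by an on-the-fly nondeterministic search over the region graph of the product, using a polynomial-space encoding of (location vector, region) pairs and concluding with Savitch's theorem. The paper merely compresses the membership argument by citing an existing result (Theorem~31 of the Aceto--Laroussinie survey) for the polynomial-space encoding of product regions, whereas you spell out the encoding and the local checkability of edges explicitly; the substance is identical.
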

\begin{proof}
  PSPACE-hardness follows from the fact that checking $\cap_{i=1}^n
  \lang^*(A_i) \neq \emptyset$ on finite automata is already
  PSPACE-hard~\cite{Kozen77} or alternatively because reachability for
  timed automata is PSPACE-hard~\cite{AlurDill94}.

  PSPACE-easiness can be established as Theorem~31 (section~4.1)
  of~\cite{AcetoL02}: the regions of the product of TA $A_i$ can be
  encoded in polynomial space in the size of the clock constraints of
  the product automaton. An algorithm to check emptiness is obtained
  by: 1) guessing a sequence of pairs (location,region) in the product
  automaton and 2) checking whether it is accepted. This can be done
  in NPSPACE and by Savitch's Theorem in PSPACE.  \qed
\end{proof}
The previous theorem extends to B\"uchi languages:
\begin{prob}[B\"uchi Intersection Emptiness for TA] \label{buchi-inter-emptiness} \mbox{} \\
  \textsc{Inputs:} $n$ TA
  $A_i=(L_i,l^i_0,X_i,\Sigma^i_{\tauac},E_i,\inv_i,R_i)$, $1 \leq i
  \leq n$
  with $X_k \cap X_j=\emptyset$ for $k \neq j$. \\
  \textsc{Problem:} Check whether $\cap_{i=1}^n \lang^\omega(A_i) \neq
  \emptyset$.
\end{prob}

\begin{theorem}
  Problem~\ref{buchi-inter-emptiness} is PSPACE-complete.
\end{theorem}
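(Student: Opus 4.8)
The plan is to follow the same two-pronged strategy as for Problem~\ref{inter-emptiness}, reusing as much of that argument as possible and only adapting the reachability analysis to a B\"uchi acceptance condition.

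For PSPACE-hardness, I would note that finite automata are a special case of timed automata (with $X=\emptyset$), so the B\"uchi intersection emptiness problem for FA, shown PSPACE-complete in Theorem~\ref{thm-buchi-kozen-77}, is a restriction of Problem~\ref{buchi-inter-emptiness}; hence the lower bound is inherited immediately. Alternatively, hardness already follows from the PSPACE-hardness of B\"uchi emptiness for a single timed automaton, which is a consequence of the PSPACE-hardness of reachability~\cite{AlurDill94}.

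For PSPACE-easiness, the key is Fact~\ref{fact-1}, which reduces the problem to a single-automaton question: letting $A^+$ be the product automaton equipped with the B\"uchi counter $c$ (ranging over $\{0,\dots,n\}$, hence storable in $\lceil \log n\rceil$ bits), we have $\cap_{i=1}^n \lang^\omega(A_i)=\lang^\omega(A^+)$, so it suffices to decide whether $\lang^\omega(A^+)\neq\emptyset$. By Theorem~\ref{thm-alur} applied to $A^+$, the infinite-word language of $A^+$ is nonempty iff the region automaton $\rg(A^+)$ admits an infinite accepting path, \ie a path visiting a repeated state $(\vect{\ell},r)$ with $\vect{\ell}\in L_1\times\cdots\times L_n$ and $c=n$ infinitely often. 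As in the proof for Problem~\ref{inter-emptiness}, the crucial point is that although $\rg(A^+)$ has exponentially many states, each state is a pair (location-vector, region) whose encoding is polynomial in the size of the clock constraints of the product, and the successor relation is computable in polynomial space. The emptiness check then amounts to detecting a reachable lasso: I would proceed nondeterministically, first guessing a repeated state $s$ of $\rg(A^+)$ and verifying it is reachable from the initial state $(\vect{l_0},\vect{0},0)$ by guessing the run one region-step at a time while keeping only the current state in memory, then starting again from $s$ and guessing a nonempty path back to $s$ in the same step-by-step fashion, remembering only $s$ and the current state. Each phase is a reachability query over polynomially encoded states, so the whole procedure runs in NPSPACE, and Savitch's theorem gives the claimed PSPACE upper bound.

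The step I expect to require the most care is the correctness of the B\"uchi reduction through the region graph: one must check that the counter construction of $A^+$ faithfully encodes the simultaneous B\"uchi conditions of all the $A_i$ (so that a single repeated state of $A^+$ captures ``every $A_i$ visits $R_i$ infinitely often''), and that the time-abstract bisimulation underlying Theorem~\ref{thm-alur} preserves infinite accepting behaviour — in particular that an accepting cycle in $\rg(A^+)$ lifts to a genuine infinite accepting run of $A^+$. Once this equivalence is in place, the complexity bookkeeping is identical to the finite-word case.
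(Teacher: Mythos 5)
Your proposal is correct and follows essentially the same route as the paper: hardness is inherited from the untimed/single-automaton case, and the upper bound is obtained by forming the counter product $A^+$ (Fact~\ref{fact-1}), then detecting a lasso through a guessed repeated state $((\vect{\ell},n),r)$ in $\rg(A^+)$ using polynomially-encoded region states, with Savitch's theorem closing the NPSPACE-to-PSPACE gap. The paper's proof is just a terser statement of exactly this argument, including the remark that the counter value $n$ is stored in binary.
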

\begin{proof}
  PSPACE-hardness follows from the reduction of
  Problem~\ref{inter-emptiness} to Problem~\ref{buchi-inter-emptiness}
  or again because checking B\"uchi emptiness for timed automata is
  PSPACE-hard~\cite{AlurDill94}.
  
  Consider the product automaton $A^+$ the construction of which is
  described at the end of section~\ref{sec-product}.  PSPACE-easiness
  is established by: 1) guessing a state of $\rg(A^+)$ of the form
  $((\vect{\ell},n),r)$ and 2) checking it is reachable from the
  initial state (PSPACE) and reachable from itself (PSPACE). As $n$ is
  represented in binary the result follows. \qed
\end{proof}
%


\section{Fault Codiagnosis Problems}\label{sec-nsc}

We first recall the basics of \emph{fault diagnosis}.  The purpose of
fault diagnosis~\cite{Raja95} is to detect a fault in a system as soon
as possible. The assumption is that the model of the system is known,
but only a subset $\Sigma_o$ of the set of events $\Sigma$ generated
by the system are observable.  Faults are also unobservable.

Whenever the system generates a timed word $w \in \tw^*(\Sigma)$, an
external observer can only see $\proj{\Sigma_o}(w)$.  If an observer
can detect faults under this partial observation of the outputs of
$A$, it is called a \emph{diagnoser}.  We require a diagnoser to
detect a fault within a given delay $\Delta \in \setN$.

To model timed systems with faults, we use timed automata on the
alphabet $\Sigma_{\tauac,f}=\Sigma_{\tauac}\cup \{f\}$ where $f$ is
the \emph{faulty} (and unobservable) event. We only consider one type
of fault, but the results we give are valid for many-types of faults
$\{f_1,f_2, \cdots,f_n\}$: indeed solving the many-types
diagnosability problem amounts to solving $n$ one-type diagnosability
problems~\cite{yoo-lafortune-tac-02}.
The observable events are given by $\Sigma_o \subseteq \Sigma$ and
$\tauac$ is always unobservable.

\smallskip

The idea of \emph{decentralized} or \emph{distributed} diagnosis was
introduced in~\cite{Debouk-deds-00}.  It is based on decentralized
architectures: local diagnosers and a communication protocol.  In
these architectures, local diagnosers (with their own partial view of
the system) can send to a coordinator some information, using a given
communication protocol.  The coordinator then computes a result from
the partial results of the local diagnosers. The goal is to obtain a
coordinator that can detect the faults in the system.  When local
diagnosers do not communicate with each other nor with a coordinator
(protocol~3 in~\cite{Debouk-deds-00}), the decentralized diagnosis
problem is called
\emph{codiagnosis}~\cite{qiu-ieee-man-06,Wang-deds-07}.  In this
section we formalize the notion of codiagnosability introduced
in~\cite{qiu-ieee-man-06} in a style similar to~\cite{cassez-fi-08}.
This allows us to obtain a necessary and sufficient condition for
codiagnosability of FA but also to extend the definition of
codiagnosability to \emph{timed automata}.

 In the sequel we assume that the model of the system is a TA
$A=(L,l_0,X,$ $\Sigma_{\tauac,f},$ $E,\inv)$ and is fixed.

\subsection{Faulty Runs}
\noindent Let $\Delta \in \setN$. A run $\varrho$ of $A$ of the form
\begin{eqnarray*}
  (\ell_0,v_0) \xrightarrow{\delta_0} (\ell_0,v_0 + \delta_0)
  \xrightarrow{a_0} (\ell_1,v_1) \ 
  \ \cdots \ \xrightarrow{a_{n-1}} (\ell_n,v_n)
  \xrightarrow{\delta_n} (\ell_n,v_n+ \delta) \ \cdots
\end{eqnarray*}
is $\Delta$-faulty if: (1) there is an index $i$ \st $a_i=f$ and (2)
the duration of $\varrho'=(\ell_{i},v_i) \xrightarrow{\delta_{i}}
\cdots \xrightarrow{\delta_n} (\ell_n,v_n+\delta_n) \cdots$ is larger
than $\Delta$.  We let $\faulty_{\geq \Delta}(A)$ be the set of
$\Delta$-faulty runs of $A$.
Note that by definition, if $\Delta' \geq \Delta$ then $\faulty_{\geq
  \Delta'}(A) \subseteq \faulty_{\geq \Delta}(A)$. We let
$\faulty(A)=\cup_{\Delta \geq 0}\faulty_{\geq \Delta}(A)=\faulty_{\geq
  0}(A)$ be the set of faulty runs of $A$, and $\nonfaulty(A)=\runs(A)
\setminus \faulty(A)$ be the set of non-faulty runs of $A$.
Finally, we let
$$\faulty^{\textit{tr}}_{\geq
  \Delta}(A)=\Trace(\faulty_{\geq \Delta}(A))$$ and
$$\nonfaulty^{\textit{tr}}(A)=\Trace(\nonfaulty(A))$$ 
which are the traces\footnote{Notice that $\trace(\varrho)$ erases
  $\tauac$ and $f$.} of $\Delta$-faulty and non-faulty runs of $A$.

We also make the assumption that the TA $A$ cannot prevent time from
elapsing. For FA, this assumption is that from any state, a discrete
transition can be taken.  If it is not case, $\tauac$ loop actions can
be added with no impact on the (co)diagnosability status of the
system.  This is a standard assumption in diagnosability and is
required to avoid taking into account these cases that are not
interesting in practice.

For discrete event systems (FA), the notion of time is the number of
transitions (discrete steps) in the system.  A $\Delta$-faulty run is
thus a run with a fault action $f$ followed by at least $\Delta$
discrete steps (some of them can be $\tauac$ or even $f$ actions).
When we consider codiagnosability problems for discrete event systems,
this definition of $\Delta$-faulty runs apply. The other definitions
are unchanged.

\begin{remark} \em
  Using a timed automaton where discrete actions are separated by one
  time unit is not equivalent to using a finite automaton when solving
  a fault diagnosis problem.  For instance, a timed automaton can
  generate the timed words $1.f.1.a$ and $1.\tau.1.\tau.1.a$. In this
  case, it is $1$-diagnosable: after reading the timed word $2.a$ we
  announce a fault. If we do not see the $1$-time unit durations, the
  timed words $f.a$ and $\tauac^2.a$ give the same observation. And
  thus it is not diagnosable if we cannot measure time.  Using a timed
  automaton where discrete actions are separated by one time unit
  gives to the diagnoser the ability to count/measure time and this is
  not equivalent to the fault diagnosis problem for FA (discrete event
  systems).
\end{remark}

\subsection{Codiagnosers and Codiagnosability Problems}
A \emph{codiagnoser} is a tuple of diagnosers, each of which has its
own set of observable events $\Sigma_i$, and whenever a fault occurs,
at least one diagnoser is able to detect it.  In the sequel we write
$\proj{i}$ in place of $\proj{\Sigma_i}$ for readability reasons.  A
codiagnoser can be formally defined as follows:
\begin{definition}[$(\Delta,\calE)$-Codiagnoser]\label{def-codiag}
  Let $A$ be a timed automaton over the al\-pha\-bet $\Sigma_{\tauac,f}$,
  $\Delta \in \setN$ and $\calE=(\Sigma_i)_{1 \leq i \leq n}$ be a family
  of subsets of $\Sigma$.  A \emph{$(\Delta,\calE)$-co\-dia\-gnoser} for
  $A$ is a mapping $\vect{D}=(D_1,\cdots,D_n)$ with $D_i:
  \tw^*(\Sigma_i)\rightarrow \{0,1\}$ such that:
  \begin{itemize}
  \item for each $\varrho \in \nonfaulty(A)$,
    $\sum_{i=1}^n \vect{D}[i](\proj{i}(\trace(\varrho)))=0$,
  \item for each $\varrho \in \faulty_{\geq \Delta}(A)$,
     $\sum_{i=1}^n \vect{D}[i](\proj{i}(\trace(\varrho))) \geq 1$.
    \endef
  \end{itemize}
\end{definition}
As for diagnosability, the intuition of this definition is that ($i$)
the codiagnoser will raise an alarm ($\vect{D}$ outputs a value
different from $0$) when a $\Delta$-faulty run has been identified,
and that ($ii$) it can identify those $\Delta$-faulty runs
unambiguously.  The codiagnoser is not required to do anything special
for $\Delta'$-faulty runs with $\Delta' < \Delta$ (although it is
usually required that once it has announced a fault, it does not
change its mind and keep outputting $1$).

\smallskip

$A$ is $(\Delta,\calE)$-codiagnosable if there exists a
$(\Delta,\calE)$-codiagnoser for $A$. $A$ is $\calE$-codiagnosable if
there is some $\Delta \in \setN$ \st $A$ is
$(\Delta,\calE)$-codiagnosable.

The standard notions~\cite{Raja95} of $\Delta$-diagnosability and
$\Delta$-diagnoser are obtained when the family $\calE$ is the
singleton $\calE=\{\Sigma\}$.
%
The fundamental codiagnosability problems for timed automata  are the
following:
\begin{prob}[$(\Delta,\calE)$-Codiagnosability] \label{prob-delta-codiag} \mbox{} \\
  \textsc{Inputs:} A TA $A=(L,l_0,X,\Sigma_{\tauac,f},E,\inv)$,
  $\Delta \in \setN$
  and $\calE=(\Sigma_i)_{1 \leq i \leq n}$. \\
  \textsc{Problem:} Is $A$ $(\Delta,\calE)$-codiagnosable?
\end{prob}
\begin{prob}[Codiagnosability] \label{prob-codiag} \mbox{} \\
  \textsc{Inputs:} A TA $A=(L,l_0,X,\Sigma_{\tauac,f},E,\inv)$ and $\calE=(\Sigma_i)_{1 \leq i \leq n}$. \\
  \textsc{Problem:} Is $A$ $\calE$-codiagnosable?
\end{prob}
\begin{prob}[Optimal delay] \label{prob-delay} \mbox{} \\
  \textsc{Inputs:} A TA $A=(L,l_0,X,\Sigma_{\tauac,f},E,\inv)$ and $\calE=(\Sigma_i)_{1 \leq i \leq n}$. \\
  \textsc{Problem:} If $A$ is $\calE$-codiagnosable, what is the
  minimum $\Delta$ \st $A$ is $(\Delta,\calE)$-codiagnosable?
\end{prob}
The size of the input for Problem~\ref{prob-delta-codiag} is $|A|+\log
\Delta + n \cdot |\Sigma|$, and for Problems~\ref{prob-codiag}
and~\ref{prob-delay} it is $|A| + n \cdot |\Sigma|$.

\medskip

In addition to the previous problems, we will consider the
construction of a $(\Delta,\calE)$-codiagnoser when $A$ is
$(\Delta,\calE)$-codiagnosable in section~\ref{sec-synthesis}.

\subsection{Necessary and Sufficient Condition for Codiagnosability}
In this section we generalize the necessary and sufficient condition
for diagnosability~\cite{tripakis-02,cassez-fi-08} to
codiagnosability.

\begin{lemma}\label{lem-nsc} $A$ is not $(\Delta,\calE)$-codiagnosable if and only if
  $\exists \varrho \in \faulty_{\geq \Delta}(A)$ and
  \begin{equation} 
    \label{eq-nsc-codiag}       
    \forall 1 \leq i \leq n, \exists \varrho_i \in \nonfaulty(A) \, \st
    \proj{i}(\trace(\varrho))=\proj{i}(\trace(\varrho_i))\mathpunct.
  \end{equation}
\end{lemma}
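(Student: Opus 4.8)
The plan is to prove the two directions of the equivalence separately, reading the statement as ``$A$ is not $(\Delta,\calE)$-codiagnosable $\iff$~\eqref{eq-nsc-codiag} holds''. I would dispatch the direction ``\eqref{eq-nsc-codiag} $\Rightarrow$ not codiagnosable'' by a direct contradiction, and the reverse direction by its contrapositive, constructing an explicit codiagnoser.

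For ``\eqref{eq-nsc-codiag} $\Rightarrow$ not codiagnosable'', suppose the condition holds, witnessed by $\varrho\in\faulty_{\geq\Delta}(A)$ and non-faulty runs $\varrho_1,\dots,\varrho_n$ with $\proj{i}(\trace(\varrho))=\proj{i}(\trace(\varrho_i))$, and assume towards a contradiction that a $(\Delta,\calE)$-codiagnoser $\vect{D}$ exists. Since $\varrho\in\faulty_{\geq\Delta}(A)$, the second clause of Definition~\ref{def-codiag} gives $\sum_i\vect{D}[i](\proj{i}(\trace(\varrho)))\geq 1$, so $\vect{D}[j](\proj{j}(\trace(\varrho)))=1$ for some $j$; by the matching projection, $\vect{D}[j](\proj{j}(\trace(\varrho_j)))=1$. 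But $\varrho_j\in\nonfaulty(A)$, so the first clause forces $\sum_i\vect{D}[i](\proj{i}(\trace(\varrho_j)))=0$, and since each summand lies in $\{0,1\}$ every summand is $0$ --- in particular the $j$-th --- contradicting the previous equality.

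For the converse I would show the contrapositive: if~\eqref{eq-nsc-codiag} fails then $A$ is $(\Delta,\calE)$-codiagnosable. Failure means that every $\varrho\in\faulty_{\geq\Delta}(A)$ admits an index $i(\varrho)$ whose projection $\proj{i(\varrho)}(\trace(\varrho))$ is the $i(\varrho)$-projection of no non-faulty run. This points to the canonical codiagnoser: put $\vect{D}[i](\sigma)=1$ precisely when $\sigma\notin\proj{i}(\nonfaultytr(A))$, and $\vect{D}[i](\sigma)=0$ otherwise. Then every $\varrho'\in\nonfaulty(A)$ satisfies $\proj{i}(\trace(\varrho'))\in\proj{i}(\nonfaultytr(A))$ for all $i$, so all $\vect{D}[i]$ return $0$ and the first clause holds; and every $\Delta$-faulty $\varrho$ has $\vect{D}[i(\varrho)](\proj{i(\varrho)}(\trace(\varrho)))=1$, so the sum is at least $1$ and the second clause holds. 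Hence $\vect{D}$ witnesses codiagnosability.

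I do not expect a serious obstacle, as this is the codiagnosability analogue of the standard diagnosability criterion and both directions are essentially bookkeeping over Definition~\ref{def-codiag}. The two points deserving care are: in the first direction, using that a sum of $\{0,1\}$-valued terms equal to $0$ forces each term to vanish; and in the second, observing that Definition~\ref{def-codiag} imposes no effectiveness constraint on the maps $\vect{D}[i]$, so the purely language-theoretic indicator ``observation inconsistent with every non-faulty run'' is a legitimate codiagnoser.
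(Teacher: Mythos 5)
Your proof is correct and follows essentially the same route as the paper: the first direction is the same contradiction via the two clauses of Definition~\ref{def-codiag}, and the second is the same contrapositive argument building a language-theoretic codiagnoser. The only (immaterial) difference is that your diagnoser sets $\vect{D}[i](\sigma)=1$ whenever $\sigma\notin\proj{i}(\nonfaulty^{\trace}(A))$, whereas the paper additionally intersects with $\proj{i}(\faulty_{\geq\Delta}^{\trace}(A))$; both satisfy the definition.
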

\begin{proof}\mbox{}
  \begin{itemize}
  \item Only if part. Assume equation~\eqref{eq-nsc-codiag} holds and
    $A$ is $(\Delta,\calE)$-codiagnosable.  Then there is a
    codiagnoser $\vect{D}=(D_1,\cdots,D_n)$ satisfying
    Definition~\ref{def-codiag}. For each $\varrho_i$ we must have
    $D_i(\proj{i}(\trace(\varrho_i)))=0$ because each $\varrho_i$ is non
    faulty. But we must also have for at least one index $i$,
    $D_i(\proj{i}(\trace(\varrho_i)))=D_i(\proj{i}(\trace(\varrho)))=1$
    because $\varrho$ is $\Delta$-faulty, which is impossible.
  \item If part. Assume $A$ is not $(\Delta,\calE)$-codiagnosable and
    $\forall \varrho
    \in \faulty_{\geq \Delta}(A)$, 
    equation~\eqref{eq-nsc-codiag} does not hold.  In this case, there is
    an index $1 \leq i \leq n$ \st:
  \begin{equation*}
       \forall \varrho' \in \nonfaulty(A),
       \quad
      \proj{i}(\trace(\varrho)) \neq \proj{i}(\trace(\varrho'))\mathpunct.
  \end{equation*}
  Define $D_i(w)=1$ when $w \in \proj{i}(\faulty_{\geq
    \Delta}^{\trace}(A)) \setminus \proj{i}(\nonfaulty^{\trace}(A))$
  and $0$ otherwise.  Then $\vect{D}=(D_1,\cdots,D_n)$ is a
  $\Delta$-codiagnoser for $A$.  Indeed, let $\varrho \in
  \nonfaulty(A)$. Then $\proj{i}(\trace(\varrho)) \in
  \proj{i}(\nonfaulty^{\trace}(A))$ and this implies that
  $D_i(\proj{i}(\trace(\varrho)))=0$.  Let $\varrho \in \faulty_{\geq
    \Delta}(A)$ and assume $D_i(\proj{i}(\trace(\varrho)))=0$ for each
  $1 \leq i \leq n$.  By definition of $D_i$ we must have
  $\proj{i}(\trace(\varrho)) \in \proj{i}(\nonfaulty^{\trace}(A))$. In
  this case, there is some run $\varrho_i \in \nonfaulty(A)$ \st
  $\proj{i}(\trace(\varrho))=\proj{i}(\trace(\varrho_i))$ and thus
  equation~\eqref{eq-nsc-codiag} holds which contradicts the initial
  assumption. \qed
  \end{itemize}
\end{proof}
Using Lemma~\ref{lem-nsc}, we obtain a language based characterisation
of codiagnosability extending the one given
in~\cite{tripakis-02,cassez-fi-08}.
Let $\proj{i}^{-1}(X)=\{ w \in \tw^*(\Sigma) \ | \ \proj{i}(w)
\in X\}$.
\begin{lemma}\label{lem-2}
  $A$ is  $(\Delta,\calE)$-codiagnosable if and only if
  \begin{equation}
    \label{eq-nsc-lang-codiag}
    \faulty_{\geq
      \Delta}^{\trace}(A)  \cap \biggl ( 
    \ \bigcap_{i=1}^n \proj{i}^{-1} \bigl( \proj{i}(\nonfaulty^{\trace}(A))   \bigl) 
    \biggr) = \emptyset \mathpunct.
  \end{equation}
\end{lemma}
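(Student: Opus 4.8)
The plan is to prove the contrapositive: I will show that $A$ is not $(\Delta,\calE)$-codiagnosable if and only if the left-hand side of~\eqref{eq-nsc-lang-codiag} is non-empty, and then appeal to Lemma~\ref{lem-nsc} to conclude. First I would unfold the two trace-language operations appearing in the intersection. Recall that $\faulty_{\geq \Delta}^{\trace}(A) = \Trace(\faulty_{\geq \Delta}(A))$ and $\nonfaulty^{\trace}(A) = \Trace(\nonfaulty(A))$, so a timed word $w$ belongs to $\faulty_{\geq \Delta}^{\trace}(A)$ exactly when $w = \trace(\varrho)$ for some run $\varrho \in \faulty_{\geq \Delta}(A)$.

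Next I would analyse membership in each inverse-projection set. By definition $\proj{i}^{-1}(X) = \{ w \mid \proj{i}(w) \in X\}$, hence $w \in \proj{i}^{-1}(\proj{i}(\nonfaulty^{\trace}(A)))$ holds if and only if $\proj{i}(w) \in \proj{i}(\nonfaulty^{\trace}(A))$, i.e. if and only if there is a non-faulty run $\varrho_i \in \nonfaulty(A)$ with $\proj{i}(w) = \proj{i}(\trace(\varrho_i))$. Intersecting over all $i$ and combining with the faulty-trace condition, a word $w$ lies in the left-hand side of~\eqref{eq-nsc-lang-codiag} precisely when there is a run $\varrho \in \faulty_{\geq \Delta}(A)$ with $w = \trace(\varrho)$ and, for every $1 \leq i \leq n$, a run $\varrho_i \in \nonfaulty(A)$ with $\proj{i}(\trace(\varrho)) = \proj{i}(\trace(\varrho_i))$.

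This is exactly condition~\eqref{eq-nsc-codiag} of Lemma~\ref{lem-nsc}, witnessed by the common faulty run $\varrho$ together with the family $(\varrho_i)_{1 \leq i \leq n}$. Consequently the intersection in~\eqref{eq-nsc-lang-codiag} is non-empty if and only if the characterisation of non-codiagnosability in Lemma~\ref{lem-nsc} holds; by contraposition the intersection is empty if and only if $A$ is $(\Delta,\calE)$-codiagnosable, which is the claim. The argument is purely a matter of unfolding definitions, so I do not expect a genuine obstacle. The only point requiring care is the quantifier structure: the witness $w = \trace(\varrho)$ must be a \emph{single} faulty trace serving all indices simultaneously, whereas each index $i$ is allowed its own non-faulty witness $\varrho_i$. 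This matches exactly the $\exists \varrho\,\forall i\,\exists \varrho_i$ shape of~\eqref{eq-nsc-codiag}, and keeping the scopes of the quantifiers aligned through the definitional unfolding is the one step I would write out carefully.
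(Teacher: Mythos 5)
Your proposal is correct and follows essentially the same route as the paper's own proof: both unfold the definitions of $\faulty_{\geq \Delta}^{\trace}(A)$, $\nonfaulty^{\trace}(A)$ and $\proj{i}^{-1}$ to show that non-emptiness of the intersection in~\eqref{eq-nsc-lang-codiag} is precisely condition~\eqref{eq-nsc-codiag} of Lemma~\ref{lem-nsc}, and then conclude by contraposition. Your explicit remark about the $\exists \varrho\,\forall i\,\exists \varrho_i$ quantifier alignment is exactly the point the paper's two-direction argument implicitly handles, so there is no gap.
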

\begin{proof}
  Assume equation~\ref{eq-nsc-lang-codiag} does not hold and let $w
  \in \faulty_{\geq \Delta}^{\trace}(A)$, and for each $1 \leq i \leq
  n$, $w \in \proj{i}^{-1} \bigl( \proj{i}(\nonfaulty^{\trace}(A))
  \bigl)$.  This implies that:
  \begin{itemize}
  \item $\exists \varrho \in \faulty_{\geq \Delta}(A)$ \st
    $\trace(\varrho)=w$;
  \item for each $i$, $w \in \proj{i}^{-1} \bigl(
    \proj{i}(\nonfaulty^{\trace}(A)) \bigl)$ and $\proj{i}(w) \in
    \proj{i}(\nonfaulty^{\trace}(A))$.  Thus, there is a run
    $\varrho_i \in \nonfaulty(A)$, \st
    $\proj{i}(w)=\proj{i}(\trace(\varrho))=\proj{i}(\trace(\varrho_i))$
    and as equation~\eqref{eq-nsc-codiag} of Lemma~\ref{lem-nsc} is
    satisfied, $A$ is not $(\Delta,\calE)$-codiagnosable.
  \end{itemize}
  For the converse, assume $A$ is not $(\Delta,\calE)$-codiagnosable.
  By Lemma~\ref{lem-nsc}, equation~\eqref{eq-nsc-codiag} is satisfied
  and:
  \begin{itemize}
  \item there is a run $\varrho$ with $\trace(\varrho) \in
    \faulty_{\geq \Delta}^{\trace}(A)$;
  \item for each $i$, there is some $\varrho_i \in \nonfaulty(A)$ \st
    $\proj{i}(\trace(\varrho))=\proj{i}(\trace(\varrho_i))$.  Hence
    $\trace(\varrho) \in \proj{i}^{-1}(
    \proj{i}(\nonfaulty^{\trace}(A)))$ for each $i$,
  \end{itemize}
  and this implies that equation~\ref{eq-nsc-lang-codiag} does not
  hold. \qed
\end{proof}

\section{Algorithms for Codiagnosability Problems}\label{sec-algo}

\subsection{$(\Delta,\calE)$-Codiagnosability (Problem~\ref{prob-delta-codiag})}
\label{sec-delta-f}
Deciding Problem~\ref{prob-delta-codiag} amounts to checking whether
equation~\ref{eq-nsc-lang-codiag} holds or not.  Recall that
$A=(L,l_0,X,\Sigma_{\tauac,f},E,\inv)$.
Let $t$ be a fresh clock not in $X$.
\noindent Let $A^f(\Delta)=((L \times \{0,1\} )\cup \{Bad\}
,(l_0,0),X \cup \{t\},\Sigma_\tauac,E_f,\inv_f)$ with:
\begin{itemize}
\item $((\ell,n),g,\lambda,r,(\ell',n)) \in E_f$ if
  $(\ell,g,\lambda,r,\ell') \in E$, $\lambda \in \Sigma \cup
  \{\tauac\}$;
\item $((\ell,0),g,\tauac,r \cup \{t\},(\ell',1)) \in E_f$ if
  $(\ell,g,f,r,\ell') \in E$;
\item for $\ell \in L$, $((\ell,1),t \geq \Delta,\tauac,\emptyset,Bad)
  \in E_f$;
\item $\inv_f((\ell,n))=\inv(\ell)$.
\end{itemize} 
$A^f(\Delta)$ is similar to $A$ but when a fault occurs it switches to
a copy of $A$ (encoded by $n=1$). When sufficient time has elapsed in
the copy (more than $\Delta$ time units), location $\bad$ can be
reached.

The language accepted by $A^f(\Delta)$ with the set of final states
$\{\bad\}$ is thus $\lang^ *(A^f(\Delta))=\faulty_{\geq \Delta}^{tr}(A)$.
Define $A_i=(L,l_0,X_i,\Sigma_\tauac,E_i,\inv_i)$ with: 
\begin{itemize}
\item $X_i = \{x^i \ | \ x \in X\}$ (create copies of clocks of $A$);
\item $(\ell,g_i,\lambda,r_i,\ell') \in E_i$ if
  $(\ell,g,\lambda,r,\ell') \in E$, $\lambda \in \Sigma_i \cup
  \{\tauac\}$ with: $g_i$ is $g$ where the clocks $x$ in $X$ are
  replaced by their counterparts $x^i$ in $X_i$; $r_i$ is $r$ with the
  same renaming;
\item $(\ell,g_i,\tauac,r_i,\ell') \in E_i$ if
  $(\ell,g,\lambda,r,\ell') \in E$, $\lambda \in \Sigma \setminus \Sigma_i$
\item $\inv_i(\ell)=\inv(\ell)$ with clock renaming ($x^i$ in place of
  $x$).
\end{itemize}
Each $A_i$ accepts only non-faulty traces as the $f$-transitions are
not in $A_i$.  If the set of final locations is $L$
for each $A_i$, $\lang^*(A_i)=\proj{i}(\nonfaulty^{\trace}(A))$.  To
accept $\proj{i}^{-1} \bigl( \proj{i}(\nonfaulty^{\trace}(A))$ we add
transitions $(\ell,\true,\lambda,\emptyset,\ell)$ for each location
$\ell$ of $E_i$ and for each $\lambda \in \Sigma \setminus \Sigma_i$.
Let $A_i^\ast$ be the automaton on the alphabet $\Sigma$ constructed
this way. By definition of $A_i ^\ast$,
$\lang^*(A^\ast_i)=\proj{i}^{-1} \bigl(
\proj{i}(\nonfaulty^{\trace}(A))\bigr)$.

Define $\calB=A^f(\Delta) \times A^\ast_1 \times A^\ast_2 \times
\cdots \times A^\ast_n$ with the set of final locations $F_\calB= \{\bad\}
\times L \times \cdots \times L$. We let $R_\calB=\emptyset$. Using
equation~\ref{eq-nsc-lang-codiag} we obtain:
\begin{lemma}\label{thm-delta}
  $A$ is $(\Delta,\calE)$-codiagnosable \ssi $\lang^*(\calB)=\emptyset$.
\end{lemma}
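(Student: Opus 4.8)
The plan is to show that the language accepted by the product automaton $\calB$ is exactly the intersection appearing in equation~\eqref{eq-nsc-lang-codiag}, so that emptiness of $\lang^*(\calB)$ is equivalent to satisfaction of that equation, which by Lemma~\ref{lem-2} is equivalent to $(\Delta,\calE)$-codiagnosability. The argument therefore factors into two parts: first, identify the language of each factor automaton; second, apply Fact~\ref{fact-1} to combine them.

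First I would verify the language of each factor, most of which the text has already asserted. For $A^f(\Delta)$, I would argue that a run reaching $\bad$ must have taken the $\tauac$-transition that starts the clock $t$ (the copy of a genuine $f$-transition of $A$) and then let at least $\Delta$ time units elapse before firing the $t\geq\Delta$ edge into $\bad$; conversely every $\Delta$-faulty run of $A$ induces such a run. Since the trace erases $\tauac$ and $f$ and the $\bad$-edge is itself a $\tauac$-move, this yields $\lang^*(A^f(\Delta))=\faultytr_{\geq\Delta}(A)$. For each $A_i^\ast$, I would note that $A_i$ replaces every event outside $\Sigma_i$ by $\tauac$ and drops the $f$-transitions, so that with all locations accepting $\lang^*(A_i)=\proj{i}(\nonfaultytr(A))$; adding the self-loops $(\ell,\true,\lambda,\emptyset,\ell)$ for $\lambda\in\Sigma\setminus\Sigma_i$ allows the unobserved-by-$i$ letters to be reinserted at arbitrary positions, so $\lang^*(A_i^\ast)=\proj{i}^{-1}\bigl(\proj{i}(\nonfaultytr(A))\bigr)$, exactly as claimed in the construction.

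Next I would combine the factors. Because the clock sets $X\cup\{t\}$ and the $X_i$ are pairwise disjoint by construction, Fact~\ref{fact-1} applies and gives
\begin{equation*}
  \lang^*(\calB)=\lang^*(A^f(\Delta))\cap\bigcap_{i=1}^n\lang^*(A_i^\ast)
  =\faultytr_{\geq\Delta}(A)\cap\biggl(\bigcap_{i=1}^n\proj{i}^{-1}\bigl(\proj{i}(\nonfaultytr(A))\bigr)\biggr).
\end{equation*}
The right-hand side is precisely the set whose emptiness characterises $(\Delta,\calE)$-codiagnosability in equation~\eqref{eq-nsc-lang-codiag}. Hence $\lang^*(\calB)=\emptyset$ \ssi equation~\eqref{eq-nsc-lang-codiag} holds \ssi $A$ is $(\Delta,\calE)$-codiagnosable, which is the claim.

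The routine bookkeeping is the identification of the three language equalities; the only point deserving genuine care is the claim $\lang^*(A^f(\Delta))=\faultytr_{\geq\Delta}(A)$, since the definition of a $\Delta$-faulty run measures the elapsed \emph{duration} after the fault, and I must check that the fresh clock $t$ faithfully tracks exactly this duration and that the $\tauac$-relabelling of $f$ plus the $\bad$-edge does not alter the observable trace. For finite automata the same argument goes through with $t$ counting discrete steps rather than elapsed time, since there the duration of a run is its number of transitions. I expect the main obstacle to be nothing deep but rather making the timing/step correspondence precise enough that the equality, and not merely an inclusion, holds in both directions.
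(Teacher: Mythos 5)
Your proposal is correct and takes essentially the same route as the paper: disjoint clock sets let Fact~\ref{fact-1} identify $\lang^*(\calB)$ with $\lang^*(A^f(\Delta))\cap\bigl(\bigcap_{i=1}^n\lang^*(A_i^\ast)\bigr)$, the construction identifies these factor languages with $\faultytr_{\geq\Delta}(A)$ and $\proj{i}^{-1}\bigl(\proj{i}(\nonfaultytr(A))\bigr)$, and Lemma~\ref{lem-2} (equation~\eqref{eq-nsc-lang-codiag}) concludes. The only difference is presentational: you justify the factor-language identities inside the proof, whereas the paper asserts them as part of the construction preceding the lemma and keeps the proof itself to the Fact~\ref{fact-1} plus Lemma~\ref{lem-2} combination.
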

\begin{proof}
  The sets of clocks of the $A_i$'s and $A^f(\Delta)$ are disjoint:
  for each $1 \leq i < j \leq n$, $X_i \cap X_j = \emptyset$ and $X_i
  \cap X = \emptyset$.  It follows from Fact~\ref{fact-1} that
  $\lang^*(\calB)=\lang^*(A^f(\Delta)) \cap \bigl(\bigcap_{i=1}^n
  \lang^*(A_i^\ast)\bigr)$. By Lemma~\ref{lem-2} and the construction
  of $A^f(\Delta)$ and the $A_i$'s, the result follows. \qed
\end{proof}

The size of the input for problem~\ref{prob-delta-codiag} is
$|A|+\log \Delta+ n\cdot |\Sigma|$.  The size of $A^f(\Delta)$ is
(linear in) the size of $A$ and $\log \Delta$, \ie $O(|A| + \log
\Delta)$.  The size of $A_i^\ast$ is also bounded by the size of $A$.
If follows that $|A^f(\Delta)| + \sum_{i=1}^n|A_i^\ast|$ is bounded by
$(n+1)|A|$ and is polynomial in the size of the input of
problem~\ref{prob-delta-codiag}.  We thus have a polynomial reduction
from Problem~\ref{prob-delta-codiag} to the intersection emptiness
problem for TA. We can now establish the following result:

\begin{theorem}\label{thm-delta-codiag}
  Problem~\ref{prob-delta-codiag} is PSPACE-complete for Timed
  Automata.  It is already PSPACE-hard for Deterministic Finite
  Automata.
\end{theorem}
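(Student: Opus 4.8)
The plan is to establish the two halves separately: a PSPACE upper bound that holds for timed automata (hence a fortiori for finite automata), and a PSPACE lower bound that already holds for deterministic finite automata.

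For membership in PSPACE I would read off the work already done. Lemma~\ref{thm-delta} reduces Problem~\ref{prob-delta-codiag} to deciding whether $\lang^*(\calB)=\emptyset$, where $\calB=A^f(\Delta)\times A_1^\ast\times\cdots\times A_n^\ast$, and the size estimate just given shows that $|A^f(\Delta)|+\sum_{i=1}^n|A_i^\ast|$ is $O((n+1)|A|)$, i.e.\ polynomial in the input. This is a polynomial reduction to the intersection emptiness problem for TA (Problem~\ref{inter-emptiness}), which is PSPACE-complete; since PSPACE is closed under complementation, deciding emptiness of the intersection---and therefore $(\Delta,\calE)$-codiagnosability---is in PSPACE. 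As finite automata are timed automata with no clocks, the same bound covers the FA case.

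For the lower bound I would reduce from intersection emptiness for DFA in the sharpened form of Proposition~\ref{prop-iep}: we are given $A_1,\dots,A_n$ over an alphabet $\Gamma$, where $A_1$ has a single accepting state and $A_2,\dots,A_n$ have all states accepting, and we must decide whether $\bigcap_{i=1}^n\lang^*(A_i)=\emptyset$. The idea is to build one \emph{deterministic} plant $P$ and a family $\calE=(\Sigma_1,\dots,\Sigma_{n-1})$ of masks so that, via Lemma~\ref{lem-nsc}, $P$ fails to be $(1,\calE)$-codiagnosable exactly when the intersection is nonempty. The fault will encode $A_1$ and the $i$-th observer will encode $A_{i+1}$; crucially, the exponential product demanded by the hardness must emerge from the $n-1$ masks, so $P$ itself stays of size $O(\sum_i|A_i|)$. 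Introduce $n-1$ disjoint copies $\Gamma^{(1)},\dots,\Gamma^{(n-1)}$ of $\Gamma$, a symbol $z$, and private entry letters $m_0,m_1,\dots,m_{n-1}$ that are masked by \emph{every} observer; set $\Sigma_i=\Gamma^{(i)}\cup\{z\}$. From its initial state $P$ deterministically dispatches on the distinct letter $m_j$ into one of $n$ lanes. Lane $m_0$ (the faulty lane) simulates $A_1$ while emitting, for each input letter $a$, the block $a^{(1)}a^{(2)}\cdots a^{(n-1)}$; once $A_1$ is in its accepting state it fires the unobservable fault $f$ and then loops on the observable symbol $z$. Lane $m_i$ ($1\le i\le n-1$) simulates $A_{i+1}$, emits only its own copy $a^{(i)}$ of each letter, then loops on $z$, and never faults.

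Correctness hinges on Lemma~\ref{lem-nsc}. A $1$-faulty trace of $P$ has the shape $m_0\,b(w)\,z^p$ with $p\ge1$, where $b(w)$ is the diagonal block-encoding of a word $w$ driving $A_1$ to acceptance; under mask $\Sigma_i$ it is seen as $w^{(i)}z^p$. The only non-faulty runs whose $\Sigma_i$-projection can equal $w^{(i)}z^p$ are those of lane $m_i$ (every other lane shows observer $i$ only $z$'s, which cannot match a nonempty $w^{(i)}$), and such a run exists iff $A_{i+1}$ reads $w$---and since $A_{i+1}$ has all states accepting, reading equals accepting, so this says exactly $w\in\lang^*(A_{i+1})$. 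Hence all $n-1$ observers are fooled simultaneously iff $w\in\bigcap_{i\ge2}\lang^*(A_i)$, while the faulty trace exists iff $w\in\lang^*(A_1)$; thus $P$ is not $(1,\calE)$-codiagnosable iff $\bigcap_{i=1}^n\lang^*(A_i)\neq\emptyset$. I would place the fault \emph{after} the word (so partial runs of the faulty lane carry no $f$ and are harmless non-faulty runs with no $z$), keep the post-fault padding $z$ observable with $\Delta\ge1$ (so the relevant faulty traces are nonempty and cannot be matched by those $z$-free prefixes), and assume w.l.o.g.\ that no $A_i$ accepts the empty word (ruling out the trivial $z^p$-only witness). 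The main obstacle will be exactly this bookkeeping: simultaneously keeping $P$ deterministic---achieved by dispatching on the distinct, universally masked letters $m_j$ and by the fact that reading distinct input letters yields distinct observable symbols---keeping it polynomial so the intersection is forced out of the masks rather than precomputed inside $P$, and excluding the spurious short/empty confusions that otherwise make every plant non-codiagnosable. Combining the two halves yields PSPACE-completeness for TA and PSPACE-hardness already for deterministic FA.
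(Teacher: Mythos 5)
Your PSPACE upper bound is exactly the paper's argument: Lemma~\ref{thm-delta} plus the polynomial bound on $|A^f(\Delta)|+\sum_i|A_i^\ast|$ reduces Problem~\ref{prob-delta-codiag} to intersection emptiness for TA (Problem~\ref{inter-emptiness}), and the FA case is subsumed. Your lower bound follows the paper's skeleton as well --- reduce from Proposition~\ref{prop-iep}, build one deterministic plant that dispatches into lanes, encode $A_1$ in the lane that faults upon acceptance and $A_{i+1}$ in the lane that observer $i$ cannot tell apart from it --- but it departs from the paper in one detail, and that detail is exactly where the paper's own proof goes wrong. In the paper's plant $B$ (Figure~\ref{fig-reduc1}) the only behaviour after the fault is a loop on the \emph{unobservable} $\tauac$, so the $1$-faulty run with action sequence $\tauac.w.f.\tauac$ and its non-faulty prefix $\tauac.w$ (which stays in the $A_1$ lane and never fires $f$) have the same projection $w$ under every mask. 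By Lemma~\ref{lem-nsc}, $B$ then fails to be $(1,\calE)$-codiagnosable as soon as $\lang^*(A_1)\neq\emptyset$, whether or not $\bigcap_{i=1}^n\lang^*(A_i)$ is empty; the paper's case analysis (``it must be the case that $u_i=a_i.w_i$'') overlooks non-faulty runs beginning with $\tauac$ that remain in the $A_1$ lane. Your insistence on making the post-fault padding an \emph{observable} letter $z$ is precisely what repairs this: every $1$-faulty trace then ends in $z$, no pre-fault run of the faulty lane can exhibit a $z$, and fooling runs are forced into lane $m_i$, so the equivalence with intersection emptiness genuinely holds. So your proposal is not just correct; it is more careful than the published proof on the step that matters.

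Two minor points. Your ``w.l.o.g.\ no $A_i$ accepts $\varepsilon$'' cannot literally be arranged for $A_2,\dots,A_n$, since all their states are accepting in Proposition~\ref{prop-iep}; but it is only needed for $A_1$, and if $\varepsilon\in\lang^*(A_1)$ the intersection is trivially nonempty, so that case is disposed of by preprocessing. Also, the disjoint copies $\Gamma^{(i)}$ with block encoding are heavier than needed: a single shared alphabet where observer $i$ is blind only to the dispatch letter $a_i$ (the paper's device) achieves the same separation, since lanes $k\neq i$ then project to words starting with the visible $a_k$. Both variants are polynomial and both work.
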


\begin{proof}
  PSPACE-easiness follows from the polynomial reduction described
  above and Lemma~\ref{thm-delta}.  PSPACE-hardness is obtained by
  reducing the variant of the \emph{intersection emptiness problem}
  for DTA to the $(\Delta,\calE)$-codiagnosability problem.  This
  problem is PSPACE-hard (Proposition~\ref{prop-iep}).

  Let $A_i, 1 \leq i \leq n$, be $n$ deterministic finite automata over
  the alphabet $\Sigma$. Assume $A_1$ has one accepting state and
  for  $A_2,\cdots,A_{n}$ all states are accepting.

  We construct $B$ as shown on Figure~\ref{fig-reduc1}:
  $a_2,\cdots,a_{n}$ are fresh letters not in $\Sigma$; the target
  state of $a_i$ is the initial state of $A_i$.  The initial state of
  $B$ is $\iota$. Let $\Sigma_i=\Sigma \setminus \{a_i\}$ for each $2
  \leq i \leq n$. From the final state of $A_1$ there is a transition
  labeled $f$ to a new state $e$.
  
  We can prove that $B$ is $(1,\calE)$-diagnosable if and only if
  $\cap_{i=1}^n \lang^*(A_i) = \emptyset$ with $\calE=(\Sigma_i)_{1
    \leq i \leq n}$.  Assume $w \in \cap_{i=1}^n \lang^*(A_i) \neq
  \emptyset$. Take the run of trace $\tauac.w.f.\tauac$ in $B$. This
  run is $1$-faulty.  For each $2 \leq i \leq n$, there is a run of
  trace $a_i.w$ which is non faulty. Moreover, $\proj{i}(a_i.w)=w$ and
  thus $B$ is not $(1,\calE)$-codiagnosable.

  Now, assume $B$ is not $(1,\calE)$-codiagnosable. There is a
  $1$-faulty run, and this must be a run of trace $\tauac.w.f.\tauac$
  with $w \in \lang^*(A_1)$, and for each $2 \leq i \leq n$, there is
  a non-faulty run $\varrho_i$ the trace of which is $u_i$, with
  $\proj{i}(u_i)=w$.  It must be the case that $u_i=a_i.w_i$ as
  otherwise $\proj{i}(u_i)$ would start with $a_k, k \neq i$ and thus
  it would be impossible to have $\proj{i}(u_i)=w$. As $u_i=a_i.w_i$,
  $\proj{i}(u_i)=w_i=w$, and $w \in \lang^*(A_i)$, it follows that $w
  \in \cap_{i=1}^n \lang^*(A_i)$ and thus $\cap_{i=1}^n \lang^*(A_i)$
  is not empty.

  \noindent Finally $\cap_{i=1}^n \lang^*(A_i) \neq \emptyset$ if and
  only if $B$ is not $(1,\calE)$-codiagnosable. 

  The size of $B$ is in $O(\sum_{i=1}^n |A_i| + n)$ which is equal to
  $O(\sum_{i=1}^n |A_i|)$ as $|A_i| \geq 1$.  The size of the input
  for Problem~\ref{prob-delta-codiag} is thus $O(\sum_{i=1}^n |A_i|)+n
  \cdot (|\Sigma|+n))$ which is quadratic and thus polynomial in
  $\sum_{i=1}^n |A_i|$.

  The intersection emptiness problem for DTA is polynomially reducible
  to the $(\Delta,\calE)$-codiagnosability Problem and
  Problem~\ref{prob-delta-codiag} is PSPACE-hard for DTA.  \qed

\end{proof}
  
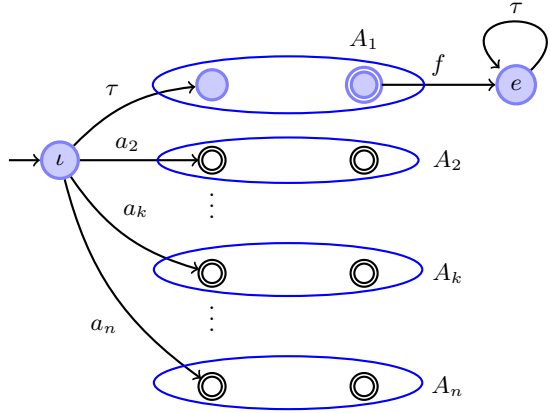
\begin{figure}[hbtp]
  \centering
  \begin{tikzpicture}[thick,node distance=1cm and 2cm]%
    \small
    \node[state,initial] (q_0) {$\iota$}; 
    \node[state] (q_1) [above right=of q_0] {};
    \node[draw,circle,accepting] (q_2) [below=of q_1] {};    
    \node[] (q_foo) [below=of q_2,yshift=.5cm] {$\vdots$};    
    \node[draw,circle,accepting] (q_k) [below=of q_foo] {};  
    \node[] (q_foo2) [below=of q_k,yshift=.5cm] {$\vdots$};      
    \node[draw,circle,accepting] (q_n) [below=of q_foo2] {};
    \node[state,accepting] (q_11) [right=of q_1] {};
    \node[draw,circle,accepting] (q_13) [right=of q_2] {};
    \node[draw,circle,accepting] (q_14) [right=of q_k] {};
    \node[draw,circle,accepting] (q_15) [right=of q_n] {};
    \node[state] (q_12) [right=of q_11] {$e$};
     \path[->] (q_0) edge[bend angle=20,bend left] node {$\tauac$} (q_1) 
     edge node[pos=0.4]  {$a_2$} (q_2)
     edge[bend angle=20,bend right] node[pos=0.4]  {$a_k$} (q_k)
     edge[bend angle=20,bend right] node[pos=0.6,swap]  {$a_n$} (q_n)
     (q_11) edge [pos=0.5] node  {$f$} (q_12)
     (q_12) edge[loop]  node[swap]  {$\tauac$} (q_12) ;

\node[] (A1) [above=of q_11,yshift=-.4cm] {$A_1$};
\node[] (A2) [right=of q_13,xshift=-.9cm] {$A_2$};
\node[] (A2) [right=of q_14,xshift=-.9cm] {$A_k$};
\node[] (A2) [right=of q_15,xshift=-.9cm] {$A_n$};
\node[draw=blue,inner sep=1pt,thick,ellipse,xshift=0cm,fit=(q_1) (q_11)] {}; 
\node[draw=blue,inner sep=1pt,thick,ellipse,xshift=0cm,fit=(q_2) (q_13)] {}; 
\node[draw=blue,inner sep=2pt,thick,ellipse,yshift=0.05cm,fit=(q_k) (q_14)] {}; 
\node[draw=blue,inner sep=2pt,thick,ellipse,xshift=0cm,yshift=.05cm,fit=(q_n) (q_15)] {}; 

  \end{tikzpicture}
\caption{Reduction for Theorem~\ref{prob-delta-codiag}: Automaton $B$}
\label{fig-reduc1}
\end{figure}

\subsection{$\calE$-Codiagnosability (Problem~\ref{prob-codiag})}
In this section we show how to solve the $\calE$-codiagnosability
problem.  The algorithm is a generalisation of the procedure for
deciding diagnosability of discrete event and timed systems
(see~\cite{cassez-cdc-09} for a recent presentation).

First notice that $A$ is not $\calE$-diagnosable if and only if for
all $\Delta \in \setN$, $A$ is not $(\Delta,\calE)$-diagnosable. 
For standard fault diagnosis (one diagnoser and $\calE=\{\Sigma\}$),
$A$ is not diagnosable if
there is an infinite faulty run in $A$
the projection of which is the same as the projection of a non-faulty
one~\cite{cassez-cdc-09}.

The procedure for checking diagnosability of FA and TA slightly differ
due to specific features of timed systems.  We recall here the
algorithms to check diagnosability of FA and
TA~\cite{cassez-cdc-09,tripakis-02} and extend them to
codiagnosability. 

\smallskip

\subsubsection{Codiagnosability for Finite Automata.}
To check whether a FA $A$ is diagnosable, we build a synchronized
product $A^f \times A_1$, \st $A^f$ behaves exactly like $A$ but
records in its state whether a fault has occurred, and $A_1$ behaves
like $A$ without the faulty runs (transitions labelled $f$ are cut
off).  This corresponds to $A^f(\Delta)$ defined in
section~\ref{sec-delta-f} without the clock $\Delta$.

%
A \emph{faulty run} in the product $A^f \times A_1$ is a run for which
$A^f$ reaches a faulty state of the form $(q,1)$. To decide whether
$A$ is diagnosable we build an extended version of $A^f \times A_1$
which is a B\"uchi automaton $\calB$~\cite{cassez-cdc-09}: $\calB$ has
a boolean variable $z$ which records whether $A^f$ participated in the
last transition fired by $A^f \times A_1$.  A state of $\calB$ is a
pair $(s,z)$ where $s$ is a state of $A^f \times A_1$.  $\calB$ is
given by the tuple $((Q \times \{0,1\} \times Q) \times
\{0,1\},((q_0,0),q_0,0),\Sigma_\tauac,\longrightarrow_{\calB},
\emptyset,R_{\calB})$ with:
\begin{itemize}
\item $(s,z) \xrightarrow{\ \lambda \ }_{\calB} (s',z')$ if $(i)$
  there exists a transition $t: s \xrightarrow{\ \lambda \ } s'$ in
  $A^f \times A_1$, and $(ii)$ $z'=1$ if $\lambda$ is a move of $A^f$
  and $z'=0$ otherwise;
\item $R_{\calB}=\{(((q,1),q'),1) \, | \, ((q,1),q') \in A^f \times
  A_1\}$.
\end{itemize}
The important part of the previous construction relies on the fact
that, for $A$ to be non $\Sigma$-diagnosable, $A^f$ should have an
infinite faulty run (and take infinitely many transitions)
and $A_1$
a corresponding non-faulty run (note that this one can be finite)
giving the same observation.
With the previous construction, we have~\cite{cassez-cdc-09}:
$A$ is diagnosable \ssi
$\lang^\omega(\calB) = \emptyset$.

The construction for codiagnosability is an extension of the previous
one adding $A_2,\cdots,A_n$ to the product.  Let $\calB^{co}=A^f
\times A_1 \times \cdots \times A_n$ with $A_i$ defined in
section~\ref{sec-delta-f}. In $\calB^{co}$ we again use the variable
$z$ to indicate whether $A^f$ participated in the last move.  Define
the set of repeated states of $\calB^{co}$ by:
$R_{\calB^{co}}=\{(((q,1),\vect{q}),1) \, | \, ((q,1),\vect{q}) \in
A^f \times A_1 \times \cdots \times A_n\}$.  By construction, a state
in $R_{\calB^{co}}$ is: (1) faulty as it contains a component $(q,1)$
for the state of $A^f$ and (2) $A^f$ participated in the last move as
$z=1$.  It follows that:
\begin{lemma}\label{lem-3}
  $A$ is $\calE$-codiagnosable \ssi $\lang^\omega(\calB^{co}) =
  \emptyset$.
\end{lemma}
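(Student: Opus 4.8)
The plan is to connect the Büchi acceptance condition on $\calB^{co}$ to the characterisation of non-$\calE$-codiagnosability obtained from Lemma~\ref{lem-nsc}. Recall that $A$ is not $\calE$-codiagnosable iff for every $\Delta\in\setN$, $A$ is not $(\Delta,\calE)$-codiagnosable, which by Lemma~\ref{lem-nsc} means that for every $\Delta$ there is a $\Delta$-faulty run $\varrho$ and, for each $1\leq i\leq n$, a non-faulty run $\varrho_i$ with $\proj{i}(\trace(\varrho))=\proj{i}(\trace(\varrho_i))$. Since $\faulty_{\geq\Delta'}(A)\subseteq\faulty_{\geq\Delta}(A)$ for $\Delta'\geq\Delta$, the natural reformulation (as in the standard single-diagnoser case~\cite{cassez-cdc-09}) is that $A$ is not $\calE$-codiagnosable iff there is an \emph{infinite} faulty run $\varrho$ taking infinitely many transitions, together with $n$ non-faulty runs $\varrho_1,\dots,\varrho_n$ that match $\varrho$ under the respective projections $\proj{i}$. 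The reason infinitely many transitions of the faulty part are required is that this is exactly what forces the fault to be followed by unboundedly much ``time'' (number of discrete steps for FA), so that $\varrho$ is $\Delta$-faulty for every $\Delta$.

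First I would prove the forward direction. Suppose $\lang^\omega(\calB^{co})\neq\emptyset$. Then there is an infinite run of $\calB^{co}$ visiting $R_{\calB^{co}}$ infinitely often. By the definition of $R_{\calB^{co}}$, every such visit witnesses a state whose $A^f$-component is faulty (the flag $n=1$) and in which $A^f$ participated in the last move ($z=1$). Infinitely many such visits therefore mean that the projection of the run onto the $A^f$-component is an infinite faulty run taking infinitely many transitions; projecting onto each $A_i$-component yields a non-faulty run $\varrho_i$ (the $A_i$ have no $f$-transitions). By Fact~\ref{fact-1} and the synchronisation discipline of the product, these runs agree on observations: $\proj{i}(\trace(\varrho))=\proj{i}(\trace(\varrho_i))$ for each $i$. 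Hence for every $\Delta$ the faulty run is eventually $\Delta$-faulty, so equation~\eqref{eq-nsc-codiag} holds for all $\Delta$, and $A$ is not $\calE$-codiagnosable.

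For the converse, suppose $A$ is not $\calE$-codiagnosable. Then for each $\Delta$ there is a $\Delta$-faulty run matched by $n$ non-faulty runs under the projections. I would extract from this an infinite faulty run taking infinitely many transitions together with matching non-faulty runs; in the single-diagnoser setting this is the standard argument (one uses a pumping/König-type argument on the finite product to turn ``$\Delta$-faulty for every $\Delta$'' into a genuine infinite faulty run that loops through faulty states infinitely often while staying matched). This infinite family of compatible runs assembles into an infinite run of the product $A^f\times A_1\times\cdots\times A_n$, which lifts to a run of $\calB^{co}$; because the faulty component keeps moving infinitely often, infinitely many transitions are $A^f$-moves from faulty states, so a repeated state in $R_{\calB^{co}}$ is visited infinitely often and $\lang^\omega(\calB^{co})\neq\emptyset$.

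The main obstacle is precisely the extraction step in the converse: turning the quantified-over-$\Delta$ statement of Lemma~\ref{lem-nsc} into one infinite faulty run with infinitely many transitions that is \emph{simultaneously} matched in all $n$ components. The cleanest route is to work inside the finite product $\calB^{co}$ (or its region graph for the timed case, which by Theorem~\ref{thm-alur} preserves the relevant untimed language) and apply a finiteness argument: since the number of control states is finite, an infinite sequence of ever-longer compatible $\Delta$-faulty prefixes must, by König's lemma, have an infinite compatible limit, and this limit necessarily revisits some faulty product-state carrying $z=1$ infinitely often. For timed automata I would additionally appeal to the fact that the existence of an infinite run visiting a repeated region infinitely often is exactly the B\"uchi acceptance condition on $\rg(\calB^{co})$, so the untimed witness lifts back to a genuine infinite timed faulty run of unbounded duration.
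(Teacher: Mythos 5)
Your overall strategy---characterising non-$\calE$-codiagnosability by the existence of an infinite faulty run of $A^f$ with infinitely many post-fault moves, matched under each $\proj{i}$ by a non-faulty run of $A_i$---is exactly the argument the paper intends (the paper gives no explicit proof of Lemma~\ref{lem-3}, presenting it as the $n$-diagnoser extension of the single-diagnoser result of~\cite{cassez-fi-08,cassez-cdc-09}), and your forward direction is correct. The gap is in the converse, in precisely the extraction step you flag as the main obstacle. The K\"onig's-lemma argument you ultimately commit to---``the prefixes of ever-longer $\Delta$-faulty witnesses form an infinite, finitely branching tree, hence have an infinite limit, and this limit necessarily revisits some faulty product-state carrying $z=1$ infinitely often''---is false as stated. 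An infinite path in that tree need not contain a fault at all: if $A$ has a $\tauac$-self-loop on its initial location and the witness for each $\Delta$ has $A^f$-component $\tauac^\Delta\, f\, \tauac^\Delta$, then every $\tauac^k$ is a prefix of some witness, so the tree contains the infinite non-faulty path $\tauac^\omega$, and K\"onig's lemma may return exactly that path, which never visits $R_{\calB^{co}}$. The ``necessarily'' is the whole content of the converse, and it does not follow from finiteness of the state space alone.

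The correct extraction is the pumping argument you mention only in passing, and it is the one the paper itself spells out later (proof of Proposition~\ref{prop-des-max}): instantiate Lemma~\ref{lem-nsc} with a \emph{single} sufficiently large $\Delta$, namely any $\Delta$ exceeding the number of states of $\calB^{co}$ (bounded by $4\cdot|A|^n$). The resulting witness is one finite run of $\calB^{co}$ whose post-fault segment contains at least $\Delta$ moves in which $A^f$ participates, hence at least $\Delta$ visits to states of $R_{\calB^{co}}$; by pigeonhole two of these visits hit the same state, and pumping the cycle between them yields an infinite run of $\calB^{co}$ visiting $R_{\calB^{co}}$ infinitely often, i.e.\ $\lang^\omega(\calB^{co})\neq\emptyset$. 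This needs no limit over $\Delta$ and no K\"onig's lemma. (Also note that Lemma~\ref{lem-3} concerns only finite automata---the timed case is handled separately via $\calD^{co}$ and $\dive(x)$---so your closing remarks about the region graph and Zeno runs, while not wrong, do not belong to this proof.)
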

\begin{theorem}
  Problem~\ref{prob-codiag} is PSPACE-complete for DFA.
\end{theorem}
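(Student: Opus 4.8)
The plan is to establish both membership in PSPACE and PSPACE-hardness for the $\calE$-codiagnosability problem on DFA.

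For PSPACE-easiness, I would rely on Lemma~\ref{lem-3}, which reduces $\calE$-codiagnosability of $A$ to checking whether $\lang^\omega(\calB^{co}) = \emptyset$, where $\calB^{co}=A^f \times A_1 \times \cdots \times A_n$ is a B\"uchi automaton with the repeated set $R_{\calB^{co}}$. The first step is to observe that the size of each component is linear in $|A|$, so the product $\calB^{co}$ has a state space of size $O(|A|^{n+1})$, which is exponential in the input but whose individual states can be represented in polynomial space (a tuple of $n+1$ locations plus the boolean $z$). The emptiness of a B\"uchi language amounts to detecting a reachable accepting state lying on a cycle. I would therefore guess a repeated state $((( q,1),\vect{q}),1) \in R_{\calB^{co}}$, verify it is reachable from the initial state, and verify it is reachable from itself along a nontrivial path, both checks being reachability queries in $\calB^{co}$. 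Each reachability check can be performed nondeterministically in polynomial space by guessing the path edge by edge without ever storing the whole product explicitly, so the whole procedure runs in NPSPACE; Savitch's Theorem then yields PSPACE. This is exactly the style of argument used for the B\"uchi intersection emptiness problem (Theorem~\ref{thm-buchi-kozen-77}).

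For PSPACE-hardness, the natural approach is to reuse the reduction already developed in Theorem~\ref{thm-delta-codiag}. There we showed that the variant intersection emptiness problem for DFA (PSPACE-hard by Proposition~\ref{prop-iep}) polynomially reduces to $(1,\calE)$-codiagnosability via the automaton $B$ of Figure~\ref{fig-reduc1}. The key observation I would make is that in that construction $B$ is $(1,\calE)$-codiagnosable if and only if $\cap_{i=1}^n \lang^*(A_i)=\emptyset$, and moreover that the witness to non-codiagnosability is a run of bounded delay. Since $\calE$-codiagnosability is the existence of \emph{some} $\Delta$ making $A$ $(\Delta,\calE)$-codiagnosable, I must verify that for this particular $B$ the two notions coincide: if $B$ is $\calE$-codiagnosable at all, it is already $(1,\calE)$-codiagnosable, because the faulty runs of $B$ all reach the self-looping $\tauac$ state $e$ immediately after $f$, so any failure of codiagnosability is exhibited within one step. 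Hence $B$ is $\calE$-codiagnosable iff $\cap_{i=1}^n \lang^*(A_i)=\emptyset$, giving the same polynomial reduction from a PSPACE-hard problem.

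The main obstacle I anticipate is the hardness direction, specifically justifying that $\calE$-codiagnosability and $(1,\calE)$-codiagnosability agree on $B$. One must argue carefully that unbounded delay gives the coordinator no additional power here: the infinite faulty run obtained by looping $\tauac$ at $e$ has the same projections as the finite non-faulty runs of trace $a_i.w$ for every $i$, so the necessary and sufficient condition of Lemma~\ref{lem-nsc} for non-codiagnosability is met for all $\Delta$ simultaneously whenever a common word $w$ exists. Conversely, if no common word exists, the $(1,\calE)$-codiagnoser already succeeds, and therefore so does an $\calE$-codiagnoser. Once this equivalence is secured, the reduction and its polynomial size bound carry over verbatim from Theorem~\ref{thm-delta-codiag}, and combining easiness with hardness yields that Problem~\ref{prob-codiag} is PSPACE-complete for DFA.
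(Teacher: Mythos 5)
Your proposal is correct and follows essentially the same route as the paper: PSPACE-easiness via Lemma~\ref{lem-3} and the B\"uchi emptiness argument of Theorem~\ref{thm-buchi-kozen-77}, and PSPACE-hardness by reusing the reduction of Theorem~\ref{thm-delta-codiag}, observing that the automaton $B$ of Fig.~\ref{fig-reduc1} is not $(\Delta,\calE)$-codiagnosable for \emph{any} $\Delta$ when the intersection is nonempty. Your careful justification that $\calE$-codiagnosability and $(1,\calE)$-codiagnosability coincide on $B$ (via the $\tauac$ self-loop at $e$ producing $\Delta$-faulty runs for every $\Delta$ with unchanged projections) is exactly the content the paper compresses into its one-line remark.
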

\begin{proof}
  PSPACE-easiness follows form the fact that checking whether
  $\lang^\omega(\calB^{co}) = \emptyset$ can be done in PSPACE
  (Theorem~\ref{thm-buchi-kozen-77}).  PSPACE-hardness follows from a
  reduction of Problem~\ref{inter-emptiness-dfa} to
  Problem~\ref{prob-codiag} using the same encoding as the one given
  in the proof of Theorem~\ref{thm-delta-codiag}: the automaton $B$ of
  Fig.~\ref{fig-reduc1} is not $(\Delta,\calE)$-codiagnosable for any
  $\Delta \in \setN$. \qed
\end{proof}

\subsubsection{Codiagnosability for Timed Automata.}
Checking diagnosability for timed automata requires an extra step in
the construction of the equivalent of automaton $\calB$ defined above:
indeed, for TA, a run having infinitely many discrete steps could well
be \emph{zeno}, \ie the duration of such a run can be finite.  This
extra step in the construction was first presented
in~\cite{tripakis-02}. It can be carried out by adding a special timed
automaton $\dive(x)$ and synchronizing it with $A^f \times A_1$. Let
$x$ be a fresh clock not in $X$. Let
$\dive(x)=(\{0,1\},0,\{x\},E,\inv)$ be the TA given in
Fig.~\ref{fig-dive}.
\begin{figure}[hbtp]
\centering
  \begin{tikzpicture}[thick,node distance=1cm and 4cm,bend angle=20]%
    \small
    \node[state,initial] (q_0) [label=-93:{$[x \leq 1]$}] {$0$}; 
    \node[state] (q_1) [right=of q_0,label=-87:{$[x \leq 1]$}] {$1$};
    \path[->] (q_0) edge [bend left] node[pos=0.5] {$x=1$; $\tauac$; $x:=0$} (q_1) 
              (q_1) edge [bend left] node[pos=0.5] {$x=1$; $\tauac$; $x:=0$} (q_0);
  \end{tikzpicture}
\caption{Timed Automaton $\dive(x)$}
 \label{fig-dive}  
\end{figure}
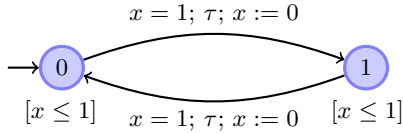
If we use $F=\emptyset$ and $R=\{1\}$ for $\dive(x)$, any accepted run
is \emph{time divergent} and thus cannot be zeno.  Let $\calD=A^f
\times \dive(x) \times A_1$ and let $F_{\calD}=\emptyset$ and
$R_{\calD}$ be the set of states where $A^f$ is in a faulty location
and $\dive(x)$ is in location~$1$.  For standard fault diagnosis, the
following holds~\cite{tripakis-02,cassez-cdc-09}: $A$ is diagnosable
\ssi $\lang^\omega(\calD)= \emptyset$.

The construction to check codiagnosability is obtained
by adding $A_2,\cdots,A_n$ in the product.  Let $\calD^{co}=A^f \times
\dive(x) \times A_1 \times \cdots$ $ \times A_n$.
\begin{lemma}
  $A$ is $\calE$-codiagnosable \ssi $\lang^\omega(\calD^{co})=
  \emptyset$.
\end{lemma}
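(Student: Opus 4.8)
The plan is to reduce the claim to the necessary and sufficient condition of Lemma~\ref{lem-nsc}, taking into account the universal quantification over $\Delta$ implicit in $\calE$-codiagnosability and the fact that, unlike for FA, an infinite faulty run need not have infinite duration. First I would observe that $A$ is \emph{not} $\calE$-codiagnosable \ssi $A$ is not $(\Delta,\calE)$-codiagnosable for every $\Delta \in \setN$; combined with Lemma~\ref{lem-nsc} and the monotonicity $\faulty_{\geq \Delta'}(A) \subseteq \faulty_{\geq \Delta}(A)$ for $\Delta' \geq \Delta$, this says that for each $\Delta$ there is a $\Delta$-faulty run $\varrho^\Delta$ and, for each $i$, a non-faulty run $\varrho_i^\Delta$ with $\proj{i}(\trace(\varrho^\Delta))=\proj{i}(\trace(\varrho_i^\Delta))$. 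I would then reformulate this as a single-witness statement: $A$ is not $\calE$-codiagnosable \ssi there exists one infinite faulty run $\varrho$ of $A$ whose suffix after the fault has infinite duration (\ie $\varrho$ is time divergent past the fault, hence non-zeno), together with non-faulty runs $\varrho_1,\dots,\varrho_n$ (each possibly finite) satisfying $\proj{i}(\trace(\varrho))=\proj{i}(\trace(\varrho_i))$ for every $i$.

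The easy direction of this reformulation is $(\Leftarrow)$: from such a $\varrho$ every bound $\Delta$ is met by a sufficiently long prefix, whose matching non-faulty prefixes recover the witnesses of Lemma~\ref{lem-nsc}. The delicate direction is $(\Rightarrow)$, where one must extract from the family $\{\varrho^\Delta\}_\Delta$ a single infinite run that is moreover time divergent after the fault. Here I would pass to the finite region graph $\rg$ of the product and argue by a pumping/K\"onig's lemma argument: infinitely many $\Delta$-faulty witnesses force a cycle, reachable after the fault, along which the matching non-faulty components can be sustained and in which at least one time unit elapses. This is exactly the property that the automaton $\dive(x)$ of Fig.~\ref{fig-dive} certifies: since it resets $x$ only at $x=1$ and oscillates between locations $0$ and $1$, visiting its location~$1$ infinitely often is equivalent to time diverging along the run. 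I expect this extraction --- separating genuine time-divergent witnesses from zeno runs --- to be the main obstacle, and it is precisely the point at which the TA construction differs from the FA one.

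Finally I would translate the single-witness statement into B\"uchi emptiness of $\calD^{co}=A^f \times \dive(x) \times A_1 \times \cdots \times A_n$. By the construction, $A^f$ records the fault so that a faulty location is never left once entered, and each $A_i$ carries no $f$-transition and synchronizes with $A^f$ on $\Sigma$ (matching $\Sigma_i$ and idling, via self-loops and the time-elapsing assumption, on $\Sigma \setminus \Sigma_i$), so that an infinite run of $\calD^{co}$ projects to an infinite faulty run of $A^f$ together with non-faulty runs $\varrho_i$ with $\proj{i}(\trace(\varrho))=\proj{i}(\trace(\varrho_i))$. The repeated set $R_{\calD^{co}}$ requires both that $A^f$ sits in a faulty location and that $\dive(x)$ is in location~$1$; visiting it infinitely often thus encodes simultaneously that the run is faulty forever and that time diverges past the fault. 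Using the $\omega$-language version of Fact~\ref{fact-1}, an accepting run of $\calD^{co}$ exists \ssi a witness as above exists, \ie \ssi $A$ is not $\calE$-codiagnosable. Taking the contrapositive yields that $A$ is $\calE$-codiagnosable \ssi $\lang^\omega(\calD^{co})=\emptyset$.
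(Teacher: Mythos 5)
The paper never actually proves this lemma: it is stated bare, as the ``add $A_2,\cdots,A_n$ to the product'' extension of the single-diagnoser result $A$ diagnosable \ssi $\lang^\omega(\calD)=\emptyset$, which is simply cited from~\cite{tripakis-02,cassez-cdc-09}. So your proposal supplies an argument the paper leaves implicit, and it follows exactly the route the paper intends: the characterization of Lemma~\ref{lem-nsc}, the universal quantification over $\Delta$ in the definition of $\calE$-codiagnosability, and the product with $\dive(x)$ to handle zenoness. Your main added content --- the region-graph pigeonhole/pumping step that converts the family of per-$\Delta$ witnesses into one infinite faulty run that is time-divergent after the fault, with matching non-faulty runs --- is sound, and it is essentially the same argument the paper deploys later for Proposition~\ref{prop-ta-max}; identifying that extraction as the point where TA genuinely differ from FA is correct.

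One step deserves a concrete caveat, though it is an imprecision you share with the paper rather than one you introduce. You assert that visiting $R_{\calD^{co}}$ infinitely often ``encodes \ldots{} that time diverges past the fault.'' This is true of $\dive(x)$ \emph{in isolation} (every infinite run of $\dive(x)$ alone must take its tick transitions, one time unit apart), but it fails in the product: a run may enter location~$1$ of $\dive(x)$ and then stall there, since the invariant $x \leq 1$ only bounds further delay and does not force the tick, while $A^f$ or some $A_i$ fires infinitely many $\tauac$-loops in bounded (even zero) time. Such a zeno run satisfies the state-based B\"uchi condition yet witnesses only a bounded delay after the fault. For instance, take $A$ with a fault leading to a location carrying a guardless $\tauac$ self-loop and an invariant $y \leq 2$ that forces an observable $a$ two time units after the fault: $A$ is diagnosable, yet the stalled run (fault at time $1.5$, then $\tauac^\omega$ in zero time with $\dive(x)$ parked in location~$1$) is accepted, so the direction ``$\lang^\omega(\calD^{co}) \neq \emptyset$ implies not $\calE$-codiagnosable'' breaks under the literal reading of the acceptance condition. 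The repair is to read the repeated condition transition-wise --- the run must \emph{take the edge entering} location~$1$ infinitely often while $A^f$ is faulty --- which is what the original construction of~\cite{tripakis-02} does, and also exactly what your pumping construction produces (the cycle you build contains tick edges, so its infinite iteration diverges). With that reading, your proof goes through; without it, neither your argument nor the paper's informal claim that ``any accepted run is time divergent'' is literally correct.
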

\begin{theorem}
  Problem~\ref{prob-codiag} is PSPACE-complete for TA.
\end{theorem}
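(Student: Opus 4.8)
The plan is to mirror the structure of the preceding finite-automata result, but now accounting for the PSPACE-easiness via the B\"uchi intersection emptiness result for timed automata (Theorem~\ref{buchi-inter-emptiness}) rather than Kozen's theorem. The statement to prove is that Problem~\ref{prob-codiag} is PSPACE-complete for TA, and the natural route is to establish PSPACE-easiness and PSPACE-hardness separately, leaning on the characterization $A$ is $\calE$-codiagnosable \ssi $\lang^\omega(\calD^{co})=\emptyset$ just proved in the preceding lemma.

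First I would handle PSPACE-easiness. The key observation is that $\calD^{co}=A^f \times \dive(x) \times A_1 \times \cdots \times A_n$ is a product of timed automata whose clock sets are pairwise disjoint (each $A_i$ uses a renamed copy $X_i$, $A^f$ uses $X \cup \{t\}$ minus the fault-delay clock which is absent here, and $\dive(x)$ uses the fresh clock $x$). By Fact~\ref{fact-1}, its B\"uchi language is the intersection $\bigcap$ of the individual B\"uchi languages, and deciding whether a $\calE$-codiagnosability instance is a \emph{no}-instance reduces to checking $\lang^\omega(\calD^{co}) \neq \emptyset$. Crucially, I must verify that the total size $|A^f| + |\dive(x)| + \sum_{i=1}^n |A_i^\ast|$ is polynomial in the input size $|A| + n \cdot |\Sigma|$: each $A_i$ and $A^f$ has size $O(|A|)$ and $\dive(x)$ is constant, so the sum is $O((n+1)|A|)$, which is polynomial. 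Hence this is a polynomial reduction from Problem~\ref{prob-codiag} to the B\"uchi intersection emptiness problem for TA, which is PSPACE-complete by Theorem~\ref{buchi-inter-emptiness}; since PSPACE is closed under complement, checking the \emph{yes}-instance (emptiness) is also in PSPACE.

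For PSPACE-hardness, I would observe that hardness is inherited directly from the finite-automata case, since FA are a special case of TA with $X=\emptyset$. More precisely, the reduction from the intersection emptiness problem already constructed in the proof of Theorem~\ref{thm-delta-codiag} (the automaton $B$ of Figure~\ref{fig-reduc1}) has the stronger property that $B$ is \emph{not} $(1,\calE)$-codiagnosable \ssi $\cap_{i=1}^n \lang^*(A_i) \neq \emptyset$, and in fact when $\cap_{i=1}^n \lang^*(A_i) \neq \emptyset$ the witnessing faulty run can be extended (via the $\tauac$-loop at state $e$) to an arbitrarily long faulty run whose observation is matched by non-faulty runs, so $B$ is not $(\Delta,\calE)$-codiagnosable for \emph{any} $\Delta$. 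Thus $B$ is $\calE$-codiagnosable \ssi $\cap_{i=1}^n \lang^*(A_i) = \emptyset$, giving a polynomial reduction from the PSPACE-hard intersection emptiness problem, exactly as in the DFA theorem.

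The main obstacle I anticipate is not the logical content but the size-and-encoding bookkeeping for the easiness direction: I must be careful that the counter $c$ used in the B\"uchi product construction (from Section~\ref{sec-product}, needed to combine the repeated sets across the $n+2$ component automata into a single B\"uchi condition) is represented in binary so that the whole region graph of $\calD^{co}$ stays encodable in polynomial space, and that the repeated set $R_{\calD^{co}}$ (faulty location of $A^f$ together with location~$1$ of $\dive(x)$) is correctly threaded through this product. Once the disjoint-clocks hypothesis and the polynomial size bound are checked, the easiness argument is a direct appeal to Theorem~\ref{buchi-inter-emptiness} and Savitch's theorem, exactly paralleling the timed intersection-emptiness proof.
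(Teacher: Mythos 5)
Your proposal is correct and follows essentially the same route as the paper: PSPACE-easiness via the lemma characterization $\lang^\omega(\calD^{co})=\emptyset$, the polynomial $O((n+1)\cdot|A|)$ size bound on the product, and the PSPACE algorithm for B\"uchi intersection emptiness for TA; PSPACE-hardness inherited from the finite-automata case through the automaton $B$ of Figure~\ref{fig-reduc1}, which is not $(\Delta,\calE)$-codiagnosable for any $\Delta$ when the intersection is nonempty. The extra bookkeeping you flag (binary counter in the B\"uchi product, closure of PSPACE under complement) is sound and merely makes explicit what the paper leaves implicit.
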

\begin{proof}
  The size of $\calD^{co}$ is in $O((n+1) \cdot |A|)$ and thus
  polynomial in the size of the input of Problem~\ref{prob-codiag}
  ($|A|+n \cdot |\Sigma|$).  PSPACE-easiness follows because the
  intersection emptiness problem for B\"uchi automata
  can be solved in PSPACE.  PSPACE-hardness holds because it is
  already PSPACE-hard for FA.  \qed
\end{proof}

\subsection{Optimal Delay (Problem~\ref{prob-delay})}
Using the results for checking $\calE$-codiagnosability and
$(\Delta,\calE)$-codiagnosability, we obtain algorithms for computing
the optimal delay.

Lemma~\ref{lem-3} reduces codiagnosability of FA to B\"uchi emptiness
on a product automaton.  The number of states of the automaton
$\calB^{co}$ is bounded by $4 \cdot |A|^n$, and the number of faulty
states by $2 \cdot |A|^n$.  This implies that:
\begin{proposition}\label{prop-des-max}
  Let $A$ be a finite automaton. If $A$ is $\calE$-codiagnosable, then
  $A$ is $(2 \cdot |A|^n,\calE)$-codiagnosable.
\end{proposition}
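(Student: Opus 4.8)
The plan is to prove the contrapositive: if $A$ is \emph{not} $(2\cdot|A|^n,\calE)$-codiagnosable, then $A$ is not $\calE$-codiagnosable, which by Lemma~\ref{lem-3} means $\lang^\omega(\calB^{co})\neq\emptyset$. First I would apply Lemma~\ref{lem-nsc} with $\Delta=2\cdot|A|^n$: failure of $(\Delta,\calE)$-codiagnosability produces a $\Delta$-faulty run $\varrho\in\faulty_{\geq\Delta}(A)$ together with non-faulty runs $\varrho_1,\dots,\varrho_n\in\nonfaulty(A)$ satisfying $\proj{i}(\trace(\varrho))=\proj{i}(\trace(\varrho_i))$ for every $i$. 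Since the projections agree on each $\Sigma_i$, this family lifts, exactly as in the construction preceding Lemma~\ref{lem-3}, to a single finite run $\rho$ of $\calB^{co}=A^f\times A_1\times\cdots\times A_n$ whose $A^f$-component follows $\varrho$ and whose $i$-th component follows $\varrho_i$.

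Next I would exploit the counting built into $\calB^{co}$. For finite automata durations count discrete steps, so $\varrho$ performs the fault move and then at least $2\cdot|A|^n$ further steps. Every step of $\varrho$ is a move in which $A^f$ participates (an observable step synchronizes $A^f$ with the relevant $A_i$, while a $\tauac$- or $f$-step is an independent $A^f$-move), and each such move sets $z=1$. The fault move drives $A^f$ into a faulty location with $z=1$, and each of the subsequent $\geq 2\cdot|A|^n$ steps keeps $A^f$ faulty while again setting $z=1$; hence after the fault the run $\rho$ enters states that are faulty with $z=1$, i.e.\ states of $R_{\calB^{co}}$, at least $2\cdot|A|^n+1$ times. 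Since $R_{\calB^{co}}$ is contained in the set of faulty states of $\calB^{co}$, whose cardinality is at most $2\cdot|A|^n$, the pigeonhole principle forces two of these occurrences to be the very same state $s\in R_{\calB^{co}}$.

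The segment of $\rho$ strictly between these two identical occurrences of $s$ is then a cycle of the finite automaton $\calB^{co}$ returning to $s$, and by the previous count it contains at least one $A^f$-move, hence visits $R_{\calB^{co}}$. I would therefore form the infinite run obtained by following $\rho$ up to the first occurrence of $s$ and iterating this cycle forever. This is a genuine infinite run of $\calB^{co}$ that visits $R_{\calB^{co}}$ infinitely often, so $\lang^\omega(\calB^{co})\neq\emptyset$. By Lemma~\ref{lem-3}, $A$ is not $\calE$-codiagnosable, which is the contrapositive of the statement.

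I expect the main obstacle to be the bookkeeping of the first paragraph: making precise that the matching family $\varrho,\varrho_1,\dots,\varrho_n$ lifts to a single run of $\calB^{co}$ with the stated component behaviour, and—crucially for the count—verifying that every post-fault step of $\varrho$ really corresponds to a move in which $A^f$ participates, so that $z=1$ and a state of $R_{\calB^{co}}$ is entered, rather than an independent unobservable move of some $A_i$ (for which $z=0$). Once this correspondence is pinned down, the pigeonhole and pumping steps are routine.
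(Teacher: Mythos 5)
Your proposal is correct and is essentially the paper's own argument, just stated contrapositively: the paper assumes $\lang^\omega(\calB^{co})=\emptyset$ and rules out faulty runs longer than $2\cdot|A|^n$ by the same pigeonhole-and-pumping step on the faulty states of $\calB^{co}$, whereas you start from a violation of $(2\cdot|A|^n,\calE)$-codiagnosability, lift it to a long faulty run of $\calB^{co}$, and pump to contradict Lemma~\ref{lem-3}. Your extra bookkeeping (the lift via Lemma~\ref{lem-nsc} and the check that every post-fault step is an $A^f$-move with $z=1$, so a state of $R_{\calB^{co}}$ is visited) makes explicit exactly what the paper's one-line proof leaves implicit.
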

\begin{proof}
  If $\lang(\calB^{co})=\emptyset$ there cannot be a faulty run of
  length more than $2 \cdot |A|^n$ otherwise at least one faulty state
  $s$ will be encountered twice on this run, and in this case we could
  construct an infinite faulty run which contradicts the fact that
  $\lang(\calB^{co})=\emptyset$.  \qed
\end{proof}
From Proposition~\ref{prop-des-max}, we can conclude that:
\begin{theorem}
  Problem~\ref{prob-delay} can be solved in PSPACE for FA.
\end{theorem}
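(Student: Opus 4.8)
The plan is to combine the two main algorithmic results already established for finite automata: the characterization of $\calE$-codiagnosability via B\"uchi emptiness (Lemma~\ref{lem-3}) and the bound on the optimal delay (Proposition~\ref{prop-des-max}). The theorem asserts that Problem~\ref{prob-delay} --- computing the minimum $\Delta$ such that $A$ is $(\Delta,\calE)$-codiagnosable --- can be solved in PSPACE for finite automata. I would approach this by first checking $\calE$-codiagnosability, and then, if $A$ is codiagnosable, searching for the smallest valid $\Delta$ within the range guaranteed by Proposition~\ref{prop-des-max}.

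First I would run the decision procedure of Lemma~\ref{lem-3}: test whether $\lang^\omega(\calB^{co}) = \emptyset$, which is decidable in PSPACE by Theorem~\ref{thm-buchi-kozen-77}. If this language is nonempty, then $A$ is not $\calE$-codiagnosable and there is no finite optimal delay, so the procedure reports failure. If it is empty, then $A$ is $\calE$-codiagnosable, and Proposition~\ref{prop-des-max} guarantees that $A$ is $(2\cdot|A|^n,\calE)$-codiagnosable. Hence the optimal delay $\Delta^\ast$ lies in the interval $\{0,1,\dots,2\cdot|A|^n\}$, an upper bound that is at most exponential in the size of the input and therefore representable in polynomially many bits.

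Next I would locate $\Delta^\ast$ by binary search over this interval. Each probe with a candidate value $\Delta$ amounts to deciding Problem~\ref{prob-delta-codiag}, namely whether $A$ is $(\Delta,\calE)$-codiagnosable; by Theorem~\ref{thm-delta-codiag} this is in PSPACE, and crucially its input size is $|A| + \log\Delta + n\cdot|\Sigma|$, so even a value $\Delta$ as large as $2\cdot|A|^n$ contributes only $O(n\log|A|)$ bits and keeps each probe polynomially bounded. Since $(\Delta,\calE)$-codiagnosability is monotone in $\Delta$ --- if $A$ is $(\Delta,\calE)$-codiagnosable it is $(\Delta',\calE)$-codiagnosable for every $\Delta'\geq\Delta$, which follows because $\faulty_{\geq\Delta'}(A)\subseteq\faulty_{\geq\Delta}(A)$ --- binary search correctly identifies the least feasible $\Delta^\ast$ using $O(\log(2\cdot|A|^n)) = O(n\log|A|)$ probes. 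Each probe needs polynomial space, and since PSPACE is closed under such polynomially-bounded sequential composition (reusing the same space across probes), the whole search runs in PSPACE.

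The main obstacle to watch is ensuring that the exponential size of the search bound $2\cdot|A|^n$ never forces superpolynomial space. This is handled precisely because $\Delta$ enters every subroutine only through its binary encoding $\log\Delta$: the bound is exponential in magnitude but only polynomial in bit-length, so both the binary-search counter and each invocation of the $(\Delta,\calE)$-codiagnosability test stay within polynomial space. The monotonicity justification is the only genuinely conceptual step; the rest is the standard observation that a PSPACE oracle invoked polynomially many times, with polynomial-size queries, yields a PSPACE procedure.
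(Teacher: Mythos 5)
Your proposal is correct and follows essentially the same route as the paper: check $\calE$-codiagnosability in PSPACE, then binary search for the optimal delay starting from the bound $2\cdot|A|^n$ of Proposition~\ref{prop-des-max}, with each probe an instance of Problem~\ref{prob-delta-codiag} whose input stays polynomial because $\Delta$ is encoded in binary. Your additional remarks on monotonicity of $(\Delta,\calE)$-codiagnosability and on space reuse across probes merely make explicit what the paper leaves implicit.
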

\begin{proof}
  Checking whether $A$ is $\calE$-codiagnosable can be done in PSPACE.
  If the result is ``yes'', we can do a binary search for the optimal
  delay: start with $\Delta=2 \cdot |A|^n$, and check whether $A$ is
  $(\Delta,\calE)$-codiagnosable.  If ``yes'', divide $\Delta$ by $2$
  and so on.  The encoding of $2 \cdot |A|^n$ has size $O(n \cdot \log
  |A|)$ and thus is polynomial in the size of the inputs of
  Problem~\ref{prob-delay}.  \qed
\end{proof}

\smallskip

\noindent For timed automata, a similar reasoning can be done on the
region graph of $\calD^{co}$.  If a TA $A$ is $\calE$-codiagnosable,
there cannot be any cycle with faulty locations in $\rg(\calD^{co})$.
Otherwise there would be a non-zeno infinite word in
$\lang(\calD^{co})$ and thus an infinite time-diverging faulty run in
$A$, with corresponding non-faulty runs in each $A_i$, giving the same
observation.  Let $K$ be the size of $\rg(\calD^{co})$.  If $A$ is
$\calE$-codiagnosable, then a faulty state in $\rg(\calD^{co})$ can be
followed by at most $K$ states. Otherwise a cycle in the region graph
would occur and thus $\lang^\omega(\calD^{co})$ would not be empty.  This
also implies that all the states $(s,r)$ in $\rg(\calD^{co})$ that can
follow a faulty state must have a \emph{bounded} region.  As the
amount of time that can elapse in one region is at most $1$ time
unit\footnote{The constants in the automata are integers.}, the
maximum duration of a faulty run in $\calD^{co}$ is bounded by $K$.
This implies that:
\begin{proposition}\label{prop-ta-max}
  Let $A$ be a timed automaton. If $A$ is $\calE$-codiagnosable, then
  $A$ is $(K,\calE)$-codiagnosable with $K=|\rg(\calD^{co})|$.
\end{proposition}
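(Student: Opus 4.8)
The plan is to mirror the finite-automata argument of Proposition~\ref{prop-des-max}, but carried out on the region graph $\rg(\calD^{co})$ rather than on $\calB^{co}$ itself, so that the bound becomes a bound on \emph{duration} rather than on the number of discrete steps. First I would recall, from Lemma~6 (the timed codiagnosability characterisation), that $A$ is $\calE$-codiagnosable \ssi $\lang^\omega(\calD^{co})=\emptyset$, and apply Theorem~\ref{thm-alur} to transfer this to the region graph: $\lang^\omega(\calD^{co})=\emptyset$ \ssi $\rg(\calD^{co})$ has no reachable accepting (B\"uchi) cycle, \ie no reachable cycle passing through a state in $R_{\calD^{co}}$ (a faulty location of $A^f$ together with location~$1$ of $\dive(x)$). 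The key observation is that the $\dive(x)$ component forces any such cycle to correspond to a genuinely time-divergent, hence non-zeno, faulty run of $A$ with matching non-faulty runs in every $A_i$.

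The heart of the argument is the contrapositive, exactly as in Proposition~\ref{prop-des-max}. Assuming $A$ is $\calE$-codiagnosable, I would argue that no faulty run of $\calD^{co}$ can have duration exceeding $K=|\rg(\calD^{co})|$. The reasoning proceeds in two steps. First, after the fault has occurred, the reachable portion of $\rg(\calD^{co})$ lying \emph{beyond} a faulty state must be acyclic: if some faulty state could be revisited, or if any cycle were reachable from a faulty state, then the $\dive(x)$ gadget would let us pump that cycle into an infinite time-divergent accepting run, giving $\lang^\omega(\calD^{co})\neq\emptyset$ and contradicting codiagnosability. Consequently every path in $\rg(\calD^{co})$ that extends a faulty state visits at most $K$ distinct states before it must terminate. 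Second, I would convert this bound on the number of regions into a bound on elapsed time: each region is visited at most once along such a path, and the amount of time that can elapse while staying in (or passing through) a single bounded region is at most one time unit, since all constants in $A$ are integers (this is the footnote). Since acyclicity forces all the post-fault regions to be bounded (an unbounded region reachable from a faulty state would again induce a time-divergent cycle), the total duration accumulated after the fault is at most $K$.

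Combining the two steps, any faulty run of $A$ accumulates at most $K$ time units after the fault event, which is precisely the statement that $A$ is $(K,\calE)$-codiagnosable with $K=|\rg(\calD^{co})|$. The main obstacle, and the step deserving the most care, is the second one: tying the abstract, time-\emph{abstract} bisimulation of the region graph to a concrete bound on real elapsed time. One must be careful that a single faulty run of $A$ may revisit the same region many times through the delay-successor ($\tauac$) edges of $\rg(A)$, and it is only the \emph{acyclicity after the fault} established in the first step that guarantees each region contributes at most its one-time-unit budget. I would make this precise by observing that the absence of a reachable cycle beyond a faulty state means the corresponding run of $A$, projected onto region space, is a simple path of length at most $K$, and then summing the at-most-one-time-unit sojourn per region along that simple path to obtain the total bound $K$. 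This yields Proposition~\ref{prop-ta-max} and, as with the FA case, opens the way to a binary search for the optimal delay in the timed setting.
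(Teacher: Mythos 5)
Your argument is correct and is essentially the paper's own: the paper likewise passes to the region graph $\rg(\calD^{co})$, notes that $\calE$-codiagnosability (via $\lang^\omega(\calD^{co})=\emptyset$) forbids any cycle through faulty states, deduces that at most $K$ region-states, all with bounded regions, can follow a faulty state, and converts this into a duration bound of $K$ via the one-time-unit-per-region observation. The extra care you take about a single run revisiting a region through delay edges is a reasonable tightening, but the skeleton, the key lemma invoked, and the counting are the same as in the paper.
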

The size of the region graph of $\calD^{co}$ is bounded by $|L|^{n+1}
\cdot ((n+1)|X|+1)! \cdot 2^{(n+1)|X|+1} \cdot M^{(n+1)|X|+1}$.  Thus
the encoding of constant $K$ has size $O(n \cdot |A| )$.
\begin{theorem}
  Problem~\ref{prob-delay} can be solved in PSPACE for Timed Automata.
\end{theorem}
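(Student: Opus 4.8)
The plan is to mirror the argument given for finite automata, replacing the combinatorial bound $2\cdot|A|^n$ by the bound $K=|\rg(\calD^{co})|$ supplied by Proposition~\ref{prop-ta-max}. First I would run the PSPACE decision procedure for Problem~\ref{prob-codiag} on $A$ and $\calE$: if $A$ is not $\calE$-codiagnosable there is no finite delay and we stop. Otherwise Proposition~\ref{prop-ta-max} guarantees that the minimum delay lies in $\{0,1,\dots,K\}$, so it suffices to locate the least $\Delta$ in this range for which $A$ is $(\Delta,\calE)$-codiagnosable.

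To locate this threshold I would perform a binary search on $[0,K]$, calling at each step the PSPACE procedure for Problem~\ref{prob-delta-codiag} (Theorem~\ref{thm-delta-codiag}) to test a candidate value $\Delta$. Binary search is justified by the monotonicity of codiagnosability in the delay: since $\faulty_{\geq \Delta'}(A)\subseteq\faulty_{\geq\Delta}(A)$ whenever $\Delta'\geq\Delta$, the language characterisation of Lemma~\ref{lem-2} shows that $(\Delta,\calE)$-codiagnosability implies $(\Delta',\calE)$-codiagnosability for all $\Delta'\geq\Delta$. Hence the predicate ``$A$ is $(\Delta,\calE)$-codiagnosable'' is upward closed in $\Delta$ and possesses a unique threshold, which binary search over $\{0,\dots,K\}$ pinpoints after $O(\log K)$ iterations.

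It remains to account for the resources. By the size bound stated just after Proposition~\ref{prop-ta-max}, the constant $K$ admits an encoding of size $O(n\cdot|A|)$, so $\log K=O(n\cdot|A|)$ and the search terminates after polynomially many iterations. Every tested value satisfies $\Delta\leq K$, hence $\log\Delta=O(n\cdot|A|)$, and therefore the input $|A|+\log\Delta+n\cdot|\Sigma|$ handed to the Problem~\ref{prob-delta-codiag} solver remains polynomial in the size $|A|+n\cdot|\Sigma|$ of the input to Problem~\ref{prob-delay}. Each inner call thus runs in space polynomial in the original input; since the polynomially many calls are performed sequentially and can reuse the same workspace, the overall procedure stays in PSPACE, and the returned value $\Delta$ has polynomial bit-length and so fits in the available space.

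The only delicate point — and the main thing to verify — is that the candidate delays never inflate the input size of the inner $(\Delta,\calE)$-codiagnosability test beyond polynomial. This is exactly what the $O(n\cdot|A|)$ bound on the bit-length of $K$ secures: without the singly-exponential bound on the optimal delay provided by Proposition~\ref{prop-ta-max}, the search interval, and hence the encoding of $\Delta$, could grow too large to keep the inner test in PSPACE. The proposition is therefore the crux on which the PSPACE upper bound rests, while the remaining ingredients (monotonicity and the two earlier PSPACE decision procedures) are routine.
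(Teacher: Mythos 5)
Your proposal is correct and follows essentially the same route as the paper: test $\calE$-codiagnosability in PSPACE, then binary-search the delay over $\{0,\dots,K\}$ with $K=|\rg(\calD^{co})|$ from Proposition~\ref{prop-ta-max}, using the PSPACE test for Problem~\ref{prob-delta-codiag} at each step and the $O(n\cdot|A|)$ bound on the encoding of $K$ to keep every inner call polynomial in the original input. Your explicit justification of the binary search via monotonicity of $(\Delta,\calE)$-codiagnosability in $\Delta$ is a detail the paper leaves implicit, but it is the same argument.
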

\begin{proof}
  Checking whether a TA $A$ is $\calE$-codiagnosable can be done in
  PSPACE.  If the result is ``yes'', we can do a binary search for the
  maximum delay: start with $\Delta=K=|\rg(\calB^{co})|$, and check
  whether $A$ is $(\Delta,\calE)$-codiagnosable.  If ``yes'', divide
  $\Delta$ by $2$ and so on.  The encoding of $K$ has size
  $O(n \cdot |A|)$ and thus is polynomial in the size of the input of
  Problem~\ref{prob-delay}.
  \qed
\end{proof}

\section{Synthesis of Codiagnosers}
\label{sec-synthesis}

\subsection{Synthesis for Finite Automata}
The synthesis of a codiagnoser for a FA $A$ can be achieved by
determinizing $n$ versions of $A$.  This is exactly the same procedure
that is applied for standard diagnosis: assume $\Sigma_o \subseteq
\Sigma$ is the set of observable events in $A$, and $A$ is
$(\Delta,\Sigma_o)$-diagnosable.  To build a $\Delta$-diagnoser we
proceed as follows~\cite{yoo-lafortune-tac-02,Jiang-01}:
\begin{enumerate}
\item build $A^f$ as before and replace the events in $\Sigma
  \setminus \Sigma_o$ by $\tauac$; recall that $f$ is also replaced by
  $\tauac$ in $A^f$ and a boolean value indicates whether  a
  fault has occurred;
\item determinize $A^f$ and obtain $B$;
\item define the set of final states $F_B$ of $B$ by:
  $S=\{s_1,s_2,\cdots,s_l\}$ is in $F_B$ \ssi for each $1 \leq i \leq
l$, $s_i$ is a faulty state of $A^f$;
\item a $(\Delta,\Sigma_o)$-diagnoser $D$ for $A$ 
  can be constructed as follows:
  \begin{enumerate}
  \item let $\varrho$ be a run of $A$ and
    $w=\proj{\Sigma_o}(\trace(\varrho))$.
  \item if when reading $w$, $B$ reaches a state in $F_B$, define
    $D(w)=1$,
  \item otherwise $D(w)=0$.
  \end{enumerate}
\end{enumerate}
Applying this construction for each $\Sigma_o=\Sigma_i, 1 \leq i \leq
n$, we obtain a tuple $\vect{D}=$ $(D_1,D_2,\cdots,D_n)$ of diagnosers
$D_i$ which is a $(\Delta,\calE)$-codiagnoser for $A$.  Note that the
size of $\vect{D}$ is exponential in the size of $A$ (this is already
the case for the diagnosis problem).

\subsection{Synthesis for Timed Automata}
The synthesis of a diagnoser for timed automata~\cite{tripakis-02} is
already more complicated than for FA.  Timed automata are not (always)
determinizable~\cite{AlurDill94} and thus we cannot use the same
procedure as for FA and determinize $A^f$.  Moreover, checking whether
a TA is determinizable is not decidable~\cite{Finkel05}, and it is
thus impossible to check whether we can use the same procedure.

The construction of a diagnoser for timed automata~\cite{tripakis-02}
consists in computing \emph{on-the-fly} the current possible states of
the timed automaton $A^f$ after reading a timed word $w$.  This
procedure is effective but gives a diagnoser which is a Turing
machine. The machine computes a state estimate of $A$ after each
observable event, and if it contains only faulty states, it announces
a fault.

Obviously the same construction can be carried out for codiagnosis: we
build $M_i, 1 \leq i \leq n$ Turing machines that estimate the state
of $A$. When one $M_i$'s estimate on an input $\Sigma_i$-trace $w$
contains only faulty states, we set $D_i(w)=1$ and $0$ otherwise.  This
tuple of Turing machines is a $(\Delta,\calE)$-codiagnoser.

\smallskip

Computing the estimates with Turing machines might be too expensive to
be implemented at runtime. More efficient and compact codiagnosers
might be needed with reasonable computation times.  In the next
section, we address the problem of codiagnosis for TA under
\emph{bounded resources}.
 
\section{Codiagnosis with Deterministic Timed Automata}
\label{sec-synth-bounded}
The fault diagnosis problem using timed automata has been introduced
and solved by P.~Bouyer~\emph{et al.} in~\cite{Bouyer-05}.  The
problem is to determine, given a TA $A$, whether there exists a
\emph{diagnoser} $D$ for $A$, that can be represented by a
deterministic timed automaton.

We recall the result of~\cite{Bouyer-05} and after we study the
corresponding problem for codiagnosis.

\subsection{Fault Diagnosis with Deterministic Timed Automata}
\label{sec-diag-dta}
When synthesizing (deterministic) timed automata, an important issue
is the amount of \emph{resources} the timed automaton can use: this
can be formally defined~\cite{BDMP-cav-2003} by the (number of)
clocks, $Z$, that the automaton can use, the maximal constant $\max$,
and a \emph{granularity} $\frac{1}{m}$. As an example, a TA of
resource $\mu=(\{c,d\},2,\frac{1}{3})$ can use two clocks, $c$ and
$d$, and the clocks constraints using the rationals $-2 \leq k/m \leq
2$ where $k \in \setZ$ and $m=3$.  A \emph{resource} $\mu$ is thus a
triple $\mu=(Z,\max,\frac{1}{m})$ where $Z$ is finite set of clocks,
$\max \in \setN$ and $\frac{1}{m} \in \setQ_{>0}$ is the
\emph{granularity}.  \dtamu is the class of \dta of resource $\mu$.
\begin{remark} \em
  Notice that the number of locations of the \dta in \dtamu is not
  bounded and hence this family has an infinite (yet countable) number
  of elements.
\end{remark}
If a TA $A$ is $\Delta$-diagnosable with a diagnoser that can be
represented by a DTA $D$ with resource $\mu$, we say that $A$ is
$(\Delta,D)$-diagnosable.  P.~Bouyer~\emph{et al.}
in~\cite{Bouyer-05} considered the problem of deciding whether there
exists a diagnoser which is a DTA with resource $\mu$:

\begin{prob}[$\Delta$-\dta-Diagnosability~\cite{Bouyer-05}] \label{prob-dtamu} \mbox{} \\
  \textsc{Inputs:} A TA $A=(L,l_0,X,\Sigma_{\tauac,f},E,\inv)$,
  $\Delta \in \setN$,
  a resource $\mu=(Z,\max,\frac{1}{m})$.\\
  \textsc{Problem:} Is there any $D \in \text{DTA}_\mu$ \st $A$ is
  $(\Delta,D)$-dia\-gnosable ?
\end{prob}
\begin{theorem}[P.~Bouyer~\emph{et al.}, \cite{Bouyer-05}]\label{thm-bouyer}
Problem~\ref{prob-dtamu}  is 2EXPTIME-complete.
\end{theorem}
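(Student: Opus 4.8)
The plan is to recast the existence question as a two-player game of imperfect information, following the pattern already used in this paper for checking $(\Delta,\calE)$-codiagnosability but now with the diagnoser's structure to be \emph{synthesized} rather than given. The key enabling observation is that the resource $\mu=(Z,\max,\frac{1}{m})$ bounds the timing capabilities of any candidate diagnoser in \dtamu: its guards range over a \emph{finite} set of atomic constraints on the clocks $Z$ with granularity $\frac{1}{m}$ and maximal constant $\max$, so the region structure available to the diagnoser is fixed a priori, even though its number of locations is unbounded (as noted in the Remark). First I would normalize any DTA of resource $\mu$ so that its transitions are guarded by the atomic regions of $\mu$ and it reacts only to observable events; this yields a canonical timing skeleton whose regions are explicitly enumerable, reducing the synthesis of a diagnoser to the synthesis of a finite-memory reaction over this fixed skeleton.

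For the upper bound I would build a game arena from three ingredients: the region graph $\rg(A)$ of the system (exponential in $|A|$), the fixed region structure of the resource $\mu$ that the diagnoser is entitled to use, and --- crucially --- a \emph{subset construction} that tracks, after each observation, the set of states of $A$ consistent with what the diagnoser has seen. Because the diagnoser observes only $\proj{\Sigma_o}$ at granularity $\frac{1}{m}$, the game has imperfect information, and this belief-set determinization takes a powerset of $\rg(A)$, giving an arena that is doubly exponential in $|A|$. On this arena the diagnoser player wins \ssi it can, on every $\Delta$-faulty run, eventually output $1$ while never outputting $1$ on a non-faulty run; encoding this as a safety/B\"uchi objective and solving the resulting finite game takes time polynomial in the arena, hence 2EXPTIME overall. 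A winning strategy then yields a genuine diagnoser $D$ in \dtamu witnessing $(\Delta,D)$-diagnosability, and conversely any such $D$ induces a winning strategy, establishing correctness of the reduction.

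For the lower bound I would reduce from a problem known to be 2EXPTIME-hard, such as partial-observation safety control for timed automata, whose hardness stems from simulating an alternating exponential-space-bounded Turing machine. The encoding must make the two sources of exponential blowup unavoidable simultaneously: the dense-time region structure supplies one exponential (a configuration counter encoded in clock values relative to $\max$ and $\frac{1}{m}$), and the partial observation of the diagnoser forces the accompanying subset construction to supply the second. The bounded resource $\mu$ is essential here --- it is precisely the restriction that prevents the diagnoser from using unboundedly fine timing to shortcut the computation, so the simulation of the doubly-exponential behaviour of the machine genuinely survives.

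The main obstacle is the lower bound. The upper bound, while intricate, is a fairly mechanical composition of the region abstraction (Theorem~\ref{thm-alur}), a belief-set determinization for imperfect information, and a standard finite-game solver. By contrast, 2EXPTIME-hardness requires a delicate encoding certifying that neither exponential can be collapsed: one must show that a resource-$\mu$ diagnoser must in effect reconstruct a doubly-exponential amount of information, and that the interplay between committing to a fault and refuting it under partial observation is exactly as hard as an alternating exponential-space computation.
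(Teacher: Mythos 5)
Your proposal is correct and takes essentially the same route as the construction of~\cite{Bouyer-05}, which the paper itself does not reprove but recalls in Section~\ref{sec-diag-dta}: your ``belief-set determinization'' of an imperfect-information game over the system's region structure combined with the diagnoser's fixed resource skeleton is precisely the determinization of $p_\calU(\rg(A(\Delta) \times \calU))$ yielding the doubly-exponential safety game $G_{A,\Delta,\mu}$, whose bad states are beliefs mixing $\Delta$-faulty and non-faulty locations, and where a winning Player~0 strategy corresponds exactly to a diagnoser in $\text{DTA}_\mu$. Your lower bound likewise rests, directly or via partial-observation timed control, on the same source of hardness the paper cites: reduction from acceptance of alternating exponential-space Turing machines.
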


The solution to the previous problem is based on the construction of a
\emph{two-player game}, the solution of which gives the \emph{set} of
all $\text{DTA}_\mu$ diagnosers (the most permissive diagnosers) which
can diagnose $A$ (or $\varnothing$ is there is none).


Let $A=(L,l_0,X,\Sigma_{\tauac,f},E,\inv)$ be a TA, $\Sigma_o
\subseteq \Sigma$.  Define $A(\Delta)=(L_1 \cup L_2 \cup L_3,l^1_0,X
\cup \{z\},\Sigma_{\tauac,f},\rightarrow_\Delta,\inv_\Delta)$ as
follows:
\begin{itemize}
\item $L_i=\{\ell^i, \ell \in L\}$, for $i\in \{1,2,3\}$, \ie $L_i$
  elements are copies of the locations in $L$,
\item $z$ is a (new) clock not in $X$,
\item for $\ell \in L$, $\inv(\ell^1)=\inv(\ell)$,
  $\inv(\ell^2)=\inv(\ell) \wedge z \leq \Delta$, and 
  $\inv(\ell^3)=\true$,
\item the transition relation is given by:
  \begin{itemize}
  \item for $i \in \{1,2,3\}$, $\ell^i \xrightarrow{\ (g,a,R)\
    }_\Delta \ell'^i$ if $a \neq f$ and $\ell \xrightarrow{\ (g,a,R)\
    } \ell'$,
  \item for $i \in \{2,3\}$, $\ell^i \xrightarrow{\ (g,f,R)\ }_\Delta
    \ell'^i$ if $a \neq f$ and $\ell \xrightarrow{\ (g,f,R)\ } \ell'$,
  \item $\ell^1 \xrightarrow{\ (g,f,R \cup \{z\})\ }_\Delta
    \ell'^2$ if $a \neq f$ and $\ell \xrightarrow{\ (g,f,R)\ } \ell'$,
  \item $\ell^2 \xrightarrow{\ (z=\Delta,\tauac,\varnothing)\ }_\Delta
    \ell^3$.
  \end{itemize}
\end{itemize}
The previous construction creates $3$ copies of $A$: the system starts
in copy $1$, when a fault occurs it switches to copy $2$, resetting
the clock $z$, and when in copy $2$ (a fault has occurred) it can
switch to copy $3$ after $\Delta$ time units (copy $3$ could be
replaced by a special location $\bad$).  We can then define $L_1$ as
the non-faulty locations, and $L_3$ as the $\Delta$-faulty locations.


Given a resource $\mu=(Y,\max,\frac{1}{m})$ ($X \cap Y =\emptyset$), a
\emph{minimal guard} for $\mu$ is a guard which defines a region of
granularity $\mu$. The (symbolic) \emph{universal automaton}
$\calU=(\{0\},\{0\},Y,\Sigma,E_\mu,\inv_\mu)$ is specified by:
\begin{itemize}
\item $\inv_\mu(0)=\true$,
\item $(0,g,a,R,0) \in E_\mu$ for each $(g,a,R)$ \st $a \in \Sigma$,
  $R \subseteq Y$, and $g$ is a minimal guard for $\mu$.
\end{itemize}

$\calU$ is finite because $E_\mu$ is finite.  Nevertheless $\calU$ is
not deterministic because it can choose to reset different sets of
clocks $Y$ for a pair ``(guard, letter)'' $(g,a)$.  To diagnose $A$,
we have to find when a set of clocks has to be reset. This can provide
enough information to distinguish $\Delta$-faulty words from
non-faulty words.

\noindent The algorithm of~\cite{Bouyer-05} requires the following steps:
\begin{enumerate}
\item define the region graph $\rg(A(\Delta) \times \calU)$,
\item compute a \emph{projection} of this region graph:
  \begin{itemize}
  \item let $(g,a,R)$ be a label of an edge in $\rg(A(\Delta)
    \times \calU)$,
  \item let $g'$ be the unique minimal guard \st $\sem{g} \subseteq
    \sem{g'}$;
  \item let $p_\calU$ be the projection defined by
    $p_\calU(g,a,R)=(g',a,R \cap Y)$ if $a \in \Sigma_o$ and
    $p_\calU(g,a,R)=\tauac$ otherwise.
  \end{itemize}
  The projected automaton $p_\calU(\rg(A(\Delta) \times \calU))$ is
  the automaton $\rg(A(\Delta) \times \calU)$ where each label
  $\alpha$ is repla\-ced by $p_\calU(\alpha)$.
\item determinize $p_\calU(\rg(A(\Delta) \times \calU))$ (removing
  $\tauac$ actions) and obtain $H_{A,\Delta,\mu}$.
\item build a two-player safety game $G_{A,\Delta,\mu}$ as follows:
  \begin{itemize}
  \item each transition $s  \xrightarrow{\ (g,a,Y) \ }  s'$
    %
    in $H_{A,\Delta,\mu}$ yi\-elds a transition in $G_{A,\Delta,\mu}$ of
    the form:

    \begin{center}
      \tikz[node distance=1cm and 2.8cm,thick]{ \node[circle,draw,minimum height=7mm] (l0)
        {$s$}; \node[rectangle,draw,minimum height=7mm,right=of l0] (l2) {$(s,g,a)$};
        \node[circle,draw,minimum height=7mm,right=of l2] (l1) {$s'$}; \path[->] (l0)
        edge node {$(g,a)$} (l2) ; \path[->] (l2) edge node
        {$(g,a,Y)$} (l1) ; }
    \end{center}

  \item the round-shaped state are the states of Player~1, whereas the
    square-shaped states are Player~0 states (the choice of the clocks
    to reset).
  \item the \emph{bad} states (for Player~0) are the states
    $\{(\ell_1,r_1),\cdots,(\ell_k,r_k)\}$ with both a $\Delta$-faulty
    (in $L_3$) and a non-faulty (in $L_1$) location.  We let $\bad$
    denote the set of bad states.
  \end{itemize}
\end{enumerate}
The main results of~\cite{Bouyer-05} are:
\begin{itemize}
\item there is a TA $D \in$ \dtamu \st $A$ is $(\Delta,D)$-diagnosable
  iff Player~0 can win the safety game ``avoid Bad''
  $G_{A,\Delta,\mu}$;
\item it follows that Problem~\ref{prob-dtamu} can be solved in
  2EXPTIME as $G_{A,\Delta,\mu}$ has size doubly exponential in $A$,
  $\Delta$ and $\mu$;
\item a witness diagnoser $D$ of size doubly exponential in $A$, $\Delta$ and $\mu$
  can be obtained: it is deterministic timed automaton with a set of accepting
  locations $F$. When the projection $w$ of timed word of $A$ onto $\Sigma_o$
  is accepted by $D$, $D$ outputs $1$ otherwise it outputs $0$;
\item the acceptance problem for Alternating Turing machines of exponential
  space can be reduced to Problem~\ref{prob-dtamu} and thus it
  is 2EXPTIME-hard.
\end{itemize}
Another result of~\cite{Bouyer-05} is that for Event Recording
Automata (ERA), Problem~\ref{prob-dtamu} is PSPACE-complete.

\subsection{Algorithm for Codiagnosability}
In this section we include the alphabet $\Sigma$ a DTA can monitor in
the resource $\mu$ and write $\mu=(\Sigma,Z,\max,\frac{1}{m})$.
\begin{prob}[$\Delta$-\dta-Codiagnosability] \label{prob-codiag-dtamu} \mbox{} \\
  \textsc{Inputs:} A TA $A=(L,l_0,X,\Sigma_{\tauac,f},E,\inv)$,
  $\Delta \in \setN$,
  and a family of resources $\mu_i=(\Sigma_i,Z_i,\max_i,\frac{1}{m_i}), {1 \leq i \leq n}$
  with $\Sigma_i \subseteq \Sigma$.\\
  \textsc{Problem:} Is there any codiagnoser
  $\vect{D}=(D_1,D_2,\cdots,D_n)$ with $D_i \in \text{DTA}_{\mu_i}$
  \st $A$ is $(\Delta,\vect{D})$-codia\-gnosable ?
\end{prob}
To solve Problem~\ref{prob-codiag-dtamu}, we extend the previous
algorithm for \dta-diagnosability.  Let $G^i$ be the game
$G_{A,\Delta,\mu_i}$ and $\bad_i$ the set of bad states.  Given a
strategy $f_i$, we let $f_i(G^i)$ be the outcome\footnote{$f_i(G^i)$
  is a timed transition system.} of $G^i$ when $f_i$ is played by
Player~0.  Given $w \in \tw^*(\Sigma)$ and a DTA $A$ on $\Sigma$, we
let $\last(w,A)$ be the location reached when $w$ is read by $A$.
\begin{lemma}\label{lem-9}
  $A$ is $(\Delta,\vect{D})$-codiagnosable \ssi there is a tuple of
  strategies $\vect{f}$ \st
\begin{eqnarray*}
  & (1) & \forall 1 \leq i \leq n, \vect{f}[i] \text{ is state-based on the game } G^i, \text{ and } \\
  & (2)  &\forall w \in \Trace(A) \quad \begin{cases}
    \text{If } S_i=\last(\proj{\Sigma_i}(w),f_i(G^i)) \text{, } 1 \leq i \leq n,  \\
    \text{then } \exists 1 \leq j \leq n, \text{ \st } S_j \not\in \bad_j.
  \end{cases} 
\end{eqnarray*}
\end{lemma}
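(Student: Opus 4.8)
The plan is to read the claimed equivalence as the $n$-diagnoser generalisation of the single-diagnoser characterisation behind Theorem~\ref{thm-bouyer}, so I would first set up the dictionary between $\text{DTA}_{\mu_i}$ codiagnosers and state-based strategies, and then translate Definition~\ref{def-codiag} into a statement about the games $G^i$. Recall from the construction of $A(\Delta)$ that $L_1$ are the non-faulty locations and $L_3$ the $\Delta$-faulty ones, and that a state $S_i$ of $H_{A,\Delta,\mu_i}$ (a set of (location, region) pairs) lies in $\bad_i$ exactly when it contains both an $L_1$- and an $L_3$-location, \ie when the observation read so far is simultaneously consistent with a non-faulty and with a $\Delta$-faulty behaviour. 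The bridging fact I would state first is that $\last(\proj{\Sigma_i}(w),f_i(G^i))$ is precisely the \emph{state estimate} reached after the observation $\proj{\Sigma_i}(w)$ once Player~0 fixes its resets according to the state-based strategy $f_i$; since $f_i(G^i)$ is deterministic, this estimate is a well-defined function of the observation.

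Building on this, I would attach to each strategy $f_i$ the diagnoser $D_i\in\text{DTA}_{\mu_i}$ that outputs $1$ on $u$ iff $\last(u,f_i(G^i))$ contains \emph{no} $L_1$-location (equivalently, $D_i$ is the DTA $f_i(G^i)$ whose accepting locations are the fault-confirmed estimates). Two observations then make the $(\Leftarrow)$ direction essentially free. First, soundness is automatic: for a non-faulty run $\varrho$ the estimate after $\proj{\Sigma_i}(\trace(\varrho))$ always contains the $L_1$-location of $\varrho$ itself, so $D_i$ outputs $0$ and the non-faulty clause of Definition~\ref{def-codiag} holds for every $i$. Second, for completeness take $\varrho\in\faulty_{\geq\Delta}(A)$: each $S_i=\last(\proj{\Sigma_i}(\trace(\varrho)),f_i(G^i))$ then contains the $L_3$-location of $\varrho$, so for the index $j$ supplied by clause~(2) the estimate $S_j$ carries an $L_3$-location but, not being in $\bad_j$, no $L_1$-location; hence $D_j$ outputs $1$. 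Thus $A$ is $(\Delta,\vect{D})$-codiagnosable, and $(1)$--$(2)$ yield a codiagnoser.

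For the $(\Rightarrow)$ direction I would argue contrapositively. Given any codiagnoser $\vect{D}$ with $D_i\in\text{DTA}_{\mu_i}$, reading off the clocks that $D_i$ resets at each observed step defines, by determinism of $D_i$ and the granularity of $\mu_i$, a \emph{state-based} strategy $f_i$ on $G^i$, which gives clause~(1). Soundness of $D_i$ forces the implication ``$S_i(u)$ contains an $L_1$-location $\Rightarrow D_i(u)=0$'', since announcing on an observation consistent with a non-faulty run would raise a false alarm. Consequently, if some trace $w$ violated~(2), so that every $S_i=\last(\proj{\Sigma_i}(w),f_i(G^i))$ were in $\bad_i$, then every $D_i$ would output $0$ on $\proj{\Sigma_i}(w)$, and moreover each $S_i$ would contain an $L_3$-location, \ie $\proj{\Sigma_i}(w)$ would coincide with the $\Sigma_i$-projection of some $\Delta$-faulty run. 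It remains to promote these $n$ per-coordinate $\Delta$-faulty witnesses into a \emph{single} $\Delta$-faulty run of $A$ that all $D_i$ miss simultaneously, contradicting the $\Delta$-faulty clause of Definition~\ref{def-codiag}.

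That promotion is exactly where I expect the real work to sit, and I would handle it through the necessary-and-sufficient condition of Lemma~\ref{lem-nsc} and the synchronised product of Lemma~\ref{lem-2}: a family $(f_i)$ for which some reachable observation makes all estimates bad corresponds to an accepting run of $A^f(\Delta)\times A_1^\ast\times\cdots\times A_n^\ast$, whose $A^f(\Delta)$-component is $\Delta$-faulty while each $A_i^\ast$-component witnesses a non-faulty run with the same $\Sigma_i$-projection. Reading the projections off this joint run produces one $\Delta$-faulty $\varrho$ together with non-faulty $\varrho_i$ satisfying~\eqref{eq-nsc-codiag} relative to the resource-bounded estimates, hence a single $\Delta$-faulty trace on which every $D_j$ outputs $0$; conversely the same product turns a violation of the $\Delta$-faulty clause into a reachable all-bad observation profile. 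I expect the only delicate point to be verifying that the per-coordinate estimates induced by the $f_i$ are faithfully represented by the $A_i^\ast$-components of this product --- so that ``$S_i\in\bad_i$'' matches ``$\proj{\Sigma_i}(w)\in\proj{\Sigma_i}(\nonfaultytr(A))$ under policy $f_i$'' --- the remainder being a direct application of Lemmas~\ref{lem-nsc} and~\ref{lem-2}.
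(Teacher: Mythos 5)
Your ($\Leftarrow$) direction is essentially the paper's own argument: fold each state-based $f_i$ into the deterministic automaton $f_i(G^i)$ (the paper's $G^{i,c}$), accept the fault-confirmed estimates, and verify the two clauses of Definition~\ref{def-codiag}; your acceptance condition (``no $L_1$-location'') is a harmless variant of the paper's (``all locations $\Delta$-faulty'') and the verification is the same.

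The ($\Rightarrow$) direction has a genuine gap at its very first step. You claim that reading off the resets of $D_i$ gives, ``by determinism of $D_i$ and the granularity of $\mu_i$'', a \emph{state-based} strategy on $G^i$. Determinism only gives a well-defined strategy as a function of the symbolic \emph{history}; state-based means the choice depends only on the current state of $G^i$, i.e.\ on the current estimate. Since $\text{DTA}_{\mu_i}$ bounds clocks, constants and granularity but \emph{not} the number of locations (see the paper's remark in section~\ref{sec-diag-dta}), $D_i$ may carry unbounded memory and make different reset choices after two histories that reach the same estimate, so the induced strategy is history-dependent in general. This is exactly where the paper's proof does its real work: it first extracts history-dependent strategies $f_i$ from the $D_i$'s, establishes the covering property for them, and only then builds state-based strategies $f'_i$, using memoryless determinacy of safety games (there is a state-based strategy avoiding $\reach(G^i)\setminus\reach(f_i(G^i))$) combined with an iterative construction over the languages $\calL_i$ of traces on which the earlier strategies fail --- the iteration being needed because condition~(2) couples the $n$ games: no single $f_i$ must win everywhere, they need only jointly cover $\Trace(A)$. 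None of this appears in your proposal.

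Moreover, the ``promotion'' step you delegate to Lemma~\ref{lem-nsc} and Lemma~\ref{lem-2} does not go through as described. If the trace $w$ violating~(2) is $\Delta$-faulty, no promotion is needed: each $S_i\in\bad_i$ contains an $L_1$-location, so soundness already forces $D_i(\proj{\Sigma_i}(w))=0$ for every $i$, and $w$ itself is the undetected $\Delta$-faulty run. If $w$ is non-faulty, the promotion is impossible in general, because the $n$ per-coordinate $\Delta$-faulty witnesses may be $n$ \emph{distinct} runs, each caught by a \emph{different} diagnoser. Concretely, take $\Sigma_1=\{a,c\}$, $\Sigma_2=\{b,d\}$, a non-faulty branch with trace $a\,b$, and two faulty branches with traces $f\,a\,d$ and $f\,c\,b$ (all events at time $0$, then time elapses). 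After the non-faulty trace, both estimates are bad under \emph{any} strategies, yet taking $D_1$ to announce upon seeing $c$ and $D_2$ upon seeing $d$ gives a $(\Delta,\vect{D})$-codiagnoser; there is no single $\Delta$-faulty run missed by both diagnosers, so the product $A^f(\Delta)\times A_1^\ast\times\cdots\times A_n^\ast$ has no accepting run to hand you. (This example also shows that condition~(2) carries content only on $\Delta$-faulty traces --- which is all the paper's own proof establishes --- but your route, taken literally, stalls at exactly this point.)
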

Item~(2) of Lemma~\ref{lem-9} states that there is no word in $A$ for
which all the Player~0 in the games $G^i$ are in bad states.  The
strategies for each Player~0 are not necessarily winning in each
$G^i$, but there is always one Player~0 who has not lost the game
$G^i$.
\begin{proof}\mbox{}
\vspace*{-.9mm}
  \paragraph{If part.}
  Assume there is a tuple of state-based strategies
  $\vect{f}=(f_1,f_2,\cdots,f_n)$ on each game $G^i$, \st (2) is
  satisfied.  From~(1), each choice of Player~0 in $G^i$ determines
  one transition from each square state (see the definition of $G^i$
  and square states in section~\ref{sec-diag-dta}).  Thus the graph of
  $G^i$ can be folded into a set of transitions $q \xrightarrow{g,a,Y}
  q'$ if the choice of Player~0 is $g,a,Y$ in square state $(q,g,a)$.
  This gives a DTA $G^{i,c}$.  We can then build a diagnoser $D_i$
  defined by the DTA as follows: ($i$) for each state
  $q=\{(\ell_1,r_1),\cdots,(\ell_k,r_k)\}$ in $G^{i,c}$, if all the
  $\ell_j$ are $\Delta$-faulty, $q$ is accepting; ($ii$) given $w \in
  \Trace(A)$, if $\proj{\Sigma_i}(w) \in \lang(G^{i,c})$, let
  $D_i(\proj{\Sigma_i}(w))=1$ and otherwise $0$.  $\vect{D}$ is a
  $\Delta$-codiagnoser for $A$.  Indeed, let $w \in
  \nonfaulty^{tr}(A)$. In each game $G^{i,c}$, we cannot
  reach a $\Delta$-faulty state because of~(2).  Hence $\sum_{i=1}^n
  \vect{D}[i] = 0$.  Now assume $w \in \faulty^{tr}_{\geq \Delta}(A)$:
  In each $G^{i,c}$ we must reach a state $q_i$ containing a
  $\Delta$-faulty state. By~(2), there is some $j$ \st $q_j \not\in
  \bad_j$ and this implies that $q_j$ is made only of $\Delta$-faulty
  states and $q_j$ is accepting, thus
  $\vect{D}[j](\proj{\Sigma_j}(w))=1$.

  \paragraph{Only If part.}
  For this part we first show that a tuple of strategies $\vect{f}$
  exists and then address the state-based problem.  Let
  $\vect{D}=(D_1,D_2,\cdots,D_n)$ be the tuple of DTA that diagnoses
  $A$.  For each game $G^i$, define the strategy $f_i$ by: let
  $\varrho=(g_1,\lambda_1)(g_1,\lambda_1,Y_1)(g_2,\lambda_2)(g_2,\lambda_2,Y_2)
  \cdots (g_k,\lambda_k)$ be a run in $G^i$; $f_i(\varrho)=(g,a,Y)$ if
  in $D_i$ the symbolic sequence $(g_1,\lambda_1) \cdots
  (g_k,\lambda_k)$ reaches a location $\ell$ and there is a transition
  $(\ell,(g,a,Y),\ell')$ in $D_i$.  By assumption, as $\vect{D}$ is a
  $\Delta$-codiagnoser, for each $w \in \faulty^{tr}_{\geq
    \Delta}(A)$, there is at least one $D_j$ which reaches an
  accepting state after reading $\proj{\Sigma_j}(w)$.

  As a consequence, in the corresponding game, $G^j$, the state
  reached is made only of $\Delta$-faulty states. Indeed, if a
  non-faulty state is reachable, then the word $w$ is  also the
  projection of a non faulty run. Hence $D_j$ should announce $0$
  which is a contradiction.

  If $w \in \nonfaulty^{tr}(A)$, all the states reached in each $G^i$
  are non faulty.

  \medskip

  Now assume we have the strategies $f_i, 1 \leq i \leq n$.  We can
  construct state-based strategies on each game $G^i$.  Given $f_1$,
  (not necessarily winning) on $G^1$, let $T_1$ be the set of bad
  states reachable in $f_1(G^1)$. Define the language $\calL_1$ to be
  the set of words $w \in \Trace(A)$ \st a state in $T_1$ is reachable
  in $f_1(G^1)$ when reading $\proj{\Sigma_1}(w)$.  These are the
  words on which $f_1$ is not winning in $G^1$.

  Let $\reach(f_1(G^1))$ be the set of states reachable in $G^1$.
  There is a strategy ($f_1$) to avoid $B_1=\reach(G^1) \setminus
  \reach(f_1(G^1))$.  Hence there is a state-based strategy $f'_1$ that
  avoids $B_1$.
  
  Let $1 \leq i < n$.  Consider the game $f_{i+1}(G^{i+1})$ restricted
  to the (projections of the) words $w \in \calL_i$.  The idea is that
  on $\calL_i$, a strategy $f_j, j \leq i$ is winning in $G^j$.  In
  this restricted game, we define the set $T_{i+1}$ of bad states that
  are still reachable.  Let $\calL_{i+1}$ be the set of words $w \in
  \Trace(A)$ \st a state in $T_{i+1}$ is reachable in the restricted
  timed transition system $f_{i+1}(G^{i+1})$.

  Notice that we can construct a state-based strategy $f'_i$ which
  avoids the same states as $f_i$ does.
  For each restricted game $f'_i(G^i)$ we define the diagnoser $D_i$
  as before.
  If for some $i$, $\calL_{i}=\emptyset$, we can define the diagnosers
  $D_k, k \geq i$ to always announce $0$ for each word.
  
  The tuple $\vect{f'}$ is a $(\Delta,\calE)$-codiagnoser for $A$ and
  all the $\vect{f'}[i]$ are state-based on $G^i$.
  \qed

\end{proof}
From the previous Lemma, we can obtain the following result:
\begin{theorem}
  Problem~\ref{prob-codiag-dtamu} is 2EXPTIME-complete.
\end{theorem}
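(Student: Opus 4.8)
The plan is to prove the two bounds separately, using Lemma~\ref{lem-9} for membership in 2EXPTIME and specialising to a single local diagnoser for the hardness.

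For \textbf{2EXPTIME-hardness}, I would observe that Problem~\ref{prob-codiag-dtamu} is a strict generalisation of Problem~\ref{prob-dtamu}. Taking $n=1$ and $\mu_1=\mu$, a codiagnoser $\vect{D}=(D_1)$ with $D_1 \in \text{DTA}_{\mu_1}$ is exactly a single \dta-diagnoser of resource $\mu$, and $(\Delta,\vect{D})$-codiagnosability coincides with $(\Delta,D_1)$-diagnosability (the case $\calE=\{\Sigma_1\}$). Hence the instance with $n=1$ is literally Problem~\ref{prob-dtamu}, which is 2EXPTIME-hard by Theorem~\ref{thm-bouyer}. This is a one-line reduction and carries the full lower bound.

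For \textbf{2EXPTIME-easiness}, I would rely on the characterisation of Lemma~\ref{lem-9}: $A$ is $(\Delta,\vect{D})$-codiagnosable for some $\vect{D}$ iff there is a tuple of \emph{state-based} strategies $\vect{f}$, one per game $G^i=G_{A,\Delta,\mu_i}$, meeting condition~(2), namely that no trace of $A$ drives \emph{all} the local games into their bad sets $\bad_i$ at once. First I would recall from Theorem~\ref{thm-bouyer} that each $G^i$ has size doubly exponential in $|A|$, $\Delta$ and $\mu_i$. Then I would assemble the synchronised product $A \times G^1 \times \cdots \times G^n$, in which the system $A$ generates a trace, each $G^i$ reads the projection onto $\Sigma_i$, and Player~0 resolves the clock resets independently in every $G^i$. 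Condition~(2) then becomes the \emph{safety} objective for Player~0 of never reaching the ``all bad'' target $\bad_1 \times \cdots \times \bad_n$. The key complexity observation is that a product of $n$ arenas of size $2^{2^{p}}$ still has size $2^{n\cdot 2^{p}}=2^{2^{p+\log n}}$, which remains doubly exponential in the input to Problem~\ref{prob-codiag-dtamu}; the extra polynomial factor for $A$ does not change this. Solving a safety game by attractor computation is polynomial in the arena, so the whole decision runs in 2EXPTIME. To extract genuinely decentralised, state-based strategies — rather than a correlated full-information one — I would follow the iterative argument already carried out in the ``Only If'' part of Lemma~\ref{lem-9}: fix a state-based $f_1$ on $G^1$, isolate the language $\calL_1$ of traces on which it fails, restrict $G^2$ to $\calL_1$ and repeat, so that codiagnosability is equivalent to $\calL_n=\emptyset$. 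Each restriction and safety solve is carried out on (a sub-arena of) the doubly-exponential product, and there are only $n$ of them.

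The hard part will be justifying that this decentralised, per-game resolution is \emph{equivalent} to winnability of the coupled product game, i.e. that a joint winning behaviour can always be realised by independent state-based strategies $\vect{f}$ observing only their own projections. This is exactly the content secured by Lemma~\ref{lem-9}, which I would invoke directly; the remaining work is the bookkeeping showing that tracking $A$ together with the $G^i$-components keeps the arena doubly exponential and that the $n$ nested safety computations stay within 2EXPTIME.
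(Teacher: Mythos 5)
Your hardness argument is fine and is exactly what the paper does (specialise to $n=1$ and invoke Theorem~\ref{thm-bouyer}). The gap is in the upper bound. Your decision procedure is to solve a \emph{single} full-information safety game on the product $A \times G^1 \times \cdots \times G^n$ by attractor computation, and you claim that the equivalence between winnability of that coupled game and the existence of a decentralised tuple of state-based strategies ``is exactly the content secured by Lemma~\ref{lem-9}.'' It is not. Lemma~\ref{lem-9} relates codiagnosability to the existence of a tuple $(f_1,\dots,f_n)$ of \emph{independent} state-based strategies, one per game, quantified before the universal quantification over traces; it says nothing about the centralized product game. In the product game, Player~0's memoryless winning strategy (from the attractor computation) may choose the resets in $G^i$ as a function of the whole product state --- including the current location of $A$ (hence the fault status) and the states of the other games $G^j$, i.e.\ information derived from $\proj{\Sigma_j}(w)$ that diagnoser $i$ cannot observe. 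Such a correlated strategy does not induce any tuple of \dta{} diagnosers: with it, two traces $w,w'$ of $A$ with $\proj{\Sigma_i}(w)=\proj{\Sigma_i}(w')$ can drive $G^i$ to different states, so the ``diagnoser'' extracted from $G^i$ is not even a function of its own observations. Consequently your test is one-sided: it would accept instances where only a centralized, information-sharing strategy wins, which are not $(\Delta,\vect{D})$-codiagnosable. This is the classic centralized-versus-distributed synthesis discrepancy, and it is precisely why the paper does \emph{not} solve one game.

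The iterative restriction argument you invoke (fix $f_1$, compute the failure language $\calL_1$, restrict $G^2$, \dots) does not repair this: in the ``Only If'' part of Lemma~\ref{lem-9} that argument starts from strategies already extracted from an \emph{assumed} tuple of DTA diagnosers and merely makes them state-based; as a decision procedure it becomes a greedy choice of $f_1, f_2, \dots$, and a wrong greedy choice of $f_1$ can make $\calL_n \neq \emptyset$ even though some other tuple works. The paper's algorithm sidesteps both problems by brute force: enumerate \emph{all} tuples of state-based strategies (there are doubly exponentially many per game, hence doubly exponentially many tuples), fold each choice into a DTA $G^{i,c}$, and check condition~(2) of Lemma~\ref{lem-9} for that fixed tuple as a reachability question ($L_3 \times \bad_1 \times \cdots \times \bad_n$) on the region graph of $A(\Delta) \times G^{1,c} \times \cdots \times G^{n,c}$, which is doubly exponential; a doubly exponential number of doubly exponential checks is still 2EXPTIME. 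If you want to keep a single-game formulation you would have to prove the centralized/decentralized equivalence for this specific information structure, which is a substantial claim that neither you nor the paper establishes.
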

\begin{proof}
  2EXPTIME-hardness follows from Theorem~\ref{thm-bouyer},
  from~\cite{Bouyer-05}.  2EXPTIME easiness is obtained using the following
  algorithm:
  \begin{enumerate}
  \item compute the games $G^i, 1 \leq i \leq n$;
  \item select a state-based strategy on each game $G^i$;
  \item check condition~(2) of Lemma~\ref{lem-9}.
  \end{enumerate}
  The sizes of the games $G^i$ are doubly exponential in $A$, $\Delta$
  and the resources $\mu_i$ (recall that $\Sigma_i$ is included in
  $\mu_i$).  There is a doubly exponential number of state-based
  strategies for each game $G^i$. Once selected we have a  DTA
  $G^{i,c}$.

  Checking condition~(2) of Lemma~\ref{lem-9} can be done on the
  product $A(\Delta) \times G^{1,c} \times \cdots \times G^{n,c}$.  It
  amounts to deciding whether a location in $L_3 \times \bad_1 \times
  \cdots \bad_n$ is reachable. Reachability can be checked in PSPACE
  for product of TA (Theorem~\ref{inter-emptiness}). As the size of
  the input is doubly exponentian in the size of $A$, this results in
  a 2EXPSPACE algorithm.

  Nevertheless, there is no exponential blow up in the number of
  clocks of the product. Actually the size of $\rg(A(\Delta) \times
  G^{1,c} \times \cdots \times G^{n,c})$ is $|L| \cdot
  2^{2^{|A|+|\mu_1|}} \cdot \cdots \cdot 2^{2^{|A|+|\mu_n|}} \cdot (n
  \cdot |X|)! \cdot 2^{n \cdot |X|} \cdot K^{n \cdot |X|}$ with $K$
  the maximal constant in $A$, $\Delta$, and the resources $\mu_i$.
  It is doubly exponential in the size of $A$, $\Delta$ and the
  resources $\mu_i$. Reachability can be checked in linear time on
  this graph and thus in doubly exponential time in the size of $A$,
  $\Delta$ and the resources.
  Step~3 above is done at most a doubly exponential number of times

  and the result follows.
  \qed
\end{proof}

\section{Conclusion \& Future Work}
Table~\ref{tab-summary} gives an overview of the results described in
this paper (bold face) for the co\-dia\-gnosis problems in comparison with
the results for the diagnosis problems (second line, normal face).

Our ongoing work is to extend the results on \emph{diagnosis
using dynamic observers}~\cite{cassez-acsd-07,cassez-fi-08}
to the codiagnosis framework.


 \newcommand{\vtab}[1]{
  \begin{tabular}[c]{c}
    #1
  \end{tabular}
}
%

\begin{table}[thbp]
  \centering
  \begin{tabular}[t]{||l|c|c|c|c||} \hline\hline
    &  $\Delta$-Codiagnos.   & Codiagnosability & Optimal Delay & \vtab{Synthesis \\ (Bounded Resources)} \\ \hline\hline
    ~FA~ &  \vtab{\bf PSPACE-C. \\ PTIME~\cite{yoo-lafortune-tac-02,Jiang-01}} &  \vtab{\bf PSPACE-C. \\ PTIME~\cite{yoo-lafortune-tac-02,Jiang-01}} &  \vtab{\bf PSPACE \\ PTIME~\cite{yoo-lafortune-tac-02,Jiang-01}} & \vtab{\bf EXPTIME \\ EXPTIME~\cite{Raja95}} \\ \hline\hline
    ~TA~ &  \vtab{\bf PSPACE-C. \\ PSPACE-C.~\cite{tripakis-02}} &  \vtab{\bf PSPACE-C. \\ PSPACE-C.~\cite{tripakis-02}} &  \vtab{\bf PSPACE \\ PSPACE~\cite{cassez-cdc-09}} & \vtab{\bf 2EXPTIME-C. \\ 2EXPTIME-C.~\cite{Bouyer-05}} \\ \hline\hline
  \end{tabular}
  \caption{Summary of the Results}
  \label{tab-summary}
\end{table}

\bibliography{diagnosis}

\begin{thebibliography}{10}

\bibitem{AcetoL02}
Luca Aceto and Fran\c{c}ois Laroussinie.
\newblock Is your model checker on time? on the complexity of model checking
  for timed modal logics.
\newblock {\em J. Log. Algebr. Program.}, 52-53:7--51, 2002.

\bibitem{AlurDill94}
Rajeev Alur and David Dill.
\newblock A theory of timed automata.
\newblock {\em Theoretical Computer Science}, 126:183--235, 1994.

\bibitem{basilio-acc-09}
{Jo\~ao Carlos} Basilio and St\'ephane Lafortune.
\newblock Robust codiagnosability of discrete event systems.
\newblock In IEEE~Computer Society, editor, {\em Proceedings of the American
  Control Conference (ACC'09)}, pages 2202--2209, 2009.

\bibitem{uppaal-tutorial04}
Gerd Behrmann, Alexandre David, and Kim~G. Larsen.
\newblock A tutorial on {\sc uppaal}.
\newblock In Marco Bernardo and Flavio Corradini, editors, {\em Formal Methods
  for the Design of Real-Time Systems: 4th International School on Formal
  Methods for the Design of Computer, Communication, and Software Systems,
  SFM-RT 2004}, volume 3185 of {\em LNCS}, pages 200--236. Springer Verlag,
  September 2004.

\bibitem{Bouyer-05}
Patricia Bouyer, Fabrice Chevalier, and Deepak D'Souza.
\newblock Fault diagnosis using timed automata.
\newblock In Vladimiro Sassone, editor, {\em {P}roceedings of the 8th
  {I}nternational {C}onference on {F}oundations of {S}oftware {S}cience and
  {C}omputation {S}tructures ({FoSSaCS}'05)}, volume 3441 of {\em LNCS}, pages
  219--233, Edinburgh, U.K., April 2005. Springer Verlag.

\bibitem{BDMP-cav-2003}
Patricia Bouyer, Deepak D'Souza, P.~Madhusudan, and Antoine Petit.
\newblock Timed control with partial observability.
\newblock In Warren~A. Hunt, Jr and Fabio Somenzi, editors, {\em {P}roceedings
  of the 15th {I}nternational {C}onference on {C}omputer {A}ided {V}erification
  ({CAV}'03)}, volume 2725 of {\em LNCS}, pages 180--192, Boulder, Colorado,
  USA, July 2003. Springer.

\bibitem{cassez-cdc-09}
Franck Cassez.
\newblock {A Note on Fault Diagnosis Algorithms}.
\newblock In {\em 48th IEEE Conference on Decision and Control and 28th Chinese
  Control Conference}, Shanghai, P.R. China, December 2009. IEEE Computer
  Society.

\bibitem{cassez-fi-08}
Franck Cassez and Stavros Tripakis.
\newblock Fault diagnosis with static or dynamic diagnosers.
\newblock {\em Fundamenta Informaticae}, 88(4):497--540, November 2008.

\bibitem{cassez-acsd-07}
Franck Cassez, Stavros Tripakis, and Karine Altisen.
\newblock Sensor minimization problems with static or dynamic observers for
  fault diagnosis.
\newblock In {\em 7th Int. Conf. on Application of Concurrency to System Design
  (ACSD'07)}, pages 90--99. IEEE Computer Society, 2007.

\bibitem{Debouk-deds-00}
Rami Debouk, St{\'e}phane Lafortune, and Demosthenis Teneketzis.
\newblock Coordinated decentralized protocols for failure diagnosis of discrete
  event systems.
\newblock {\em Discrete Event Dynamic Systems}, 10(1-2):33--86, 2000.

\bibitem{Finkel05}
Olivier Finkel.
\newblock On decision problems for timed automata.
\newblock {\em Bulletin of the European Association for Theoretical Computer
  Science}, 87:185--190, 2005.

\bibitem{holzmann05}
Gerard~J. Holzmann.
\newblock Software model checking with spin.
\newblock {\em Advances in Computers}, 65:78--109, 2005.

\bibitem{Jiang-01}
Shengbing Jiang, Zhongdong Huang, Vigyan Chandra, and Ratnesh Kumar.
\newblock A polynomial algorithm for testing diagnosability of discrete event
  systems.
\newblock {\em IEEE Transactions on Automatic Control}, 46(8), August 2001.

\bibitem{Kozen77}
Dexter Kozen.
\newblock Lower bounds for natural proof systems.
\newblock In {\em FOCS}, pages 254--266. IEEE, 1977.

\bibitem{qiu-ieee-man-06}
Wenbin Qiu and Ratnesh Kumar.
\newblock Decentralized failure diagnosis of discrete event systems.
\newblock {\em IEEE Transactions on Systems, Man and Cybernetics, Part A:
  Systems and Humans}, 36(2):384--395, 2006.

\bibitem{RW87}
P.J.G. Ramadge and W.M. Wonham.
\newblock Supervisory control of a class of discrete event processes.
\newblock {\em SIAM Journal of Control and Optimization}, 25(1):1202--1218,
  1987.

\bibitem{RW89}
P.J.G. Ramadge and W.M. Wonham.
\newblock The control of discrete event systems.
\newblock {\em Proc. of the IEEE}, 77(1):81--98, 1989.

\bibitem{Raja95}
Meera Sampath, Raja Sengupta, Stephane Lafortune, Kasim Sinnamohideen, and
  {Demosthenis C.} Teneketzis.
\newblock Diagnosability of discrete event systems.
\newblock {\em IEEE Transactions on Automatic Control}, 40(9), September 1995.

\bibitem{tripakis-02}
Stavros Tripakis.
\newblock Fault diagnosis for timed automata.
\newblock In Werner Damm and Ernst-R{\"u}diger Olderog, editors, {\em
  {P}roceedings of the International Conference on Formal Techniques in Real
  Time and Fault Tolerant Systems (FTRTFT'02)}, volume 2469 of {\em LNCS},
  pages 205--224. Springer Verlag, 2002.

\bibitem{Wang-deds-07}
Yin Wang, Tae-Sic Yoo, and St{\'e}phane Lafortune.
\newblock Diagnosis of discrete event systems using decentralized
  architectures.
\newblock {\em Discrete Event Dynamic Systems}, 17(2):233--263, 2007.

\bibitem{yoo-lafortune-tac-02}
Tae-Sic Yoo and Stéphane Lafortune.
\newblock Polynomial-time verification of diagnosability of partially-observed
  discrete-event systems.
\newblock {\em IEEE Transactions on Automatic Control}, 47(9):1491--1495,
  September 2002.

\end{thebibliography}

\end{document}